\documentclass[11pt,letterpaper]{article}
\usepackage[utf8]{inputenc}
\usepackage{subcaption}
\usepackage{amsmath}
\usepackage{amsfonts}
\usepackage{amssymb}
\usepackage{mathtools}
\usepackage{graphicx}
\usepackage{array}
\usepackage{xfrac}
\usepackage{xcolor}
\usepackage[margin=1in]{geometry}
\usepackage[ruled]{algorithm}
\usepackage{algpseudocode}
\usepackage{amsthm}
\usepackage{caption}
\usepackage{tikz}
\usepackage[T1]{fontenc}
\usepackage{microtype}
\usepackage{kpfonts}
\usepackage{booktabs}
\usepackage{soul}
\usepackage{authblk}

\DeclareSymbolFont{yhlargesymbols}{OMX}{yhex}{m}{n}
\DeclareMathAccent{\yhwidehat}{\mathord}{yhlargesymbols}{"62}
\renewcommand{\widehat}[1]{\yhwidehat{#1}}

\makeatletter
\newcommand\fs@booktabsruled{%
  \def\@fs@cfont{\bfseries\strut}\let\@fs@capt\floatc@ruled
  \def\@fs@pre{\hrule height\heavyrulewidth depth0pt \kern\belowrulesep}%
  \def\@fs@mid{\kern\aboverulesep\hrule height\lightrulewidth\kern\belowrulesep}%
  \def\@fs@post{\kern\aboverulesep\hrule height\heavyrulewidth\relax}%
  \let\@fs@iftopcapt\iftrue
}
\makeatother

\floatstyle{booktabsruled}
\restylefloat{algorithm}

\usetikzlibrary{quantikz2}
\tikzset{
  operator/.append style={
    minimum width=8mm,
    minimum height=8mm,
    inner xsep=1pt,
    inner ysep=1pt
  }
}

\makeatletter
\newtheoremstyle{thmwithbreak}
  {3pt}{3pt}
  {\itshape}
  {}
  {\bfseries}
  {.}
  {0.5em}
  {\thmname{#1}\thmnumber{ #2}\thmnote{ (\textbf{#3})}\par}
\makeatother

\theoremstyle{thmwithbreak}

\newtheorem{theorem}{Theorem}
\newtheorem{definition}[theorem]{Definition}
\newtheorem{lemma}[theorem]{Lemma}

\usepackage{hyperref}
\hypersetup{colorlinks=true,citecolor=blue,linkcolor=blue,filecolor=blue,urlcolor=blue,breaklinks=true}

\allowdisplaybreaks

\begin{document}

\title{Quantum Phaselift}

\author[1,2]{Dhrumil Patel\thanks{djp265@vt.edu}}
\author[1]{Laura Clinton}
\author[1,2]{Steven T. Flammia}
\author[1,3]{Raúl García-Patrón}

\affil[1]{Phasecraft}
\affil[2]{Virginia Tech}
\affil[3]{University of Edinburgh}

\maketitle

\begin{abstract}
Estimating quantum time-series such as the Loschmidt amplitude $f(t)=\langle\psi|\mathrm{e}^{-\mathrm{i}Ht}|\psi\rangle$ is central to spectroscopy, Hamiltonian analysis, and many phase-estimation algorithms.
Direct estimation via the Hadamard test requires controlled implementations of $\mathrm{e}^{-\mathrm{i}Ht}$, and the depth of these controlled circuits grows with $t$, making long-time estimation challenging on near-term hardware.
We introduce Quantum Phaselift, a lifting-based framework that estimates the rank-one matrix $Z = f f^\dagger$ rather than estimating $f$ directly.
We propose simple quantum circuits for estimating the entries of $Z$ and show that measuring only a narrow band of this matrix around the diagonal is sufficient to uniquely recover $f$.
Crucially, this reformulation decouples the controlled circuit depth from the maximum evolution time to scale instead with the width of the measured band.
We prove that a $O(1)$ bandwidth suffices for generic signals, leading to substantial savings in controlled operations compared to direct estimation methods.
We develop three recovery algorithms with provable exact recovery in the noiseless setting and stability under measurement noise.
Finally, we numerically demonstrate that high-quality recovery is possible for the 2D Fermi-Hubbard and 2D transverse-field Ising model signals of size exceeding 100 time points using only a few million measurement shots and reasonable post-processing time, making our time-series estimation techniques efficient and effective for near-term implementations.
\end{abstract}

\tableofcontents

\section{Introduction}

Materials science and chemistry are among the most important fields where quantum computation has a strong potential to provide practical quantum advantage~\cite{Daley2022}.
Quantum hardware has reached a level of quality that non-trivial time-dynamic simulations can be implemented on accessible quantum hardware~\cite{Alam2025, Alam2025a}.
In parallel, classical approaches to evaluating spectral properties of large many-body Hamiltonians face fundamental bottlenecks, as the computational cost of simulating long-time dynamics grows rapidly with system size~\cite{LeBlanc2015,Innerberger2020}. 
We can therefore hope to see in the next few years quantum simulations of direct spectroscopy experiments that could challenge their classical alternatives. 

Many spectral quantities of physical interest, including spectral densities and response functions, can be directly obtained by applying signal processing techniques to time-series (signal) derived from time evolution. 
A canonical example is the Loschmidt amplitude (or, fidelity amplitude),
\begin{equation}\label{eq:L-amplitude}
    f(t)=\langle\psi|\mathrm{e}^{-\mathrm{i}Ht}|\psi\rangle,
\end{equation}
where $H$ is a Hamiltonian, $|\psi\rangle$ is a known input state, and $t$ denotes time. 
This signal encodes the spectral information of $H$ induced by $|\psi\rangle$ and therefore serves as a building block for various quantum phase estimation algorithms~\cite{Somma2019, OBrien2019, Lin2022, Ding2023, Wang2023, Yi2024, Yang2024, Clinton2024}.
Beyond phase estimation, this signal arises in a broad range of physical contexts, including dynamical quantum phase transitions~\cite{Heyl2013, Zvyagin2016, Budich2016}, quantum chaos and stability analysis~\cite{Peres1984, Prosen2003}, particle production~\cite{Schwinger1951}, lattice gauge theories~\cite{Martinez2016}, and fluctuation theorems~\cite{Talkner2007, Silva2008}.
More general dynamical observables, such as out-of-time-order correlators and multi-time correlation functions, can be viewed as structured generalizations of~\eqref{eq:L-amplitude}, involving multiple time evolutions, intermediate operators, and distinct input and output states~\cite{mottaSubspaceMethodsElectronic2024, yoshiokaKrylovDiagonalizationLarge2025, yuQuantumCentricAlgorithmSampleBased2025, Abanin2025,Flew2025}. 
Across these settings, the central computational task is the reliable estimation of time-dependent complex-valued signals generated by quantum dynamics.
In this paper, we focus on the signal of the form in~\eqref{eq:L-amplitude}.

A standard approach to estimating the above time-series is via the Hadamard test~\cite{Somma2002,Somma2019}. 
While conceptually straightforward, this method requires implementing a controlled version of the full time-evolution operator $\mathrm{e}^{-\mathrm{i} H t}$ using an ancillary qubit. 
This presents a severe hardware limitation: as the evolution time $t$ increases, the depth of this controlled operation grows accordingly. 
This imposes coherence and control requirements that are difficult to satisfy on near-term quantum devices.
This is not something we can hope to remove by clever compilation in general, since the no-fast-forwarding theorem rules out sublinear-in-$t$ simulation for generic Hamiltonians~\cite{Berry2006}.
So if the goal is long-time signals (hence high spectral resolution), the naive pipeline becomes hardware-limited quickly.
Consequently, this challenge has motivated a substantial body of work aimed at reducing the cost of \emph{controlled time evolution} or eliminating it altogether.

\subsection{Related Works and the Trade-Offs}
\label{sec:related-work}

Reducing or eliminating the controlled time-dynamics from time-series estimation is an area of rapidly growing interest. 
While current hardware can arguably implement the time dynamics of a few Trotter steps, controlling this evolution with an extra ancillary qubit, as required by the Hadamard test, often renders the implementation impractical due to decoherence. 
Existing solutions typically address this by trading the control depth for other expensive resources. We categorize these approaches below.

\paragraph{Doubling the Qubits.} 
One folklore approach to removing the control qubit involves doubling the system size. 
In this scheme, the system is duplicated into two registers, and the uncontrolled time-dynamics are applied to only one register, sandwiched between two controlled-SWAP operations acting on both registers~\cite{Huggins2020}.
While this successfully replaces the time-dependent control depth with a fixed-depth operation (the controlled-SWAP), it necessitates a two-fold increase in qubit count.
Furthermore, it requires, as a condition, the knowledge and easy preparation of an eigenstate of the Hamiltonian of interest, which is used as the input to the second register. 
While this assumption is reasonable for certain physically motivated systems, for instance, Hamiltonians for which the vacuum is a known eigenstate, it breaks down in more general settings, limiting the applicability of the method to a restricted class of Hamiltonians.

\paragraph{Complexity of Reference State Preparation.}
A second class of approaches relies on preparing a superposition state of the form $|\psi\rangle+ \mathrm{e}^{\mathrm{i}\theta}|R\rangle$, where $|R\rangle$ is an eigenvector of the Hamiltonian that serves as a phase reference~\cite{Lu2021,Russo2021,Schuckert2023,Hemery2024}. 
Classical post-processing algorithms are then used to extract the time series. 
The primary drawback here is the complexity of state preparation: even if $|R\rangle$ is trivial (e.g., the all-zeros state), creating the superposition typically requires a controlled implementation of the state-preparation unitary $U_{\psi}$ or similar operations.
Thus, the burden of control is often shifted from the time-evolution operator to the state-preparation operator, rather than being eliminated.

\paragraph{Purely Ancillary-Free Methods.}
There has also been a recent proposal without the need for any sort of controlled operations, making ancillary-free quantum phase-estimation closer to the range of applicability of traditional quantum phase-estimation.  
In~\cite{Chan2024}, Chan et al.\ leveraged the power of the classical shadows technique to estimate energy gaps via the computation of expectation values of certain operators as a function of time. 
A downside of the technique and another similar technique presented in~\cite{Lee2025} is that it cannot directly provide the spectrum but only energy gaps, and that it cannot retrieve time-series. 
Alternatively, Yang et al. showed that one can remove the ancillary qubit by combining real-time and imaginary-time evolution \cite{Yang2024}, exploiting ideas from complex function analysis. 
Its main limitation is the need for an additional implementation of imaginary-time evolution on a quantum device itself, which currently works only for input states with finite correlation length and short evolution times, but improvement in this direction could render the technique more widely applicable.

\paragraph{Phase Retrieval Approaches.}
Last year, some of the authors of this paper introduced two new approaches that adapt classical phase-retrieval techniques to remove controlled time evolution in the estimation of quantum time-series \cite{Clinton2024}. 
Their first technique, adapted from vectorial phase retrieval~\cite{Raz2013}, eliminates the controlled time-dynamics at the expense of having to prepare many auxiliary GHZ-like input states. 
While this method shares some of the limitations of the aforementioned reference-state–based techniques, it has an important advantage that the required input states are problem-independent and can be prepared in logarithmic circuit depth when mid-circuit measurements are available.
However, the recovery guarantees rely on additional structural assumptions on the underlying spectrum, such as well-defined spectral support and spectral independence between auxiliary signals, and improved performance is achieved primarily by increasing the number of auxiliary input states and measurements, leading to significant state-preparation and sampling overhead. 

\paragraph{Sequential Hadamard Test.}
More recently, Schiffer et al.~\cite{Schiffer2025} proposed an approach known as the Sequential Hadamard Test, which reduces the required controlled time evolution to a single gate at each step.
Although the quantum circuit they exploit has some similarities to the one presented in our paper, the authors investigate and estimate a different signal: $g(\ell) = \langle \psi | U(\ell) | \psi \rangle = \langle \psi | U_\ell \cdots U_1 | \psi \rangle$, where $U(\ell)$ is a unitary and can be written as a decomposition of $\ell$ simpler unitaries $U_1, \ldots, U_\ell$. 
Their reconstruction algorithm is based on a telescoping product procedure, which shares conceptual similarities with our algebraic estimator when specialized to the case $K = 1$. 
A limitation of this approach is that the total number of shots scales with the number of gates in the circuit, in addition to other problem parameters, which can make practical implementations challenging. 
We believe that the reconstruction techniques developed in this work can be used to improve the robustness and efficiency of the classical post-processing in the Sequential Hadamard Test. 
Conversely, ideas from that framework may also be leveraged to further reduce the cost of controlled time evolution in our own proposal.

\bigskip
\noindent
In a nutshell, there is always a trade-off, and determining which technique works best requires a careful, case-by-case analysis. 
A crucial aspect that is often overlooked in this landscape is the scaling, numerical stability, and recovery guarantees of the classical post-processing required to reconstruct the desired signal from the measurement data.

\subsection{Our Contribution: Quantum Phaselift}

In classical signal processing, the \emph{Phaselift} algorithm is well celebrated for its rigorous recovery guarantees and numerical robustness against noise in phase retrieval problems~\cite{Candes2013}. 
We briefly review this algorithm in Section~\ref{sec:phaselift-review}. 
While the Phaselift framework has been adapted to quantum state tomography~\cite{Lu2015}, its potential for quantum phase estimation has remained unexplored. 
In this work, we bridge this gap. 
We demonstrate that the mathematical principles of Phaselift lead to natural hybrid quantum-classical algorithms for time-series estimation that offer both low control depth and rigorous stability guarantees.

\paragraph{The Lifting Principle.}
Similar to the classical Phaselift algorithm, our approach as well essentially "lifts" the problem from the recovery of the discretized complex time-series vector $f \in \mathbb{C}^T$, with entries $f_i \coloneqq f(i \Delta)$ for $i = 0, 1, \ldots, T-1$ and $\Delta, T > 0 $, to the recovery of the associated rank-one outer-product matrix
\begin{equation}
    Z \coloneqq ff^\dagger.
\end{equation}
This new framework, as we will see later, reduces the required control time-evolution depth to a small constant, or even a single time step, while inheriting the noise resilience characteristic of Phaselift. 
In this paper, we show that the circuit only requires the preparation of the state $\mathrm{e}^{-\mathrm{i}Ht}|\psi\rangle$, a controlled time-dynamics over a single or few time-steps, and a final projection onto the state $|\psi\rangle$ in the system register. 
An additional advantage of the circuit is that it inherits the error-mitigation resilience of 
\textit{verified quantum phase-estimation}~\cite{OBrien2021}. 

\subsubsection{Summary of Results}
\label{sec:summary}

In this paper, we study the recovery of a complex-valued time-correlation signal $f(t) = \langle \psi | \mathrm{e}^{-\mathrm{i} Ht} | \psi\rangle$ associated with a finite-dimensional Hamiltonian $H$ and a known input state $|\psi\rangle$. 
In practice, we only access $f$ on a uniform grid $t_i = i\Delta$ for $i=0, 1, \ldots,T-1$ and $\Delta, T > 0$, giving the discrete-time signal
\begin{equation}
    f = (f_0, f_1, \ldots,f_{T-1})^{\mathsf T}, \qquad f_i = f(t_i),
\end{equation}
with $f_0 = 1$ since $\langle \psi | \psi \rangle = 1$. 
The problem is then to reconstruct $f$ given access to $\mathrm{e}^{-\mathrm{i}H\Delta}$, a unitary $U_{\psi}$ preparing $|\psi\rangle$, and its inverse $U_{\psi}^{\dagger}$. 
We refer to this problem as the \emph{signal recovery problem}, which we will formalize in Section~\ref{sec:signal-recovery-problem}. 
For clarity, we refer the reader to Table~\ref{tab:notation} in Section~\ref{sec:notation} for a comprehensive summary of the notation used throughout this paper.

\paragraph{Lifted Representation and Measurements.} 
While one could, in principle, estimate each entry $f_i$ directly using a standard Hadamard test, this approach necessitates the implementation of controlled-$\mathrm{e}^{-\mathrm{i}H t_i}$ operations, as illustrated below:
\begin{center}
    \begin{quantikz}
    \lstick{\ket{\boldsymbol{0}}}  & \gate{\text{Had}}  & \gate{R_{z}(\theta)} &\ctrl{1}   & \gate{\text{Had}} &\meter{} \\
     \lstick{\ket{\boldsymbol{0}}} &\gate{U_\psi} & \qw & \gate{\mathrm{e}^{-\mathrm{i}Ht_i}} & \qw
    \end{quantikz}
\end{center}
A significant drawback of this method is that as $t_i$ increases from $0$ to $(T-1)\Delta$, these controlled time evolutions become increasingly expensive. 
This overhead is unavoidable in general, as the no-fast-forwarding theorem rules out sublinear-in-$t$ simulation of $\mathrm{e}^{-\mathrm{i}H t}$ for generic Hamiltonians~\cite{Berry2006}.
This results in deeper circuits and imposes stringent coherence-time requirements that are often difficult to satisfy on near-term hardware. 
Instead, our approach is to work with quadratic correlations of the form $f_i\overline{f_j}$, which we encode in the lifted matrix $ Z = f f^\dagger$ in the form of its entries $Z_{ij} = f_i \overline{f_j}$ for convenience.
In Section~\ref{sec:lifted-representation}, we design simple quantum circuits for estimating diagonal entries $Z_{ii}=|f_i|^2$ using only $U_\psi$, $U_\psi^\dagger$, and uncontrolled time evolutions $\mathrm{e}^{-\mathrm{i}Ht_i}$, and off-diagonal entries $Z_{ij}$ using controlled time evolutions over only \textit{time differences} $|t_j-t_i|$, that is, controlled-$\mathrm{e}^{-\mathrm{i}H(t_j-t_i)}$, in addition to the aforementioned unitaries. 

\paragraph{$K$-banded Observation.} Because the circuit depth of controlled-$\mathrm{e}^{-\mathrm{i}H(t_j-t_i)}$ unitary grows with $|t_j-t_i|$, it is simple to see that the entries far from the diagonal of $Z$ are more expensive to estimate than nearby ones.
When the underlying signal has additional structure, estimating the full matrix $Z$ is not only impractical, as mentioned before, but also unnecessary.
We therefore restrict attention to estimating the entries of the $K$-band (diagonal + $K$ first off-diagonals) of $Z$ and study which signals are recoverable from such partial measurement data.  
For this, we introduce a signal class $\mathcal{S}_K$ (Definition~\ref{def:signal-class}) which consists of signals with no run of $K$ consecutive zeros.
For signals in this class, we demonstrate the following:
\begin{lemma}[Identifiability; informal Lemma~\ref{lem:identifiability} statement] The $K$-band of $Z$ provides information-theoretically sufficient data to reconstruct a signal $f$ if and only if  $f \in \mathcal{S}_K$.
\end{lemma}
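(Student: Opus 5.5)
We prove both directions by working directly with the band entries $Z_{ij}=f_i\overline{f_j}$, $|i-j|\le K$. Since $f_0=1$ is fixed, the global phase ambiguity is already removed. So "reconstruct $f$" means $f$ is uniquely determined among signals with $f_0=1$. Throughout we take $\mathcal{S}_K$ to mean that no $K$ consecutive entries of $f$ vanish.

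\emph{Sufficiency.} Suppose $f\in\mathcal{S}_K$ and $g$ is another signal with $g_0=1$ and $g_i\overline{g_j}=f_i\overline{f_j}$ for all $|i-j|\le K$. We show $g=f$ by strong induction on the index $i$.

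1. The diagonal gives $|g_i|=|f_i|$ for all $i$. Hence $g$ and $f$ have the same zero set.
2. The base case is $g_0=f_0=1$.
3. For the inductive step, assume $g_j=f_j$ for all $j<i$. If $f_i=0$, then $g_i=0$ by step 1.
4. Otherwise, look at the window $\{i-K,\dots,i-1\}\cap\{0,\dots\}$. It contains a nonzero $f_j$: if $i\ge K$, because $f\in\mathcal{S}_K$ rules out $K$ consecutive zeros; if $i<K$, because the window contains index $0$ and $f_0=1$.
5. For that $j$ we have $|i-j|\le K$, so $g_i\overline{g_j}=f_i\overline{f_j}$. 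Dividing by $\overline{g_j}=\overline{f_j}\ne0$ gives $g_i=f_i$.

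This argument is constructive: it is exactly the telescoping (algebraic) recovery rule $f_i=Z_{ij}/\overline{f_j}$.

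\emph{Necessity.} Suppose $f\notin\mathcal{S}_K$, so there is a run $f_a=\dots=f_{a+K-1}=0$ of $K$ zeros. We want a nonzero entry after the run, so we may assume $a\ge1$ (since $f_0=1$) and that some $f_b\ne0$ with $b\ge a+K$. If no such $b$ exists, all remaining entries are $0$ and nothing is lost; this edge case has to be handled according to the exact convention in Definition~\ref{def:signal-class}, for instance by restricting to runs that are followed by a nonzero entry.

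Define $g_i=f_i$ for $i<a$ and $g_i=\mathrm{e}^{\mathrm{i}\phi}f_i$ for $i\ge a+K$, with $\phi\notin 2\pi\mathbb{Z}$. Then $g_0=1$ and $g\ne f$ because $f_b\ne0$. Check every band entry with $|i-j|\le K$:
- If both indices lie on the same side of the run, the phase cancels in $g_i\overline{g_j}$.
- Two indices on opposite sides differ by at least $K+1$, so no band entry links them.
- Any entry involving an index inside the run is $0$ for both $f$ and $g$.

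Hence $g$ has the same $K$-band as $f$ but is a different signal, so the band cannot identify $f$.

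The main obstacle is boundary bookkeeping rather than any hard estimate. One must align the indexing convention (whether "$K$-band" means $|i-j|\le K$) with the run length in Definition~\ref{def:signal-class}, so that the window in sufficiency and the gap in necessity both come out to exactly $K$. One must also treat the trailing-zeros edge case consistently with that definition. I would fix these conventions first and then run the two arguments above.
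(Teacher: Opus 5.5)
Your proof is correct and follows the paper's approach. Sufficiency is the same index-by-index phase propagation $f_i = Z_{ji}/\overline{f_j}$, through a nonzero neighbour in the window, that the paper relies on via the exact-recovery proof of Theorem~\ref{thm:exact-rec-algebraic-est}. Necessity is the same construction that multiplies every entry after the zero run by $\mathrm{e}^{\mathrm{i}\varphi}$.

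The trailing-run edge case you flag is a genuine counterexample, not a matter of convention, and the paper's proof misses it when it asserts $g\neq f$. If no nonzero entry follows the run, then $g=f$. Moreover, such an $f$ (e.g.\ $f=(1,0,\ldots,0)$ with $T-1\ge K$) is uniquely determined by the band, because the diagonal forces the tail to vanish. So the ``only if'' direction is false under the literal definition of $\mathcal{S}_K$. You should state the corrected class explicitly: no run of $K$ zeros that is followed by a nonzero entry. For that class your induction goes through unchanged, provided step 4 invokes $f_i\neq 0$ rather than $f\in\mathcal{S}_K$ when $i\ge K$.
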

This turns $K$ into a practical design choice: use the easy-to-estimate diagonal entries $Z_{ii} = |f_i|^2$ to spot zero runs in the underlying signal, then estimate only as many off-diagonals, which will determine the width of the band, as those gaps demand. 
We therefore distinguish between these two notions throughout the paper: the intrinsic width $W$ of a signal, defined as the length of its largest run of (near-)zero entries, and the measurement bandwidth $K$, which is a user-chosen parameter determining how many off-diagonals of the lifted matrix are measured.
According to the above lemma, we need to take $K$ so that $K \geq W+1$ to recover the underlying signal uniquely.
Notably, such a structure is not exploited when estimating each $f_i$ directly via the Hadamard test. 
Once the bandwidth $K$ is chosen, we collect the noisy estimates of the $K$-band entries of $Z$ and record them in a $K$-banded observation matrix $\widehat{Z}^{(K)}_{\vphantom{a}}$ (Definition~\ref{def:K-banded-observation-matrix}), setting all unmeasured entries outside the $K$-band to zero as a placeholder for post-processing.

\paragraph{Phaselift and Estimators.} Given $\widehat{Z}^{(K)}_{\vphantom{a}}$, the signal recovery problem essentially reduces to a \emph{structured phaselift problem}, where the task is to compute a signal estimate $\widehat{f}$ that approximates the true signal $f$ as closely as possible. 
We present three estimators to solve this problem, each built around a specific algebraic or spectral relationship between $f$ and its lifted matrix $Z$.

\bigskip
\noindent
\emph{1. Block-by-block algebraic estimator.} 
This estimator rests on the basic identity $Z_{ji} = f_j \overline{f_i}$, which links the phase of $f_i$ to that of any nearby nonzero $f_j$, while the diagonal entries are related to the magnitudes through the identity $Z_{ii} = |f_i|^2$. 
In the noiseless case, that is, when there is no noise in estimating the $K$-band entries of $Z$, this gives a simple sequential reconstruction rule: start with $f_0 = 1$, and for each $i\ge 1$, read off $|f_i|$ from the diagonal as $|f_i| = \sqrt{Z_{ii}}$, and read off $\operatorname{arg}(f_i)$ using the identity
\begin{equation}
    \arg(f_i)=\arg(f_j)-\arg(Z_{ji}),
\end{equation}
for some previously reconstructed neighbor $j < i$ with $|i-j|\le K$ and $f_j\neq 0$. 
By definition, signals in the class $\mathcal{S}_K$ always have such a neighbor, ensuring that the phase information propagates uniquely across the entire signal. 
In practice, the estimator applies the same idea to the noisy, $K$-banded observation matrix $\widehat{Z}^{(K)}_{\vphantom{a}}$ and outputs a noisy signal estimate $\widehat{f}$. 
In Section~\ref{sec:block-by-block-algebraic-estimator}, we present the detailed procedure together with a pseudocode (Algorithm~\ref{algo:algebraic}). 
Furthermore, we show the following: 
\begin{theorem}[Informal Theorem~\ref{thm:exact-rec-algebraic-est} statement] 
The block-by-block algebraic estimator (Algorithm~\ref{algo:algebraic}) recovers the true signal exactly in the absence of measurement noise.
\end{theorem}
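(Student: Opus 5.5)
The plan is a proof by strong induction on the index $i$. The hypothesis is that, when $\widehat{Z}^{(K)}_{\vphantom{a}}$ coincides with the $K$-band of $Z=ff^\dagger$, the estimator satisfies $\widehat{f}_j = f_j$ for all $j<i$.

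\textbf{Base case.} Algorithm~\ref{algo:algebraic} sets $\widehat f_0 = 1$. This equals $f_0 = \langle\psi|\psi\rangle = 1$. The global phase ambiguity inherent to any lifting $Z = ff^\dagger$ is therefore fixed by the normalization $f_0=1$, so exact recovery (not merely recovery up to a phase) makes sense.

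\textbf{Magnitude step.} At step $i$ the algorithm reads the magnitude $|\widehat f_i| = \sqrt{Z_{ii}}$. In the noiseless case $Z_{ii}=|f_i|^2\ge 0$, so this equals $|f_i|$. If $Z_{ii}=0$ then $\widehat f_i = 0 = f_i$ and the phase is irrelevant, which closes this case.

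\textbf{Phase step.} If $Z_{ii}>0$, the algorithm picks a previously reconstructed index $j$ with $i-K\le j<i$ and $\widehat f_j\neq 0$. I must show such a $j$ exists, and this is where the assumption $f\in\mathcal{S}_K$ enters. Suppose no such $j$ existed. By the induction hypothesis, $f_j = \widehat f_j = 0$ for all $j\in\{\max(0,i-K),\dots,i-1\}$. Since $f_0=1\neq0$, this forces $i-K\ge 1$, so these are $K$ consecutive zeros of $f$. That contradicts Definition~\ref{def:signal-class}.

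\textbf{Phase identity.} The entry $Z_{ji}$ lies in the $K$-band, because $|i-j|\le K$, so it is measured. We have $Z_{ji} = f_j\overline{f_i}$ with both factors nonzero, hence $Z_{ji} \neq 0$ and $\arg Z_{ji} = \arg f_j - \arg f_i \pmod{2\pi}$. Using $\arg\widehat f_j = \arg f_j$ from the induction hypothesis, the rule gives $\arg \widehat f_i = \arg f_j - \arg Z_{ji} = \arg f_i$. Combined with the magnitude, $\widehat f_i = f_i$. The same conclusion follows if the algorithm instead uses the formula $\widehat f_i = \overline{Z_{ji}}\,\widehat f_j/|\widehat f_j|^2 \cdot(\ldots)$ or averages over several neighbors: each candidate equals $f_i$ exactly, so any convex or normalized combination does too.

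\textbf{Main obstacle.} The only real difficulty is matching the precise block-by-block formulation of Algorithm~\ref{algo:algebraic}. It may process indices in blocks of length $K$ and choose anchors within the preceding block rather than one index at a time. If so, I would phrase the induction over blocks: every index in block $b$ lies within distance $K$ of some nonzero entry of the already-recovered block $b-1$ or earlier indices of block $b$, which the no-$K$-zero-run property guarantees. The per-entry argument above then applies verbatim. I would also note explicitly that branch conditions such as testing $\widehat Z_{jj}>0$ are evaluated exactly in the noiseless case, so no thresholding issues arise.
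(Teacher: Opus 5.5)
Your proof is correct and follows the paper's argument essentially verbatim. Both use induction on $i$ with $\widehat f_0=f_0=1$, read the magnitude off $Z_{ii}$, and take the phase from a nonzero neighbor in the window, whose existence is guaranteed by $f\in\mathcal{S}_K$; your use of $f_0=1$ to handle the truncated window for $i\le K$ makes explicit a detail the paper leaves implicit. One small precision for the actual Algorithm~\ref{algo:algebraic}, which averages over all $j\in\{\max(0,i-K),\dots,i-1\}$: candidates with $f_j=0$ do not equal $\arg f_i$ (they are ill-defined), but they receive weight $w_j\propto|Z_{ji}|=0$, so only the agreeing candidates enter the weighted circular mean.
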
 
Moreover, for signals in $\mathcal{S}_1$, we show the following stability result for this estimator in the presence of measurement noise.
\begin{theorem}[Informal Theorem~\ref{thm:stability-base-case} statement]
    For signals with no zero entries ($f \in \mathcal{S}_1$), the block-by-block algebraic estimator is stable under measurement noise. Specifically, if the diagonal entries of $Z$ are estimated with error at most $\varepsilon > 0$, and the first off-diagonal entries with error at most $ 0 < \varepsilon' < \varepsilon$, then the reconstruction error is
$\|\widehat f - f\|_2 = O\!\left(\sqrt{T}\varepsilon/\sqrt{\gamma} + T^{3/2} \varepsilon'/\varepsilon\right)$,
where $\gamma = \min_i |f_i|^2$.
\end{theorem}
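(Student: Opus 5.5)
We will prove the bound by splitting the error into a magnitude part and a phase part and bounding each separately. For $f\in\mathcal{S}_1$ the estimator with $K=1$ uses only $j=i-1$. It sets $|\widehat f_i| = \sqrt{\widehat Z_{ii}}$ and $\arg\widehat f_i = \arg\widehat f_{i-1} - \arg\widehat Z_{i-1,i}$, with $\widehat f_0=1$. Unrolling the recursion gives
\[
\arg\widehat f_i = -\sum_{k=1}^{i}\arg \widehat Z_{k-1,k},\qquad
\arg f_i = -\sum_{k=1}^{i}\arg Z_{k-1,k}.
\]
Hence the phase error $\delta_i := \arg\widehat f_i-\arg f_i$ is a sum of $i$ per-step phase errors. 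We then write $\widehat f_i - f_i = (|\widehat f_i|-|f_i|)e^{\mathrm{i}\arg\widehat f_i} + |f_i|(e^{\mathrm{i}\arg\widehat f_i}-e^{\mathrm{i}\arg f_i})$. The triangle inequality gives $|\widehat f_i-f_i|\le \big||\widehat f_i|-|f_i|\big| + |f_i|\,|\delta_i|$, and the $\ell_2$ norm splits accordingly.

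For the magnitude term we use $|\sqrt a-\sqrt b| = |a-b|/(\sqrt a+\sqrt b)\le |a-b|/\sqrt b$. With $b=Z_{ii}=|f_i|^2\ge\gamma$ and $|a-b|\le\varepsilon$, each entry is at most $\varepsilon/\sqrt\gamma$. Summing over $T$ entries gives $\sqrt T\,\varepsilon/\sqrt\gamma$, which is the first term.

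For the phase term we need a per-step bound on $|\arg\widehat Z_{k-1,k}-\arg Z_{k-1,k}|$. We use the elementary fact that for complex $z\neq0$ and $|w-z|\le\eta$ one has $|\arg w-\arg z|\le \tfrac{\pi}{2}\,\eta/|z|$, valid when $\eta<|z|$. Here $|Z_{k-1,k}|=|f_{k-1}||f_k|\ge\gamma$, and the off-diagonal error is at most $\varepsilon'$. So each step contributes $O(\varepsilon'/\gamma)$, the accumulated error satisfies $|\delta_i|=O(i\,\varepsilon'/\gamma)$, and $\big(\sum_i |f_i|^2\delta_i^2\big)^{1/2}=O(T^{3/2}\varepsilon'/\gamma)$ since $|f_i|\le1$.

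The main obstacle is matching the stated form $T^{3/2}\varepsilon'/\varepsilon$ rather than $T^{3/2}\varepsilon'/\gamma$. I expect the theorem implicitly assumes the regime $\gamma\gtrsim\varepsilon$: the diagonal must be resolvable so that $\widehat Z_{ii}>0$ and the magnitudes are nondegenerate, and presumably the formal Theorem~\ref{thm:stability-base-case} imposes something like $\varepsilon\le\gamma$. Under that assumption $\varepsilon'/\gamma\le\varepsilon'/\varepsilon$, and the bound follows. The hypothesis $\varepsilon'<\varepsilon\le\gamma$ also gives $\varepsilon'<|Z_{k-1,k}|$, which is exactly the validity condition of the per-step argument bound. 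I would first check the formal statement for this hypothesis and add it explicitly if it is missing. Finally, the phases are compared modulo $2\pi$, but since $|e^{\mathrm{i}x}-e^{\mathrm{i}y}|\le|x-y|$ for any representatives, no branch issue arises once we sum the lifted per-step errors. Combining the two pieces gives the claimed $O(\sqrt T\varepsilon/\sqrt\gamma + T^{3/2}\varepsilon'/\varepsilon)$.
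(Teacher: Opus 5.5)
Your argument is correct and is essentially the paper's proof: the same magnitude/phase split, the same linear-in-$i$ accumulation of the per-step phase errors of $\widehat Z_{i,i+1}$ (hence the $\sum_i i^2\sim T^3$ factor), and the same hidden hypothesis $\gamma>\varepsilon$. The paper also needs that hypothesis and imposes it inside the proof in Appendix~\ref{app:stability-base-case}, not in the theorem statement. The only difference is the per-step phase bound: the paper applies Lemma~\ref{lem:phase-diff-z_1-z_2} with $\varepsilon$ as the lower bound on $|Z_{i,i+1}|$ (valid because $|Z_{i,i+1}|\ge\gamma>\varepsilon$), which is exactly where $\varepsilon'/\varepsilon$ comes from, whereas you normalize directly by $|Z_{i,i+1}|\ge\gamma$ and get the sharper $O(T^{3/2}\varepsilon'/\gamma)$, which implies the stated form.
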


Using the above stability result, we derive explicit query-complexity bounds for the unitaries $U_\psi$, $\mathrm{e}^{-\mathrm{i}H\Delta}$, and controlled-$\mathrm{e}^{-\mathrm{i}H\Delta}$ that guarantee $\|\widehat f - f\|_2 \le \eta$ with high probability for $\eta > 0$.

\bigskip
\noindent
\emph{2. Block-by-block eigenvector estimator.} 
This estimator exploits the local low-rank structure of the lifted matrix $Z$. 
Specifically, for every $b \in \{0, \ldots, T-K-1\}$, the $(K+1)\times(K+1)$ diagonal block of $Z$, 
\begin{equation}
    Z_{b:b+K,\,b:b+K} = f_{b:b+K}^{\vphantom{\dagger}} f_{b:b+K}^\dagger,
\end{equation}
is exactly rank one, so the local signal segment $f_{b:b+K}$ is simply the (unnormalized) principal eigenvector of that block, up to a global phase. 
The estimator transfers this idea to the noisy setting: each $(K+1)\times(K+1)$ diagonal block of the $K$-banded observation matrix $\widehat{Z}^{(K)}_{\vphantom{a}}$ is treated as a perturbed rank-one matrix, and its principal eigenvector is used to recover an approximate local signal segment. 
The estimator then aligns the arbitrary global phases of these local segments and stitches them together into a single global estimate $\widehat f$. 
In Section~\ref{sec:block-by-block-eigenvector-estimator}, we present the detailed procedure together with a pseudocode (Algorithm~\ref{algo:eigen-estimator-averaging}). 
As with the algebraic estimator, here as well, we show the following:
\begin{theorem}[Informal Theorem~\ref{thm:exact-rec-eigen-est} statement] 
The block-by-block eigenvector estimator (Algorithm~\ref{algo:eigen-estimator-averaging}) recovers the true signal exactly in the absence of measurement noise.
\end{theorem}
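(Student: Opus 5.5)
We prove exact recovery by induction over the blocks, following the estimator step by step with the noiseless input $\widehat{Z}^{(K)}_{\vphantom{a}} = P_K(Z)$, where $P_K$ keeps the $K$-band and zeros out everything else. We assume $f\in\mathcal{S}_K$, as in Lemma~\ref{lem:identifiability}.

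\textbf{Step 1: each block returns the true segment up to a phase.} Every $(K+1)\times(K+1)$ diagonal block $Z_{b:b+K,\,b:b+K}$ has all index differences at most $K$. So it lies entirely inside the band and is observed exactly. It equals $f_{b:b+K}f_{b:b+K}^\dagger$, which is Hermitian and positive semidefinite of rank at most one. Its only nonzero eigenvalue is $\|f_{b:b+K}\|_2^2$. Because $f\in\mathcal{S}_K$ has no run of $K$ consecutive zeros, a window of $K+1$ entries contains a nonzero entry. Hence the eigenvalue is positive and the principal eigenspace is one-dimensional. The eigenvector the algorithm outputs, rescaled by the square root of the top eigenvalue, is therefore $v_b = \mathrm{e}^{\mathrm{i}\phi_b} f_{b:b+K}$ for some unknown phase $\phi_b$. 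The magnitudes $|(v_b)_i| = |f_i|$ are exact, matching $\sqrt{Z_{ii}}$. The case $\|f_{b:b+K}\|_2=0$ cannot occur.

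\textbf{Step 2: phase alignment and stitching, by induction on $b$.} Consecutive blocks $b$ and $b+1$ share the $K$ indices $b+1,\dots,b+K$. We first fix the phase of block $0$ using $f_0=1$. The inductive hypothesis is that the aligned block $b$ equals $f_{b:b+K}$ exactly. The alignment step picks the phase $\theta$ that best matches $\mathrm{e}^{\mathrm{i}\theta}v_{b+1}$ with the aligned $v_b$ on the overlap, typically via $\theta = \arg\langle v_{b+1}|_{\text{overlap}}, v_b|_{\text{overlap}}\rangle$. In the noiseless case this inner product equals $\mathrm{e}^{-\mathrm{i}\phi_{b+1}}\|f_{b+1:b+K}\|_2^2$.

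\textbf{Main obstacle: the overlap must be nonzero.} The formula above determines $\theta=-\phi_{b+1}$ only if $\|f_{b+1:b+K}\|_2 > 0$. This holds because the overlap consists of $K$ consecutive entries, and signals in $\mathcal{S}_K$ contain no run of $K$ zeros. This is exactly where the "if" direction of Lemma~\ref{lem:identifiability} enters. So the aligned block $b+1$ equals $f_{b+1:b+K+1}$, and the induction proceeds.

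\textbf{Step 3: averaging preserves exactness.} After stitching, each index $i$ is covered by several aligned blocks, and each of them equals $f_i$. Any averaging rule in Algorithm~\ref{algo:eigen-estimator-averaging}, a convex combination of identical values, therefore returns $f_i$. Hence $\widehat f = f$.

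If the algorithm instead normalizes the global phase at the end by dividing by $\widehat f_0/|\widehat f_0|$, this is consistent with the above since $f_0=1\neq 0$. In that case we would check $f_0=1$ only once, after stitching.
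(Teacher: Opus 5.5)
Your proposal is correct and follows the paper's proof essentially step for step. Noiseless blocks are rank one and give $\mathrm{e}^{\mathrm{i}\phi_b} f_{b:b+K}$. Induction on $b$ then uses the overlap inner product $\mathrm{e}^{-\mathrm{i}\phi_b}\sum_k |f_{b+k}|^2 \neq 0$, which holds by $f\in\mathcal{S}_K$, and the final stitching averages identical values. One small point to tighten: Algorithm~\ref{algo:eigen-estimator-averaging} aligns against the consensus of \emph{all} earlier aligned blocks covering the overlap, not just block $b$; since your induction makes every earlier block exact, that consensus is again an average of identical values $f_i$, by the same observation as your Step~3.
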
 

\bigskip
\noindent
\emph{3. Least-squares estimator.}
Unlike the previous estimators, which reconstruct the signal from local diagonal blocks, this estimator takes a global approach by viewing the task as a positive semidefinite (PSD)-constrained matrix completion problem.  
It seeks the PSD matrix $X$ whose entries agree with the observed $K$-band measurements in a least-squares sense:
\begin{equation}
    X^* = \operatorname*{argmin}_{X \succeq 0}
    \sum_{\substack{0 \le i, j < T \\ |i-j| \le K}}
    \left| \widehat{Z}_{ij}^{(K)} - X_{ij} \right|^2.
\end{equation}
The intuition behind this estimator is simple: in the noiseless case and for any $f \in \mathcal{S}_K$, the $K$-band uniquely determines the lifted matrix $Z$ within the PSD cone, so the minimizer must be $X^* = Z$.  
We prove this statement rigorously in Theorem~\ref{thm:exact-recovery-quantum-phaselift-estimator}.  
Now, once $X^*$ is found, the signal is directly obtained via the relation $Z = f f^\dagger$ by taking the principal eigenvector of $X^*$ and fixing its global phase using the constraint $f_0 = 1$. 
In the noisy setting, the estimator carries over this idea naturally: $X^*$ is now the PSD matrix that best fits the noisy $K$-band measurements, and its leading eigenvector provides a robust signal estimate $\widehat f$.  
Section~\ref{sec:quantum-phaselift-estimator} presents the full procedure and a pseudocode (Algorithm~\ref{algo:least-squares}). 
As before with previous estimators, we show the following:
\begin{theorem}[Informal Theorem~\ref{thm:exact-recovery-quantum-phaselift-estimator} statement] 
The least-squares estimator (Algorithm~\ref{algo:least-squares}) recovers the true signal exactly in the absence of measurement noise.
\end{theorem}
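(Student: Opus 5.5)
The plan is to show that in the noiseless setting the feasible set of the least-squares program contains only $Z$. Recovery of $f$ from $X^*=Z$ is then immediate. With exact data $\widehat{Z}^{(K)}_{ij}=Z_{ij}$ for $|i-j|\le K$, the objective is nonnegative and vanishes at $X=Z\succeq 0$. Hence every minimizer $X^*$ is PSD and agrees with $Z$ on the whole $K$-band. So it suffices to prove the following uniqueness claim: if $X\succeq 0$ and $X_{ij}=f_i\overline{f_j}$ for all $|i-j|\le K$, with $f\in\mathcal{S}_K$, then $X=ff^\dagger$. Once this holds, $X^*=Z$ has rank one with principal eigenvector $f/\|f\|_2$ and eigenvalue $\|f\|_2^2$. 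Rescaling by $\sqrt{\lambda_{\max}}$ and fixing the global phase by $f_0=1$ (note $f_0\neq0$) returns $f$ exactly.

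For the uniqueness claim, I will use the Gram representation. Write $X=G^\dagger G$ with columns $g_0,\dots,g_{T-1}$, so that $X_{ij}=\langle g_i,g_j\rangle$ and $\|g_i\|^2=|f_i|^2$. The aim is to show $g_i=\overline{f_i}\,u$ for a single unit vector $u$, which gives $X_{ij}=f_i\overline{f_j}$ everywhere.

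\begin{itemize}
\item \emph{Pairs of nonzero entries.} If $f_i,f_j\neq0$ and $|i-j|\le K$, then $|\langle g_i,g_j\rangle|=|f_i||f_j|=\|g_i\|\|g_j\|$. Cauchy--Schwarz is therefore tight, so $g_i$ and $g_j$ are parallel, and the value of the inner product fixes the relative phase.
\item \emph{Zero entries.} If $f_i=0$, then $g_i=0$, which is consistent with $g_i=\overline{f_i}u$.
\item \emph{Chaining.} Let $i_0<i_1<\dots$ be the indices of the nonzero entries. Membership in $\mathcal{S}_K$ means there is no run of $K$ consecutive zeros, so consecutive nonzero indices satisfy $i_{m+1}-i_m\le K$. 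By the pair step, adjacent $g_{i_m}$ are parallel with the correct relative phase, so all nonzero columns are multiples of one unit vector $u$.
\item \emph{Consistency.} Starting from $g_{i_0}=\overline{f_{i_0}}u$, which defines $u$, induction using $\langle g_{i_m},g_{i_{m+1}}\rangle=f_{i_m}\overline{f_{i_{m+1}}}$ gives $g_{i_{m+1}}=\overline{f_{i_{m+1}}}u$.
\end{itemize}

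As a sanity check, I would confirm this claim is the ``if'' direction of Lemma~\ref{lem:identifiability} and cite it instead if that lemma is phrased in PSD-completion form.

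The main obstacle is making sure the induction handles boundary and degenerate cases cleanly. Leading zeros cannot occur since $f_0=1$; trailing zeros are harmless. Each link needs an observed entry, and this holds because $|i_{m+1}-i_m|\le K$ places the pair inside the band. A secondary issue is uniqueness of the minimizer: every minimizer attains objective zero, so every minimizer equals $Z$. The principal eigenvector is well defined because $Z$ has a simple top eigenvalue $\|f\|_2^2>0$ and all other eigenvalues are zero.
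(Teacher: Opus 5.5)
Your proposal is correct and follows essentially the paper's argument: the paper also notes that every noiseless minimizer matches $Z$ on the $K$-band, uses Cauchy--Schwarz equality in a Gram factorization (its lemma on singular $2\times2$ principal PSD submatrices) to propagate linear dependence along the connected set of nonzero indices, and handles the zero indices via $|X_{ij}|^2\le X_{ii}X_{jj}$; working directly with the Gram vectors $g_i$, as you do, only removes the paper's auxiliary vector $h$ and its separate step showing $h_i=f_i$. One caution about your sanity check: do not cite Lemma~\ref{lem:identifiability} in place of your argument, since it concerns only rank-one lifts $gg^\dagger$ and so cannot exclude higher-rank PSD completions, and its ``if'' direction is itself derived in the paper from the exact-recovery theorems, including this one.
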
 
Moreover, as with the algebraic estimator, we show the stability of this estimator in the presence of measurement noise.
\begin{theorem}[Informal Theorem~\ref{thm:stability-proof-least-squares-estimator} statement]
    The least-squares estimator is stable under measurement noise; that is, the following holds:
    \begin{equation}
        \left \| \widehat{f} - \mathrm{e}^{\mathrm{i} \phi}f \right \|_2 \leq \frac{2(2\sqrt{2} + 1)}{\sigma_{\mathrm{min}} } \cdot \frac{\left \|\varepsilon\right \|_2}{\left \| f\right \|_2},
    \end{equation}
    where $\sigma_{\mathrm{min}}$ is the smallest conic singular value of $K$-banded measurement map restricted to a particular cone (we will define this formally in~\eqref{eq:mini-conic-singular-value}), and $\varepsilon$ is the measurement noise vector defined in~\eqref{eq:b-l-observations}.
\end{theorem}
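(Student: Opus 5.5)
The plan follows the standard PhaseLift-type stability argument of Cand\`es et al.~\cite{Candes2013}, adapted to the $K$-banded measurement map. Write $\mathcal{A}_K$ for the linear map sending a Hermitian $X$ to the vector of its $K$-band entries, so that $b = \mathcal{A}_K(Z) + \varepsilon$ as in~\eqref{eq:b-l-observations}. Let $X^*$ be the least-squares minimizer and set $H \coloneqq X^* - Z$. The proof has three steps:
\begin{enumerate}
\item bound $\|\mathcal{A}_K(H)\|_2$ by a multiple of $\|\varepsilon\|_2$;
\item convert this into a Frobenius bound on $H$ using the conic singular value $\sigma_{\min}$;
\item pass from the matrix error $\|X^*-Z\|_F$ to the vector error $\|\widehat f - \mathrm{e}^{\mathrm{i}\phi} f\|_2$ via an eigenvector perturbation (Davis--Kahan type) argument.
\end{enumerate}

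\textbf{Step 1.} $Z$ is feasible, so optimality of $X^*$ gives $\|b-\mathcal{A}_K(X^*)\|_2 \le \|b-\mathcal{A}_K(Z)\|_2 = \|\varepsilon\|_2$. The triangle inequality then yields $\|\mathcal{A}_K(H)\|_2 \le 2\|\varepsilon\|_2$.

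\textbf{Step 2.} Since $X^* \succeq 0$ and $Z = ff^\dagger$ is rank one, $H$ lies in the descent cone of the PSD constraint at $Z$. Concretely, with $P$ the projector onto $\mathrm{span}(f)$ and $P^\perp = I-P$, we have $P^\perp H P^\perp = P^\perp X^* P^\perp \succeq 0$. I take this to be exactly the cone over which $\sigma_{\min}$ is defined in~\eqref{eq:mini-conic-singular-value}. By definition, $\|\mathcal{A}_K(H)\|_2 \ge \sigma_{\min}\|H\|_F$, hence $\|H\|_F \le 2\|\varepsilon\|_2/\sigma_{\min}$. Strict positivity of $\sigma_{\min}$ is the quantitative form of the identifiability behind Theorem~\ref{thm:exact-recovery-quantum-phaselift-estimator}, so I assume it here.

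\textbf{Step 3.} This is the main obstacle. I would reuse the PhaseLift lemma: if $\|X^* - ff^\dagger\|_F \le \delta$ and $\widehat f = \sqrt{\lambda_1}\,u_1$ is the scaled top eigenvector of $X^*$, then $\|\widehat f - \mathrm{e}^{\mathrm{i}\phi} f\|_2 \lesssim \delta/\|f\|_2$. To get the explicit constant, proceed as follows.
\begin{enumerate}
\item Weyl's inequality gives $|\lambda_1 - \|f\|_2^2| \le \delta$.
\item Davis--Kahan with spectral gap $\|f\|_2^2$ gives $\|u_1 - \mathrm{e}^{\mathrm{i}\phi} f/\|f\|_2\|_2 \le \sqrt{2}\,\delta/\|f\|_2^2$ for the optimally chosen phase $\phi$.
\item Decompose $\widehat f - \mathrm{e}^{\mathrm{i}\phi}f = \sqrt{\lambda_1}\bigl(u_1 - \mathrm{e}^{\mathrm{i}\phi}f/\|f\|_2\bigr) + \bigl(\sqrt{\lambda_1} - \|f\|_2\bigr)\mathrm{e}^{\mathrm{i}\phi}f/\|f\|_2$.
\item Use $|\sqrt{\lambda_1} - \|f\|_2| \le \delta/\|f\|_2$, and bound $\sqrt{\lambda_1}$ by a constant multiple of $\|f\|_2$ in the regime where the claimed bound is nontrivial.
\end{enumerate}
This gives $(2\sqrt{2}+1)\,\delta/\|f\|_2$. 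Substituting $\delta = 2\|\varepsilon\|_2/\sigma_{\min}$ produces the stated constant $2(2\sqrt{2}+1)$.

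The delicate point is doing the bookkeeping so that exactly $2\sqrt{2}$ appears: this depends on whether $\lambda_1$ or $\|f\|_2^2$ is used to normalize, and on the regime $\delta \le \|f\|_2^2$. Outside that regime the bound should be handled by a direct argument rather than by perturbation.
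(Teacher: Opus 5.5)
Steps 1 and 2 are the paper's argument. One correction in Step 2: the paper defines $\sigma_{\min}$ over the descent cone $\mathcal{D}(L,Z)$ of \eqref{def:descent-cone}, which requires PSD feasibility \emph{and} $L(Z+\tau\Delta)\le L(Z)$. The error $X^*-Z$ lies in this cone with $\tau=1$, directly from $X^*\succeq 0$ and optimality. Your set $\{H: P^\perp H P^\perp\succeq 0\}$ contains $\mathcal{D}(L,Z)$, since $Z+\tau\Delta\succeq 0$ forces $\tau P^\perp\Delta P^\perp\succeq 0$. An infimum over your larger set is therefore no larger, so you would prove a formally weaker bound unless you use $\mathcal{D}(L,Z)$.

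The real problem is Step 3, item 2. It is not a valid form of Davis--Kahan. With the \emph{unperturbed} gap $\|f\|_2^2$ in the denominator, the bound (the one the paper uses) is $\sin\theta\le 2\|X^*-Z\|_2/\|f\|_2^2$. The factor-one version instead needs the perturbed eigenvalue in the denominator: $\lambda_{\max}\sin\theta=\|P^\perp(X^*-Z)v_{\max}\|_2$ gives $\sin\theta\le\|X^*-Z\|_2/\lambda_{\max}$. Your claim fails inside your own regime $\delta\le\|f\|_2^2$. Take $f=e_1$ and $X^*=\mathrm{diag}(1/2,\,1/2+\epsilon')$. Then $u_1=e_2$, so $\min_\phi\|u_1-\mathrm{e}^{\mathrm{i}\phi}f\|_2=\sqrt2$, while $\sqrt2\,\|X^*-Z\|_F\to 1$. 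Item 4 ($\sqrt{\lambda_1}\le 2\|f\|_2$) also only holds for $\delta\le 3\|f\|_2^2$. So your explanation of the constant, ``$\sqrt2$ from Davis--Kahan times $2$ from $\sqrt{\lambda_1}$'', does not hold up, and the ``direct argument'' for the remaining regime is left unspecified.

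The paper avoids all of this with a different split:
\[
\widehat f-\mathrm{e}^{\mathrm{i}\phi}f=(\sqrt{\lambda_{\max}}-\|f\|_2)\,v_{\max}+\|f\|_2\,(v_{\max}-\mathrm{e}^{\mathrm{i}\phi}f/\|f\|_2).
\]
This puts $\|f\|_2$, not $\sqrt{\lambda_{\max}}$, in front of the eigenvector error, so no upper bound on $\lambda_{\max}$ and no case split is needed. Weyl gives $\|X^*-Z\|_2/\|f\|_2$ for the first term. For the second, $2\sin(\theta/2)\le\sqrt2\sin\theta\le 2\sqrt2\|X^*-Z\|_2/\|f\|_2^2$. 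The $2\sqrt2$ is therefore ``$2$ from Davis--Kahan times $\sqrt2$ from the chord-to-sine conversion'', and it holds at every noise level.

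Your decomposition can be rescued, but it takes more work for the same constant. Use the factor-one bound $\sin\theta\le\|X^*-Z\|_2/\lambda_{\max}$, which gives $\sqrt2\,\delta/\sqrt{\lambda_{\max}}\le 2\sqrt2\,\delta/\|f\|_2$ for your first term when $\lambda_{\max}\ge\|f\|_2^2/4$. When $\lambda_{\max}<\|f\|_2^2/4$, Weyl gives $\delta>3\|f\|_2^2/4$. The triangle inequality then bounds the error by $\sqrt{\lambda_{\max}}+\|f\|_2<\tfrac32\|f\|_2<2\delta/\|f\|_2$.
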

It is important to note here is that, unlike the algebraic estimator, the above stability guarantee does not directly translate into explicit query-complexity bounds, as it depends on the smallest conic singular value, for which we currently do not have a closed-form expression in terms of $T$ and $K$. We leave the derivation of such an expression, or sharp lower bounds for this quantity, to future work.

\paragraph{Quantum Phaselift High-Level Workflow.}
With the above notions in place, we are now in the position to present a pseudocode (Algorithm~\ref{algo:high-level}) that provides a high-level overview of our entire Quantum Phaselift workflow.

\begin{algorithm}[t]
\caption{Quantum Phaselift: High-Level Workflow}
\label{algo:high-level}
\begin{algorithmic}[1]
\State \textbf{Input:} Access to $\mathrm{e}^{-\mathrm{i}H\Delta}$, $U_\psi$, target signal length $T$, noise threshold $\chi$
\State \textbf{Output:} Reconstructed time-series $\widehat f \in \mathbb{C}^T$

\bigskip
\State \textbf{Step 1: Diagonal estimation (support inspection)}
\For{$i = 0$ to $T-1$}
    \State Estimate $Z_{ii} = |f_i|^2$ using the quantum circuit presented in Section~\ref{sec:measurement-diag}
\EndFor
\State Estimate intrinsic width $W$ as the largest run of indices with $|Z_{ii}| < \chi$
\State Choose bandwidth $K \ge W + 1$ (optionally $K = W + c$ for small constant $c$)

\bigskip
\State \textbf{Step 2: Off-Diagonal estimation}
\For{all $(i,j)$ such that $|i-j| \le K$}
    \State Estimate $Z_{ij} = f_i \overline{f_j}$ using the quantum circuit presented in Section~\ref{sec:measurement-off-diag}
\EndFor
\State Assemble the $K$-banded observation matrix $\widehat Z^{(K)}$ as discussed in Section~\ref{sec:k-banded-observation-matrix}

\bigskip
\State \textbf{Step 3: Classical reconstruction}
\State Apply one of the following estimators:
\State \hspace{1em} (i) block-by-block algebraic estimator (Algorithm~\ref{algo:algebraic}),
\State \hspace{1em} (ii) block-by-block eigenvector estimator (Algorithm~\ref{algo:eigen-estimator-averaging}), or
\State \hspace{1em} (iii) least-squares estimator (Algorithm~\ref{algo:least-squares})
\State Obtain reconstructed signal $\widehat f$

\State \Return $\widehat f$
\end{algorithmic}
\end{algorithm}

\paragraph{From Time-Series Estimation to Spectral Estimation.} Now that we have a robust time-series estimate $\widehat{f}$, we show in Section~\ref{sec:time-series-to-spectrum} how to convert it into estimates for the eigenvalues of $H$, specifically, the part of the spectrum actually visible to $|\psi\rangle$.
In other words, assuming the Nyquist–Shannon condition on the sampling step $\Delta$, a minimum nonzero spectral gap $\gamma$ within the subspace supported by $|\psi\rangle$, and a sufficiently accurate time-series $\|\widehat f-f\|_2\le \varepsilon$, we give conditions on $T$ and $\varepsilon$ that guarantee that the Fourier spectrum of $\widehat{f}$ has detectable peaks near the true eigenvalues, and we show that the resulting eigenvalue estimation error scales $\sim 1/(T\Delta)$.

\begin{figure}[t]
    \centering
\includegraphics[width=\linewidth]{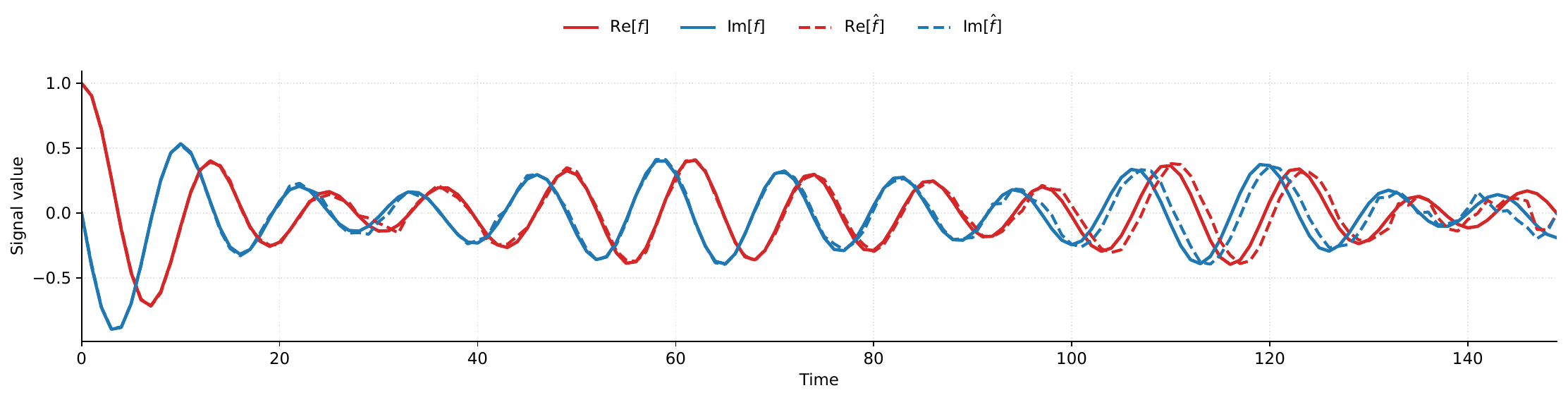}
    \caption{Accurate signal recovery for a representative instance of a $4\times 3$ transverse-field Ising model defined in~\eqref{eq:tf-ising-model}, with randomly chosen coupling parameters and a uniform superposition input state.
    The total number of measurement shots is fixed to approximately one million and distributed uniformly across the measured $K$-band entries of the lifted matrix $Z$.
    The reconstructed signal is obtained using the block-by-block eigenvector estimator (Algorithm~\ref{algo:eigen-estimator-averaging}).}
    \label{fig:constant-shots-summary}
\end{figure}

\paragraph{Numerical Simulations.}
Finally, in Section~\ref{sec:numerical-simulations}, we complement our theoretical analysis with extensive numerical simulations on signals generated by Fermi-Hubbard and transverse-field Ising Hamiltonians. 
We validate the theoretical identifiability conditions established in Lemma~\ref{lem:identifiability}, demonstrating that signal recovery is successful provided the measurement bandwidth $K$ exceeds the length of any zero-runs in the signal.
Provided that the bandwidth is chosen correctly, we demonstrate, across all tested instances, that the proposed estimators reliably recover the underlying signal from noisy banded measurements.
In particular, we observe that the block-by-block algebraic and eigenvector estimators exhibit strong empirical robustness.
We further investigate a practically relevant regime where only a fixed number of measurement shots are available.
Even under a strict restriction of approximately one million total number of shots, we observe accurate signal reconstruction.
Remarkably, as illustrated in Figure~\ref{fig:constant-shots-summary}, this behavior persists for long signals of length $T=150$, where the reconstructed signal, which we get using our eigenvector estimator, closely matches the exact signal over the entire time window.
These results confirm that our estimators enable long-time signal recovery under realistic shot constraints.

\subsection{Structure of the Paper}

The remainder of the paper is organized as follows. Section~\ref{sec:preliminaries} establishes the notation and reviews the classical Phaselift problem. 
Section~\ref{sec:signal-recovery-problem} formalizes the signal recovery problem that we introduced before. 
In Section~\ref{sec:lifted-representation}, we present the quantum circuits required to estimate the entries of the lifted matrix $Z$. 
Section~\ref{sec:K-banded-observation} introduces the signal class $\mathcal{S}_K$ and discusses the adaptive strategy for bandwidth selection, that is, how many off-diagonals of $Z$ (size of the band of $Z$) one needs to estimate for signal recovery. 
Section~\ref{sec:estimators} forms the core of this paper, presenting the block-by-block algebraic, block-by-block eigenvector, and least-squares estimators, along with their theoretical recovery guarantees in the noiseless case and stability analysis for the algebraic and least-squares estimators in the noisy case.
Section~\ref{sec:numerical-simulations} presents numerical simulations that validate the theoretical results and illustrate the practical performance of the proposed estimators.
Finally, in Section~\ref{sec:time-series-to-spectrum}, we show how the error in the estimation of the time-series propagates into the error in the estimation of the spectrum of $H$.

\section{Preliminaries}
\label{sec:preliminaries}

\subsection{Notation}\label{sec:notation}

We adopt the standard linear algebra and quantum computing notation throughout this paper. 
A comprehensive summary of the symbols used, including definitions for signal vectors, matrices, and matrix bands, is provided in Table~\ref{tab:notation}.

\begin{table}[ht]
\centering
\begin{tabular}{ll}
\toprule
Notation & Meaning \\
\midrule
$\mathbb{N}, \mathbb{R}, \mathbb{C}$ & Sets of natural, real, and complex numbers\\
$\operatorname{Re}[z], \operatorname{Im}[z], \overline{z}$ & Real part, imaginary part, and complex conjugate of $z \in \mathbb{C}$\\
$\overline{x}, x^{\mathsf{T}}, x^{\dagger}$ & Conjugate, transpose, and conjugate transpose of a vector $x$\\
$x_i, x_{i:j}$ & $i$-th entry and block $(x_i, \ldots, x_j)^{\mathsf{T}}$ of a vector $x$\\
$\overline{A}, A^{\mathsf{T}}, A^{\dagger}$ & Conjugate, transpose, and conjugate transpose of a matrix $A$\\
$A_{ij}, A_{i:j, k:\ell}$ & $(i,j)$-th entry and submatrix with rows $(i:j)$ and columns $(k:\ell)$ of $A$\\
$K$-band of $A$ & Matrix whose $(i,j)$-entry equals $A_{ij}$ if $|i-j|\le K$ and $0$ otherwise\\
$\operatorname{band}_K(A)$ & Projection of $A$ onto its $K$-band\\
$\boldsymbol{x}$ & Bitstring (e.g., $\boldsymbol{0} \equiv 00\ldots0$)\\
$H$ & Hamiltonian $\in \mathbb{C}^{d \times d}$\\
$|\psi\rangle, U_{\psi}$ & Input state $\in \mathbb{C}^{d}$ and a unitary $\in \mathbb{C}^{d\times d}$ preparing it\\
$f(t)$ & Continuous-time signal evaluated at time $t$ (Definition~\ref{eq:signal-continuous}) \\
$\Delta$ & Sampling interval \\
$T$ & Number of discrete time points sampled \\
$f$ & Discrete signal vector $\in \mathbb{C}^T$ (Definition~\ref{def:discrete-time-signal}) \\
$\widehat f$ & Signal estimate $\in \mathbb{C}^T$\\
$\mathcal{S}_K$ & Class of signals with no run of $K$ consecutive zeros (Definition~\ref{def:signal-class})\\
$Z$ & Lifted representation $\in \mathbb{C}^{T\times T}$, $Z = ff^{\dagger}$\\
$\widehat{Z}^{(K)}_{\vphantom{a}}$ & $K$-banded observation matrix $\in \mathbb{C}^{T\times T}$\\
Had, $R_{z}(\theta)$& Hadamard gate and phase shift gate\\
$\left \| x \right \|_2, \left \| A \right \|_p$ & Euclidean norm of a vector $x$ and Schatten-$p$ norm of a matrix $A$\\
\bottomrule
\end{tabular}
\caption{Summary of notation.}
\label{tab:notation}
\end{table}

\subsection{Classical Phase Retrieval and Phaselift}
\label{sec:phaselift-review}

The classical phase retrieval problem asks to recover an unknown complex vector $x \in \mathbb{C}^n$ of size $n > 0$ given only the magnitudes of $m$ linear measurements of the form
\begin{equation}
    y_k = |\langle a_k, x \rangle|^2, \qquad \text{for } k = 0, 1, \ldots, m,
\end{equation} 
where $\{a_k \in \mathbb{C}^n\}_k$ are known measurement vectors. 
This inverse problem is inherently non-convex, and to overcome this difficulty, the Phaselift approach introduces the concept of "lifting" the problem into a higher-dimensional space~\cite{Candes2013}. 
The key observation is that the above mentioned quadratic measurements on the vector $x$ can be reinterpreted as linear measurements on the lifted rank-one matrix $X = x x^\dagger$:
\begin{equation}
    |\langle a_k, x \rangle|^2 = \operatorname{Tr}\left[a_k^{\vphantom{\dagger}} a_k^\dagger x x^\dagger\right] = \operatorname{Tr}\left [A_k X\right],
\end{equation}
where $A_k \coloneqq a_k^{\vphantom{\dagger}} a_k^\dagger$ are rank-one Hermitian matrices. 
This transformation converts the non-convex quadratic constraints on $x$ into linear constraints on the PSD matrix $X$. 
The problem then becomes finding a matrix $X$ that satisfies these linear constraints, is positive semidefinite, and has rank one.

While rank minimization is generally NP-hard, Phaselift relaxes this condition by minimizing the trace (that is, the nuclear norm proxy for rank) over the PSD cone:
\begin{equation}
    \min_{X \succeq 0} \operatorname{Tr}\left [X\right] \quad \text{subject to} \quad \operatorname{Tr}\left [A_k X\right] = y_k.
\end{equation}
This convex formulation is robust to noise and has been shown to recover the signal with high probability under suitable measurement conditions~\cite{Candes2013}.

This lifting philosophy forms the theoretical foundation of this paper. 
As detailed in Section~\ref{sec:lifted-representation}, we lift the time-series $f$ to the matrix $Z = f f^\dagger$ and design quantum circuits to estimate its entries directly. 
All three estimators that we will present in Section~\ref{sec:estimators} rely on this lifted representation: the block-by-block algebraic and block-by-block eigenvector estimators exploit the rank-one structure of $Z$ locally, while the least-squares estimator adopts the global optimization perspective of the above mentioned classical Phaselift algorithm of~\cite{Candes2013}, solving for the matrix $X$ within the PSD cone that is most consistent with the measurements.

\section{Signal Recovery Problem}
\label{sec:signal-recovery-problem}

We begin by formalizing the signal recovery problem that we introduced before. 
We consider a Hamiltonian $H \in \mathbb{C}^{d \times d}$ acting on a finite-dimensional Hilbert space with dimension $d \in \mathbb{N}$ and a known input quantum state $|\psi\rangle \in \mathbb{C}^d$.  
Our goal is to reconstruct the complex-valued time series
\begin{equation}
    f(t) = \langle \psi | \mathrm{e}^{-\mathrm{i} H t} | \psi\rangle,
    \label{eq:signal-continuous}
\end{equation}
which represents the overlap of the time-evolved state $\mathrm{e}^{-\mathrm{i}Ht}|\psi\rangle$ with the reference state $|\psi\rangle$. 
This function is a continuous function in $t$ and encodes information about the spectrum of $H$ weighted by the overlaps of $|\psi\rangle$ with its eigenstates.

In practice, $f(t)$ can only be sampled at discrete time points since it is a continuous function. 
Therefore, we define a uniform sampling grid
\begin{equation}\label{eq:uniform-sampling-grid}
    t_i = i \Delta, \qquad \text{for all }i = 0, 1, \ldots, T-1,
\end{equation}
where $\Delta > 0$ is the sampling interval (that is, the difference between two consecutive sampled time points) and $T \in \mathbb{N}$ is the total number of discrete time points. 
We denote the corresponding discrete-time version of $f$ by the complex-valued vector
\begin{equation}
    f = (f_0, f_1, \dots, f_{T-1})^{\mathsf{T}} \in \mathbb{C}^{T} \label{def:discrete-time-signal},
\end{equation}
where we define its entries as
\begin{equation}\label{def:discrete-time-signal-elements}
    f_i = f(t_i), \qquad \text{for all } i = 0, 1, \ldots, T-1.
\end{equation}
Note that we use the same notation $``\!f"$ for the continuous signal, as well as for its discretized version, to avoid introducing new notation for representing the signal. 
For most of the paper, we focus on the discretized signal; however, the context will make it clear which one is being discussed at any given point. 

Furthermore, since the system evolution satisfies 
\begin{equation}
    \langle \psi | \mathrm{e}^{-\mathrm{i}Ht_0} | \psi\rangle = \langle \psi | \psi\rangle = 1
\end{equation}
at $t_0=0$, we inherently have $f_0 = 1$. 
This constraint will be useful in removing the global phase ambiguity inherent to the signal recovery problem, and all of our estimators (Section~\ref{sec:estimators}) will exploit this constraint.

With these definitions in place, we formalize the problem that we focus on in this paper:
\begin{definition}[Signal Recovery Problem]\label{def:signal-recovery}
Let $H \in \mathbb{C}^{d\times d}$ be a finite-dimensional Hamiltonian with $d > 0$ and let $|\psi\rangle \in \mathbb{C}^d$ be a known quantum state. The signal recovery problem is to recover/reconstruct the discrete-time signal $f$ as defined in~\eqref{def:discrete-time-signal} and~\eqref{def:discrete-time-signal-elements}, given oracle access to the time-evolution unitary $\mathrm{e}^{-\mathrm{i}H\Delta}$ and to the state-preparation unitary $U_\psi$, along with its inverse $U_\psi^\dagger$.
\end{definition}

\section{Lifted Representation and Measurements}\label{sec:lifted-representation}

Motivated by the hardware constraints, recall from our discussion in previous sections that our focus in this paper is on estimating the quadratic correlations between signal values, that is, $f_i \overline{f_j}$, instead of directly estimating the signal values $f_i$. 
As shown in this section, estimating these correlations requires only controlled time evolutions of the form $\mathrm{e}^{-\mathrm{i}H(t_j - t_i)}$. Crucially, unlike the controlled-$\mathrm{e}^{-\mathrm{i}H t_i}$ operations used in the standard Hadamard test, these evolutions depend only on the time difference $|t_j - t_i|$. This significantly simplifies the implementation when the signal has additional structure, which we make precise later in Section~\ref{sec:K-banded-observation}.

\subsection{Lifted Representation}

To work directly with quadratic correlations, we introduce a lifted representation of the signal. 
Specifically, we collect all correlations $f_i \overline{f_j}$ into the matrix
\begin{equation}\label{eq:def-Z}
    Z = f f^{\dagger}, \qquad Z_{ij} = f_i \overline{f_j}.
\end{equation}
This lifting maps the original signal vector $f \in \mathbb{C}^T$ to a matrix $Z \in \mathbb{C}^{T \times T}$ whose entries correspond exactly to the quantities we aim to estimate. 
By construction, $Z$ is a rank-one positive semidefinite matrix. 

Observe that the diagonal entries of $Z$ encode the signal magnitudes $\{|f_i|^2\}_i$, while its off-diagonal entries encode all pairwise quadratic correlations $\{f_i \overline{f_j}\}_{i,j }$, with $i \neq j$. 
Taken together, these entries determine the signal $f$ up to a global phase. 
Indeed, for any arbitrary phase $\phi \in \mathbb{R}$, we observe that 
\begin{equation}
    (\mathrm{e}^{\mathrm{i}\phi} f)(\mathrm{e}^{\mathrm{i}\phi} f)^{\dagger} = \mathrm{e}^{\mathrm{i}\phi} f  \mathrm{e}^{-\mathrm{i}\phi} f^{\dagger} = f f^{\dagger} = Z.
\end{equation}
Thus every vector in the equivalence class $f \sim \mathrm{e}^{\mathrm{i}\phi} f$ produces the same lifted matrix.
In our setting, this ambiguity is not present because the constraint $f_0 = 1$ fixes the global phase.

With this lifted formulation in place, we now describe how to estimate the entries of $Z$ using quantum circuits.

\subsection{Measuring the Diagonal Elements}
\label{sec:measurement-diag}

The diagonal elements of $Z$ are real and correspond to
\begin{equation}
    Z_{ii} = \left | f_i \right|^2 = \left | f(t_i) \right|^2 = |\langle\psi|\mathrm{e}^{-\mathrm{i}Ht_i}|\psi\rangle|^2 = \operatorname{Tr}\!\left[\rho_\psi \mathrm{e}^{\mathrm{i}Ht_i}\rho_\psi \mathrm{e}^{-\mathrm{i}Ht_i}\right]=
    \operatorname{Tr}\!\left[\rho_\psi\rho_\psi(t_i)\right],
\end{equation}
where the notation $\rho_{\psi} \equiv |\psi\rangle \langle \psi|$ denotes the density matrix of the pure quantum state $|\psi\rangle$, and
\begin{equation}
    \rho_\psi(t) \coloneqq \mathrm{e}^{\mathrm{i}Ht} \rho_\psi \mathrm{e}^{-\mathrm{i}Ht}
\end{equation}
is its time-evolved counterpart. It is simple to see that the quantity $\operatorname{Tr}[\rho_\psi\rho_\psi(t_i)]$ can be estimated using the following quantum circuit:
\begin{center}
    \begin{quantikz}
    \lstick{$\ket{\boldsymbol{0}}$} &\gate{U_\psi} &\gate{\mathrm{e}^{\mathrm{i}Ht_i}} &\gate{U_\psi^\dagger} & \meter{\boldsymbol{0}} 
    \end{quantikz}
\end{center}
To obtain an unbiased estimator $\widehat Z_{ii}$ for $Z_{ii}$, the circuit above is executed $N_{\mathrm{diag}}$ times, with measurements performed in the computational basis. If we let $N_{\boldsymbol{0}}$ denote the number of outcomes corresponding to the all-zeros state $|\boldsymbol{0}\rangle$, the unbiased estimator is given by
\begin{equation}\label{eq:unbiased-est-diagonal-element}
\widehat Z_{ii} = \frac{N_{\boldsymbol{0}}}{N_{\mathrm{diag}}}.
\end{equation}

It is worth reiterating that this circuit does not require any controlled operations. This scheme is particularly suitable for hardware architectures such as small trapped-ion systems, where the number of qubits is limited, but high gate fidelities and long coherence times permit the execution of deeper circuits. Conversely, for architectures where circuit depth is a primary bottleneck, it may be more practical to implement a SWAP test between a copy of $\rho_\psi$ and one of $\rho_\psi(t_i)$ to estimate $\operatorname{Tr}[\rho_\psi\rho_\psi(t_i)]$. We provide a detailed discussion of this SWAP-test-based approach for estimating the diagonal elements of $Z$ in Appendix~\ref{app:estimating-diagonal-ele-swap}.

\subsection{Measuring the Off-Diagonal Elements}
\label{sec:measurement-off-diag}

Turning now to the off-diagonal elements of $Z$, we note that these are generally complex. They are given by
\begin{equation}
    Z_{ij} = f_i\overline{f_j} = f(t_i)\overline{f(t_j)} = \langle\psi|\mathrm{e}^{-\mathrm{i}Ht_i}|\psi\rangle\langle\psi|\mathrm{e}^{\mathrm{i}Ht_j}|\psi\rangle = \operatorname{Tr}\left[\mathrm{e}^{\mathrm{i}Ht_i}\rho_\psi \mathrm{e}^{-\mathrm{i}Ht_i}\rho_\psi \mathrm{e}^{\mathrm{i}H(t_j-t_i)}\right].
\end{equation}
Due to their complex nature, the real and imaginary parts of these elements must be estimated separately.

To this end, we propose the circuit depicted below, where a phase shift gate $R_z(\theta)$ is applied to the control qubit immediately following the first Hadamard gate:
\begin{center}
    \begin{quantikz}
    \lstick{\ket{0}}  & \gate{\text{Had}}  & \gate{R_{z}(\theta)} &\ctrl{1}   & \gate{\text{Had}} & \qw  &\meter{x} \\
     \lstick{\ket{\boldsymbol{0}}} &\gate{U_\psi} & \qw & \gate{\mathrm{e}^{\mathrm{i}H(t_j-t_i)}} & \gate{\mathrm{e}^{\mathrm{i}Ht_i}} & \gate{U_\psi^{\dagger}} & \meter{\boldsymbol{y}}
    \end{quantikz}
\end{center}
Here $x$ is a single bit, and $\boldsymbol{y}$ is a bitstring of length $\lceil \log d \rceil$. Prior to the final measurements, the state of the system is
\begin{equation}
    \lvert\chi\rangle=\frac{1}{2}\left(\ket{0}\otimes U_\phi^\dagger(I+\mathrm{e}^{\mathrm{i}\theta}U)U_\psi\ket{\boldsymbol{0}}
    +\ket{1}\otimes U_\phi^\dagger(I-\mathrm{e}^{\mathrm{i}\theta} U)U_\psi\ket{\boldsymbol{0}}\right),
\end{equation}
where we have adopted the shorthand notations $U \equiv \mathrm{e}^{\mathrm{i}H(t_j-t_i)}$, $U_{\phi} \equiv \mathrm{e}^{-\mathrm{i}Ht_i} U_{\psi}$, and $|\phi\rangle \equiv U_{\phi} |\boldsymbol{0}\rangle$ to streamline the derivation. 

The amplitude associated with observing the outcome $(x,\boldsymbol{y})=(0,\boldsymbol{0})$ reads
\begin{equation}
    (\bra{0} \otimes \bra{\boldsymbol{0}})|\chi\rangle=\frac{1}{2}\bra{\boldsymbol{0}} U_\phi^\dagger(I+\mathrm{e}^{\mathrm{i}\theta} U)U_\psi\ket{\boldsymbol{0}}=
    \frac{1}{2}\langle \phi|I+\mathrm{e}^{\mathrm{i}\theta}U|\psi\rangle,
\end{equation}
with probability
\begin{align}
    P(0,\boldsymbol{0}) &=\frac{1}{4}\left[ |\langle\phi|\psi\rangle|^2+|\bra{\phi}U|\psi\rangle|^2
+\mathrm{e}^{-\mathrm{i}\theta}\langle\phi|\psi\rangle\langle \psi| U^\dagger|\phi\rangle +
 \mathrm{e}^{\mathrm{i}\theta}\langle\psi|\phi\rangle\langle \phi | U | \psi \rangle\right]  \\
 &=\frac{1}{4}\left[ |\langle\phi|\psi\rangle|^2+|\langle \phi | U | \psi \rangle|^2
+2 \operatorname{Re}[\mathrm{e}^{\mathrm{i}\theta}\langle\psi|\phi\rangle\langle \phi | U | \psi \rangle]\right].
\end{align}
Similarly, it is easy to see that 
\begin{equation}
    P(1,\boldsymbol{0})=\frac{1}{4}\left[ |\langle\phi|\psi\rangle|^2+|\langle \phi | U | \psi \rangle|^2
-2 \operatorname{Re}[\mathrm{e}^{\mathrm{i}\theta}\langle\psi|\phi\rangle\langle \phi | U | \psi \rangle]\right].
\end{equation}
Combining these yields the difference
\begin{equation}
    P(0,\boldsymbol{0})-P(1,\boldsymbol{0})=\operatorname{Re}[\mathrm{e}^{\mathrm{i}\theta}\langle\psi|\phi\rangle\langle \phi | U | \psi \rangle].
\end{equation}
Since $\langle\psi|\phi\rangle\langle \phi | U | \psi \rangle = \overline{Z_{ij}}$, we have
\begin{equation}
    P(0,\boldsymbol{0})-P(1,\boldsymbol{0})=\operatorname{Re}[\mathrm{e}^{\mathrm{i}\theta}Z_{ij}].
\end{equation}
By choosing $\theta$ appropriately, we can isolate the real and imaginary parts of $Z_{ij}$:
\begin{align}
    \theta = 0
\quad &\Rightarrow\quad
P(0,\boldsymbol{0}) - P(1,\boldsymbol{0}) = \operatorname{Re}[Z_{ij}]\\
\theta = \frac{\pi}{2}
\quad & \Rightarrow\quad
P(0,\boldsymbol{0}) - P(1,\boldsymbol{0}) = \operatorname{Im}[Z_{ij}].
\end{align}

Running the above circuit $N_{\mathrm{off}}$ times and letting $N_{0\boldsymbol{0}}$ and $N_{1\boldsymbol{0}}$ denote the counts of the outcomes $(0,\boldsymbol{0})$ and $(1,\boldsymbol{0})$, respectively, we obtain the unbiased estimators
\begin{align}
\widehat{\operatorname{Re}[Z_{ij}]}
    & = \frac{N_{0\boldsymbol{0}} - N_{1\boldsymbol{0}}}{N_{\mathrm{off}}}
    \quad (\theta=0) \\
\widehat{\operatorname{Im}[Z_{ij}]}
   &  = \frac{N_{0\boldsymbol{0}} - N_{1\boldsymbol{0}}}{N_{\mathrm{off}}}
    \quad \left(\theta=\frac{\pi}{2}\right).
\end{align}
The full complex unbiased estimator for $Z_{ij}$ is then
\begin{equation}
\widehat Z_{ij}
    = \widehat{\operatorname{Re}[Z_{ij}]}
      + \mathrm{i}\,\widehat{\operatorname{Im}[Z_{ij}]}.
\end{equation}

Finally, consistent with our discussion of diagonal entries, deep circuits and coherence-time limits may render this specific approach impractical on certain hardware platforms. In such cases, a SWAP-test--based method offers a viable alternative for estimating $Z_{ij}$. Further details on this alternative are provided in Appendix~\ref{app:estimating-off-diagonal-ele-swap}.

\section{\texorpdfstring{$K$}{K}-Banded Observation}\label{sec:K-banded-observation}

\subsection{Signal Class \texorpdfstring{$\mathcal{S}_K$}{Sₖ}}

For a bandwidth parameter $K \in \{1, 2, \ldots, T-1\}$, consider the following class of signals:
\begin{equation}\label{def:signal-class}
    \mathcal{S}_K \coloneqq \left\{ f \in \mathbb{C}^T : f \text{ does not contain a run of } K \text{ consecutive zeros} \right\}.
\end{equation}
That is, $\mathcal{S}_K$ consists precisely of those signals whose intrinsic width $W$, which we defined before as the length of the largest run of consecutive zero entries, satisfies $W \le K-1$.

Intuitively, the relevance of this class becomes clear when one views recovery through the lifted matrix $Z$. 
Each nonzero off-diagonal entry $Z_{ij} = f_i \overline{f_j}$ acts as a "bridge" that relates the phases of $f_i$ and $f_j$. 
If a signal contains a long stretch of zeros, then these bridges disappear across that region, preventing phase information from propagating from one side of the signal to the other using only local correlations.
In contrast, if the signal never vanishes for $K$ consecutive indices, then for every index $i$ there exists a nearby index $j$ with $|i-j|\le K$ and $f_j \neq 0$. 
This guarantees that phase information can be passed sequentially across the entire signal using only entries from its $K$-band.

\paragraph{Concrete example.}
Consider a signal of length $T=10$ and let $W=3$. Define
\begin{equation}
f = (1,\, 0.7,\, 0,\, 0,\, 0,\, 0.5,\, 0.3,\, 0,\, 0.4,\, 0.2).
\end{equation}
This signal contains a run of three consecutive zeros at indices $\{2,3,4\}$, and therefore does not belong to the signal class $\mathcal{S}_3$. 
Notice that the longest string of zeros is length three, so it does belong to $\mathcal{S}_4$. 

For any index $i$ in the zero block $\{2,3,4\}$, we have $f_i=0$, and hence
\begin{equation}
Z_{ij} = Z_{ji} = 0, \qquad \text{for all } j.
\end{equation}
In particular, all entries of $Z$ in the $3$-band that would connect indices $\{0,1\}$ to indices $\{5,6,7, 8, 9\}$ vanish. 
Consequently, the $3$-band of $Z$ decomposes into two disconnected regions: one supported on indices $\{0,1\}$ and another on indices $\{5,6,7,8,9\}$. 
Any relative phase between these two regions is therefore unobservable from the $3$-band alone.

\bigskip
We formalize this intuition with the following lemma, which establishes that $\mathcal{S}_K$ is exactly the class of signals that can be uniquely recovered from the $K$-band of $Z$.

\begin{lemma}[Identifiability]\label{lem:identifiability}
    A signal $f$ is uniquely determined by the $K$-band of its lifted matrix $Z = ff^\dagger$ if and only if $f \in \mathcal{S}_K$.
\end{lemma}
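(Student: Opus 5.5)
The plan is to read ``uniquely determined'' as uniqueness up to the global phase $f\sim\mathrm{e}^{\mathrm{i}\phi}f$, which is fixed in our setting by the normalization $f_0=1$. Concretely, we will show that for $f\in\mathcal{S}_K$ every $g\in\mathbb{C}^T$ with $\operatorname{band}_K(gg^\dagger)=\operatorname{band}_K(ff^\dagger)$ satisfies $g=\mathrm{e}^{\mathrm{i}\phi}f$. Conversely, if $f\notin\mathcal{S}_K$, we will construct such a $g$ that is not a global-phase multiple of $f$ and still has $g_0=1$.

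For the ``if'' direction, first compare diagonals. They give $|g_i|^2=|f_i|^2$ for every $i$, so $g$ and $f$ have the same support $\{i_1<i_2<\dots<i_m\}$. Since $f\in\mathcal{S}_K$, consecutive support indices are separated by at most $K-1$ zeros, so $i_{k+1}-i_k\le K$. Each pair $(i_{k+1},i_k)$ therefore lies in the $K$-band, and we get
\begin{equation}
g_{i_{k+1}}\overline{g_{i_k}}=f_{i_{k+1}}\overline{f_{i_k}}.
\end{equation}
Write $g_{i}=\mathrm{e}^{\mathrm{i}\phi_i}f_i$ on the support, which is possible since the moduli agree. The identity above reads $\mathrm{e}^{\mathrm{i}(\phi_{i_{k+1}}-\phi_{i_k})}|f_{i_{k+1}}||f_{i_k}|=|f_{i_{k+1}}||f_{i_k}|$, so $\phi_{i_{k+1}}\equiv\phi_{i_k}$. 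By induction along the chain, all phases equal a common $\phi$. Off the support both vectors vanish, so $g=\mathrm{e}^{\mathrm{i}\phi}f$, and $g_0=f_0=1$ forces $\phi=0$. The same chain argument is exactly the sequential phase propagation described informally before the lemma.

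For the ``only if'' direction, suppose $f$ has a run of zeros $f_a=\dots=f_{a+K-1}=0$ that separates nonzero entries on both sides. Since $f_0=1$, the left side is automatic; the right side must be assumed. Define $g_i=f_i$ for $i<a$ and $g_i=\mathrm{e}^{\mathrm{i}\alpha}f_i$ for $i\ge a$, with $\alpha\notin 2\pi\mathbb{Z}$. Now check the $K$-band entry by entry:
\begin{itemize}
\item A pair with both indices on the same side has $g_i\overline{g_j}=f_i\overline{f_j}$, since the phases cancel.
\item A pair touching the zero block gives $0$ for both vectors.
\item A pair with $i\le a-1$ and $j\ge a+K$ has $|i-j|\ge K+1$, so it lies outside the band.
\end{itemize}
Hence $\operatorname{band}_K(gg^\dagger)=\operatorname{band}_K(ff^\dagger)$ and $g_0=1$. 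However, $g$ is not $\mathrm{e}^{\mathrm{i}\phi}f$: the left nonzero entry forces $\phi=0$, while the right nonzero entry forces $\phi=\alpha$. So $f$ is not identifiable.

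The main obstacle is definitional rather than technical. A zero run touching the end of the signal, for example a zero tail $f_{T-K},\dots,f_{T-1}$, does not actually destroy identifiability, because no nonzero entries lie beyond it. Read literally, the ``only if'' direction therefore fails for trailing runs. Before writing the proof, I would make explicit that a ``run'' in Definition~\ref{def:signal-class} means a block of $K$ zeros flanked by nonzero entries (equivalently, interior gaps in the support). With that convention, the two constructions above settle the equivalence.
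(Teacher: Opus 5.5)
Your argument is correct, and your caveat about trailing runs points to a real defect in the statement, one the paper's own proof misses. For ($\Rightarrow$) the paper uses exactly your construction ($g_i=f_i$ for $i<s+K$, $g_i=\mathrm{e}^{\mathrm{i}\varphi}f_i$ for $i\ge s+K$) and then simply asserts $g\neq f$. That assertion fails whenever $f_i=0$ for all $i\ge s+K$, because then $g=f$. A concrete counterexample: for $K=T-1$ the $K$-band is all of $Z$, so $f=(1,0,\dots,0)$ is determined by it, yet $f\notin\mathcal{S}_{T-1}$. The ``only if'' halves of Theorems~\ref{thm:exact-rec-algebraic-est}, \ref{thm:exact-rec-eigen-est} and \ref{thm:exact-recovery-quantum-phaselift-estimator} cite this argument and inherit the same problem.

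Your repair is the right one, but phrase it as ``two nonzero entries separated by at least $K$ zeros'' (equivalently, consecutive support indices with $i_{k+1}-i_k>K$). Avoid ``$K$ zeros flanked by nonzero entries'': read as immediate adjacency, it would not count an interior run of $K+1$ zeros, which certainly breaks identifiability. Your actual construction (a run that ``separates nonzero entries on both sides'') already uses the correct reading.

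For ($\Leftarrow$) the routes differ. The paper gives no direct argument and instead defers to the exact-recovery theorems of the three estimators:
\begin{itemize}
\item the algebraic estimator's induction propagates $\arg f_i$ from $f_0=1$ through a nonzero neighbour in $\{i-K,\dots,i-1\}$;
\item the least-squares proof builds the graph on $\operatorname{supp}(f)$ with edges $1\le|i-j|\le K$ and uses Lemma~\ref{lem:singular-submatrix-column-dependence} to show the band determines $Z$ among \emph{all} PSD matrices.
\end{itemize}

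Your chain along consecutive support indices is the rank-one specialization of that graph argument, and it is fully self-contained. It also handles an arbitrary competitor $g$ directly. Through the algebraic or eigenvector estimators, one only learns that the algorithm maps $\operatorname{band}_K(ff^\dagger)$ back to $f$. Concluding uniqueness from that still needs an unstated step: any $g$ with the same band has the same support, hence $g\in\mathcal{S}_K$, and one also needs $g_0=1$. Your route also avoids algorithmic details such as zero-weight candidates carrying $\arg 0$.

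The paper's route buys something in return. It shows that efficient estimators attain the identifiability limit. Its least-squares version gives the stronger uniqueness over the whole PSD cone, which is what Lemma~\ref{lemma:injectivity-kernel-equivalence} and the subsequent stability analysis actually rely on.
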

\begin{proof}
    ($\Leftarrow$) If $f \in \mathcal{S}_K$, then it can be uniquely reconstructed from the $K$-band of $Z$. We provide a constructive proof of this in Theorem~\ref{thm:exact-rec-algebraic-est}, Theorem~\ref{thm:exact-rec-eigen-est}, and Theorem~\ref{thm:exact-recovery-quantum-phaselift-estimator}, where we show that our estimators recover $f \in \mathcal{S}_K$ exactly using only $K$-band entries of $Z$.

    ($\Rightarrow$) We prove the contrapositive: if $f \notin \mathcal{S}_K$, then $f$ cannot be uniquely determined from the $K$-band of $Z$ alone. Suppose $f$ contains a run of $K$ consecutive zeros starting at index $s \ge 1$. That is, $f_i = 0$ for all $i \in \{s, \ldots, s+K-1\}$. We construct a different signal $g \neq f$ that produces the exact same $K$-band on its lifted matrix. Let $\varphi \in (0, 2\pi)$ be an arbitrary phase and define $g$ as:
        \begin{equation}
        g_i = \begin{cases} 
            f_i & \text{for } i < s+K, \\
            \mathrm{e}^{\mathrm{i}\varphi} f_i & \text{for } i \ge s+K.
        \end{cases}
    \end{equation}
    
Let $G \coloneqq gg^\dagger$. For any pair of indices $(i, j)$ such that $|i-j| \leq K$, either $(i,j)< s+K$, or $(i,j)\ge s+K$, or the pair straddles the zero run; in all cases, we have
    \begin{align}
        G_{ij}=
\begin{cases}
f_i\overline{f_j}, & i,j < s+K,\\
\mathrm{e}^{\mathrm{i}\varphi}f_i\,\overline{\mathrm{e}^{\mathrm{i}\varphi}f_j}=f_i\overline{f_j}, & i,j\ge s+K,\\
0, & \text{otherwise}.
\end{cases}
    \end{align}

Thus, $G$ and $Z$ are identical on the $K$-band, but $g \neq f$. This implies that any signal with a run of $K$ zeros cannot be uniquely determined just from the $K$-band of its lifted matrix. This concludes the proof.
\end{proof}

The lemma above establishes the theoretical limits of recoverability. This naturally raises two practical questions:
\begin{enumerate}
    \item Do signals of physical interest, specifically, time-correlation functions $f(t) = \langle \psi | \mathrm{e}^{-\mathrm{i} H t} | \psi\rangle$, actually contain consecutive zeros?
    \item If so, how can we efficiently determine the length of these zero runs for an unknown signal to choose the correct bandwidth $K$?
\end{enumerate}
We address these questions in the following subsections.

\subsection{Example of a Signal with Consecutive Zeros}

First, we confirm that the signals with consecutive zeros are not pathological edge cases but can arise even for simple quantum Hamiltonians. 
The following lemma illustrates this for an \emph{integer Hamiltonian} with a maximally mixed initial state, where the resulting discrete signal has long runs of zeros.

\begin{lemma}[Integer Hamiltonian Example]\label{lem:integer-H} 
    Let a system of $n$ qubits with dimension $N=2^n$ evolve under the Integer Hamiltonian:
    \begin{equation}
        H = \sum_{j=0}^{n-1} 2^j \left( \frac{I - Z_j}{2} \right).
    \end{equation}
    If initialized in the maximally mixed state $\rho = \frac{1}{N}I$ with dimension $N = W+1$, the discrete signal $f(t) = \operatorname{Tr}[\rho e^{-i H \frac{2\pi}{N} t}]$ contains a run of $W$ consecutive zeros at integers $t = 1, 2, \ldots, W$ and exhibits a revival at $t=W+1$, and this pattern repeats periodically. 
\end{lemma}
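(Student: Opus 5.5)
The plan is to diagonalize $H$ explicitly and reduce the signal to a geometric sum of roots of unity. Each term $2^j\frac{I-Z_j}{2}$ is diagonal in the computational basis. On a basis state $|b_{n-1}\cdots b_0\rangle$ it acts as multiplication by $2^j b_j$. Hence $H|x\rangle = x|x\rangle$, where $x=\sum_j 2^j b_j\in\{0,1,\ldots,N-1\}$ is the integer encoded by the bitstring. The spectrum of $H$ is therefore exactly $\{0,1,\ldots,N-1\}$, each eigenvalue simple, which justifies the name ``integer Hamiltonian''.

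With $\rho=I/N$ the trace is taken in this eigenbasis:
\begin{equation}
f(t)=\frac{1}{N}\sum_{x=0}^{N-1}\mathrm{e}^{-\mathrm{i}\frac{2\pi}{N}xt}=\frac{1}{N}\sum_{x=0}^{N-1}\omega^{x},\qquad \omega\coloneqq \mathrm{e}^{-\mathrm{i}2\pi t/N}.
\end{equation}
For integer $t$ there are two cases. If $N\mid t$, then $\omega=1$ and $f(t)=1$; this is the revival. Otherwise $\omega\neq 1$ but $\omega^N=\mathrm{e}^{-\mathrm{i}2\pi t}=1$, so the geometric series gives $f(t)=\frac{1}{N}\frac{1-\omega^N}{1-\omega}=0$.

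Setting $W=N-1$, the zeros occur at $t=1,\ldots,W$ and the revival at $t=W+1=N$. Since $f(t+N)=f(t)$ for integer $t$ (each factor $\mathrm{e}^{-\mathrm{i}2\pi x}$ equals $1$), the pattern repeats periodically, as claimed.

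No step is a real obstacle. The only point needing care is reading $f(t)=\operatorname{Tr}[\rho\,\mathrm{e}^{-\mathrm{i}Ht\cdot 2\pi/N}]$ as the mixed-state analogue of the Loschmidt amplitude. If a pure-state version is wanted, one can note that the same computation applies verbatim to the uniform superposition $|\psi\rangle=\frac{1}{\sqrt N}\sum_x|x\rangle$, since it has the same overlaps $|\langle x|\psi\rangle|^2=1/N$ with the eigenbasis. I would add this as a remark, so that the example sits inside the signal-recovery framework of Definition~\ref{def:signal-recovery} and, via Lemma~\ref{lem:identifiability}, shows that such signals require $K\ge W+1$.
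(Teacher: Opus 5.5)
Your proof is correct and matches the paper's argument essentially step for step: you diagonalize $H$ in the computational basis to get the spectrum $\{0,\ldots,N-1\}$, reduce $f(t)$ to a geometric sum of $N$-th roots of unity, and split on whether $N \mid t$. Your explicit check that $f(t+N)=f(t)$, and your remark that the uniform superposition $|\psi\rangle = N^{-1/2}\sum_x |x\rangle$ gives the same signal (which places the example in the paper's pure-state framework), are small, correct additions that the paper leaves implicit.
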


\begin{proof}
    The proof is deferred to Appendix~\ref{app:integer-H-proof}.
\end{proof}

This example demonstrates that even simple quantum dynamics can generate signals where $K$-banded observations fail if $K$ does not 
satisfy $K\geq W+1$, where $W$ is the length of the largest run of consecutive zeros in the signal $f$.

\subsection{Adaptive Bandwidth Selection}
\label{sec:adaptive-bandwidth-selection}

The relationship between the width of zero runs $W$ and recoverability motivates a practical, adaptive strategy for selecting $K$. Rather than fixing the bandwidth \emph{a priori}, we can determine it experimentally by inspecting the magnitude profile of the signal.
Recall from Section~\ref{sec:measurement-diag} that the diagonal entries of the lifted matrix $Z$ satisfy $Z_{ii} = |f_i|^2$ and can be estimated using simple circuits without requiring any controlled time evolutions. 
Recall that we refer to these individual estimates as $\widehat{Z}_{ii}$ (see Eq.~\eqref{eq:unbiased-est-diagonal-element}). 
This allows for a resource-efficient two-stage protocol:
\begin{enumerate}
    \item \textbf{Diagonal Scan:} First, estimate only the diagonal entries to obtain the magnitude profile $\{\widehat{Z}_{ii}\}_{i=0}^{T-1}$. These measurements are less costly and provide a map of the support of the underlying signal at a low cost.
    \item \textbf{Determine Bandwidth:} Inspect the matrix diagonal to estimate $W$, i.e., the largest run of consecutive zeros. 
    We then set the $K \geq W + 1$.
\end{enumerate}
This strategy ensures that the subsequent off-diagonal measurements will be sufficient to bridge all gaps in the signal, guaranteeing unique recoverability while minimizing the experimental cost of implementing the controlled time evolutions with large time differences.

In a \textbf{realistic scenario}, having access to finite circuit shots induces shot noise in the estimation of the matrix elements, making an exact zero indistinguishable from any value lower than its shot noise. Therefore, one needs to adapt the ideal scenario above to identify sequences of indices where the estimates $\widehat{Z}_{ii}$ drop below a chosen noise threshold $\chi > 0$. 
In addition, instead of immediately fixing $K=W+1$, we will show later that selecting $K \geq W + c$, where $c$ is a small constant $c\geq 1$ selected heuristically, can help improve the shot counts of the signal recovery at a small additional circuit complexity cost.
We will see this behaviour explicitly when we discuss our numerical simulations in Section~\ref{sec:numerical-simulations}.

\subsection{\texorpdfstring{$K$}{K}-Banded Observation Matrix}
\label{sec:k-banded-observation-matrix}

Once the bandwidth $K$ is selected, we proceed to measure the off-diagonal entries using quantum circuits shown in Section~\ref{sec:measurement-off-diag}. 
We aggregate the experimentally estimated $K$-band of $Z$ into a single Hermitian matrix $\widehat{Z}^{(K)}_{\vphantom{a}}$. Only entries $Z_{ij}$ with $|i-j|\le K$ are measured. 
As we do not observe the entries of $Z$ with $|i-j| > K$, we initialize our estimates of these $Z_{ij}$ to zero; these estimates will be revised in post-processing. 
Thus, our initial estimated entries are recorded as
\begin{equation}\label{def:K-banded-observation-matrix}
    \widehat{Z}_{ij}^{\,(K)} \coloneqq
    \begin{cases}
        Z_{ij} + \varepsilon_{ij}, & |i-j| \le K,\\
        0, & |i-j| > K,
    \end{cases}
\end{equation}
where $\varepsilon_{ij}$ denotes statistical or experimental noise.

The construction of $\widehat{Z}^{(K)}_{\vphantom{a}}$ highlights a fundamental design trade-off between experimental cost and reconstruction stability. 
As mentioned before, from an information-theoretic standpoint, for any signal $f \in \mathcal{S}_K$, the noiseless $K$-band contains sufficient information to uniquely identify $f$. 
Specifically, if the longest run of zeros in $f$ has length $W$, then a bandwidth of $K_{\min} = W + 1$ is strictly necessary and sufficient to ensure connectivity of the phase information. 
However, limiting measurements to this theoretical minimum ($K = K_{\min}$) empirically yields a reconstruction that is highly sensitive to noise. 
In such a "barely connected" regime, there is typically only a single path of non-zero correlations linking any given $f_i$ to the reference $f_0$; consequently, if any measurement along this path is inaccurate, the error propagates to all subsequent indices.

By selecting a bandwidth $K > K_{\min}$, we introduce redundancy into the observation matrix, and this redundancy is crucial for noise mitigation. 
This means we measure correlations $f_i \overline{f_j}$ not just for immediate neighbors, but also for more distant pairs. 
This extra connectivity creates multiple independent pathways for phase propagation that stabilize the final global estimate against measurement errors. 
Therefore, while minimizing $K$ reduces the circuit depth (since controlled time-evolutions scale with $|t_i - t_j|$), increasing $K$ improves the reliability of the estimate. 
We exploit this trade-off between efficiency and statistical robustness in the construction of our estimators in Section~\ref{sec:estimators} and demonstrate it numerically in Section~\ref{sec:numerical-simulations}.

\section{Phaselift and Estimators}
\label{sec:estimators}

With the measurement model in place, the signal recovery problem (Definition~\ref{def:signal-recovery}) reduces to a structured phaselift problem: given a $K$-banded observation matrix $\widehat{Z}^{(K)}_{\vphantom{a}}$, approximating $Z = f f^{\dagger}$ on its $K$-band, reconstruct $f \in \mathcal{S}_K$.

We are now in the position to present the three estimators, that is, the block-by-block algebraic estimator, the block-by-block eigenvector estimator, and the least-squares estimator, that we introduced in the summary (Section~\ref{sec:summary}) in more detail.

\subsection{Block-By-Block Algebraic Estimator}
\label{sec:block-by-block-algebraic-estimator}

 The block-by-block algebraic estimator reconstructs the signal $f$ sequentially from index $0$ to $T-1$, using only simple algebraic identities that relate the entries on the $K$-band of $Z$ to the magnitudes and phases of $f$.

The estimator begins by fixing the global phase. Since we know the first element of the true signal $f$ exactly, that is, $f_0 = 1$, the estimator sets $\widehat{f}_0 = 1$. This fixes the global phase. 

Then for each subsequent index $i$, the estimator estimates the magnitude of $f_i$ directly from the diagonal of $\widehat{Z}^{(K)}_{\vphantom{a}}$ (see Figure~\ref{fig:algebraic-estimator} for an illustration with $K=2$):
\begin{equation}
    |\widehat{f}_i| = \sqrt{|\widehat Z_{ii}^{(K)}|},
\end{equation}
where taking the absolute value of $\widehat Z_{ii}$ ensures a non-negative estimate of the magnitude. 
This step accounts for the possibility that when the true value $Z_{ii}$ is close to zero, statistical noise may cause the observed value $\widehat Z_{ii}^{(K)}$ to take on small negative values.

Furthermore, the estimator uses the following algebraic identity to estimate the phase of $f_i$:
\begin{equation}
    Z_{ji} = f_j \overline{f_i} \quad\Longrightarrow\quad
\arg(f_i) = \arg(f_j) - \arg(Z_{ji})~~(\operatorname{mod} 2\pi),
\end{equation}
which relates the phase of $f_i$ to that of any neighboring nonzero entry $f_j$.
This identity motivates a collection of initial candidate phase estimates for $f_i$, one for each previously reconstructed index $j$ in the $K$-size window $\{i-K, \ldots, i-1\}$:
\begin{equation}
    \widehat\theta_i^{\,\,(j)}
    = \arg(\widehat f_j) - \arg(\widehat Z_{ji}^{\,(K)})~~(\operatorname{mod} 2\pi).
    \label{eq:candidate-phase-estimate}
\end{equation}

To combine multiple candidate estimates into a single robust estimate, the estimator performs a weighted circular mean of these candidate estimates. 
Due to the fact that phases are periodic, we average their corresponding unit phasors rather than raw angles. Let $w_j \geq 0$ be weights with $\sum_j w_j = 1$. 
For our case, we use weights proportional to the magnitudes of the observed entries, that is, 
\begin{equation}
    w_j \propto |\widehat{Z}_{ji}|, \qquad \sum_{j=\max(0,i-K)}^{i-1} w_j = 1.
\end{equation}
The final robust estimate for $\arg(f_i)$ is then
\begin{equation}
    \arg(\widehat{f}_i) =\arg\Big(\sum_{j=\max(0,i-K)}^{i-1}
w_j \mathrm{e}^{\mathrm{i} \widehat\theta_i^{\,\,(j)}}\Big)\label{eq:weighted-circular-average},
\end{equation}
which is simply the argument of the weighted phasor sum.

The reasoning behind weighing by $|\widehat{Z}_{ji}|$ is that it captures the reliability of each measurement into the phase estimate. 
Since the magnitude of $\widehat{Z}_{ji}$ reflects the strength of the correlation between $f_j$ and $f_i$, larger values correspond to observations with higher signal-to-noise ratios and thus greater statistical confidence. 
The weighted phasor sum in~\eqref{eq:weighted-circular-average} can therefore be interpreted as computing a weighted mean direction on the complex unit circle, where each term contributes proportionally to its reliability.

\begin{figure}
    \centering
    \includegraphics[width=\linewidth]{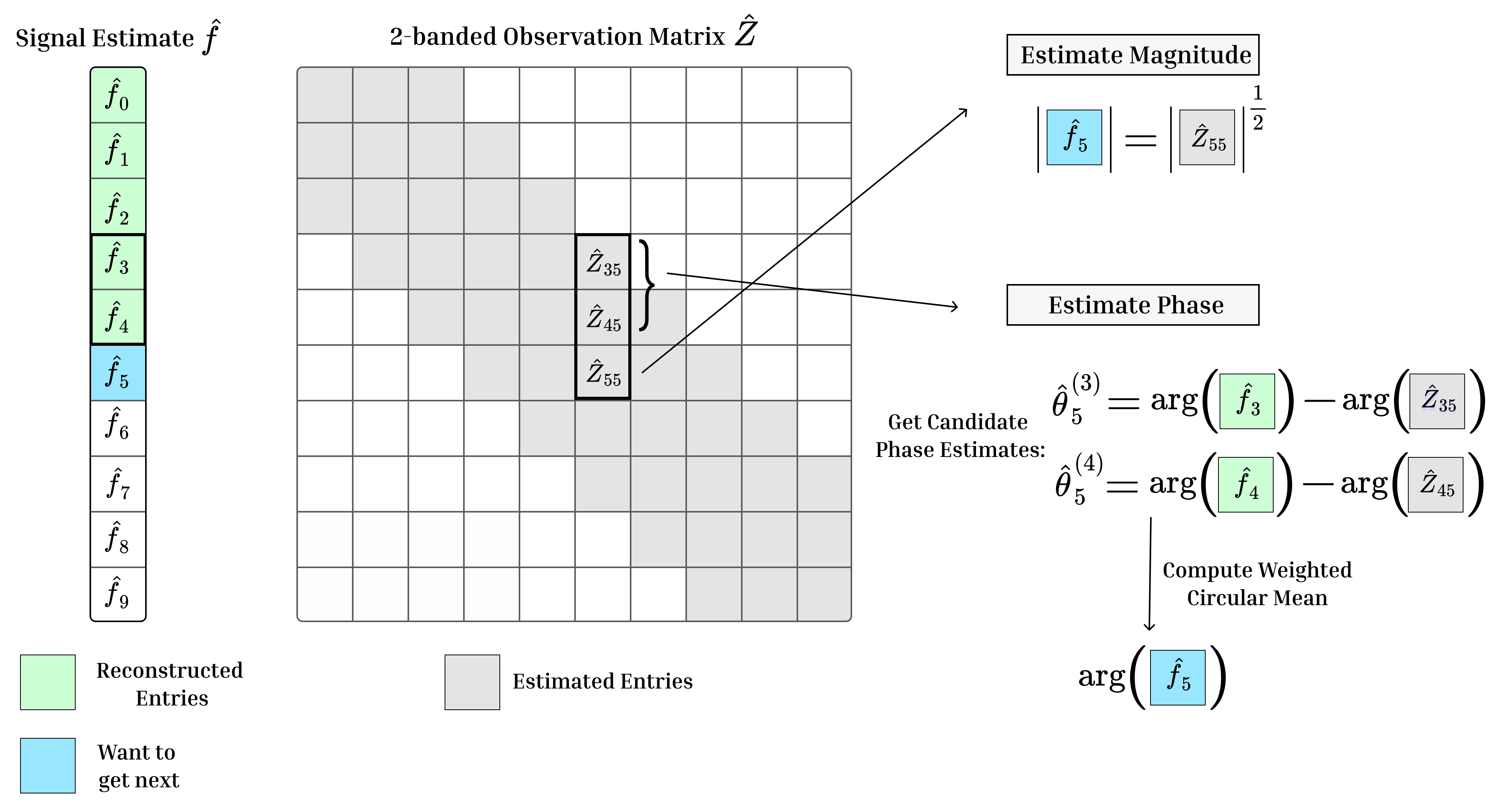}
    \caption{Schematic of the Block-by-Block Algebraic Estimator (shown for $K=2$).}
    \label{fig:algebraic-estimator}
\end{figure}

For the first few indices $i \leq K$, their respective sliding windows contain fewer than $K$ elements, but the same weighted averaging applies over whatever indices are available in these windows. 
To this end, the pseudocode of this estimator is given by Algorithm~\ref{algo:algebraic}.

\begin{algorithm}[H]
\caption{Block-by-Block Algebraic Estimator}
\label{algo:algebraic} 
\begin{algorithmic}[1]
\State \textbf{Input:} $K$-banded observation matrix $\widehat{Z}^{(K)}_{\vphantom{a}} \in \mathbb{C}^{T \times T}$
\State \textbf{Output:} Signal estimate $\widehat{f}\in \mathbb{C}^T$

\State Set phase reference: $\widehat{f}_0 \gets 1$
\For{$i = 1$ to $T-1$}
    \State Estimate magnitude: 
           $|\widehat{f}_i| \gets \sqrt{|\widehat{Z}_{ii}^{\,(K)}|}$
    \State Estimate phase from previously reconstructed entries:
        \For{$j = \max(0, i-K)$ to $i-1$}
            \State Phase estimate: 
                   $\widehat{\theta}_i^{\,\,(j)} \gets 
                   \arg(\widehat{f}_j) - \arg(\widehat{Z}_{ji}^{\,(K)})~~(\operatorname{mod} 2\pi)$
            \State Weight: $w_j \gets |\widehat{Z}_{ji}|$
        \EndFor
        \State $\widehat{\theta}_i \gets$ weighted circular mean of 
               $\{\widehat{\theta}_i^{\,\,(j)}\}_j$ as shown in~\eqref{eq:weighted-circular-average}
    \State $\widehat{f}_i \gets |\widehat{f}_i|\,  \mathrm{e}^{\mathrm{i}\widehat{\theta}_i}$
\EndFor
\end{algorithmic}
\end{algorithm}

\subsubsection{Exact Recovery in the Noiseless Case}

In the noiseless regime, where the $K$-band of $\widehat{Z}^{(K)}_{\vphantom{a}}$ coincides with that of $Z$, the algebraic estimator recovers the signal exactly whenever $f$ has no run of $K$ consecutive zeros.

\begin{theorem}[Exact Recovery for Block-By-Block Algebraic Estimator in the Noiseless case]\label{thm:exact-rec-algebraic-est}
The block-by-block algebraic estimator (Algorithm~\ref{algo:algebraic}) guarantees exact recovery of an unknown signal $f$ if and only if $f \in \mathcal{S}_K$, that is, $f$ contains no run of $K$ consecutive zeros.
\end{theorem}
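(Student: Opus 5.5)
The plan is to prove the two directions separately. The forward direction is a strong induction on the index $i$ in Algorithm~\ref{algo:algebraic}. The converse reduces to the non-identifiability construction in the ($\Rightarrow$) part of Lemma~\ref{lem:identifiability}.

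\emph{Sufficiency ($f\in\mathcal{S}_K$).} Work in the noiseless setting, so that $\widehat Z^{(K)}_{ji}=Z_{ji}=f_j\overline{f_i}$ for $|i-j|\le K$. I would prove by induction that $\widehat f_i=f_i$ for every $i$. The base case is $\widehat f_0=1=f_0$.

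For the inductive step, fix $i\ge 1$ and assume $\widehat f_j=f_j$ for all $j<i$. The magnitude step gives $|\widehat f_i|=\sqrt{|Z_{ii}|}=|f_i|$.
\begin{itemize}
\item If $f_i=0$, then $\widehat f_i=0$ whatever phase the circular mean returns. Every weight $|Z_{ji}|$ vanishes here, so I would note explicitly that any convention for the undefined mean is harmless.
\item If $f_i\neq0$, each window index $j\in\{\max(0,i-K),\dots,i-1\}$ has weight $w_j\propto|f_j||f_i|$. Indices with $f_j=0$ therefore get weight zero, and their undefined candidate phases drop out of~\eqref{eq:weighted-circular-average}.
\item Every $j$ with $f_j\ne0$ gives the candidate $\arg(f_j)-\arg(f_j\overline{f_i})=\arg(f_i)$ modulo $2\pi$, using $\widehat f_j=f_j$.
\item The phasor sum therefore equals $\big(\sum_j w_j\big)\mathrm{e}^{\mathrm{i}\arg f_i}$. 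Its argument is exactly $\arg f_i$, provided the sum is a strictly positive multiple.
\end{itemize}

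The only real content, and the step needing care, is showing that the window contains at least one $j$ with $f_j\neq 0$. If $i\le K$, the window contains $j=0$ and $f_0=1$. If $i>K$, the window consists of the $K$ consecutive indices $i-K,\dots,i-1$. If all of these were zero, $f$ would have a run of $K$ consecutive zeros, contradicting $f\in\mathcal{S}_K$. Hence $\widehat f_i=f_i$, which completes the induction.

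\emph{Necessity ($f\notin\mathcal{S}_K$).} Since $f_0=1$, any run of $K$ zeros starts at some $s\ge1$. Lemma~\ref{lem:identifiability} then supplies $g\neq f$, with $g_0=1$ and $g_i=\mathrm{e}^{\mathrm{i}\varphi}f_i$ for $i\ge s+K$, whose lifted matrix has the same $K$-band as $Z$.
\begin{itemize}
\item Algorithm~\ref{algo:algebraic} is a deterministic function of the $K$-band alone, so it returns the same output on $f$ and on $g$.
\item It therefore cannot recover both signals exactly, so exact recovery is not guaranteed for signals outside $\mathcal{S}_K$.
\item Concretely, one can see where the algorithm fails on $f$ itself. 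At the first index after the run with $f_i\ne0$, every weight in its window vanishes, so the assigned phase is set by convention and is not tied to $\arg f_i$. Choosing $\varphi$ so that $g$ is consistent with that convention shows that $f$ is not recovered.
\end{itemize}
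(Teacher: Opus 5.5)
Your route matches the paper's: induction on $i$ for sufficiency, and the construction of Lemma~\ref{lem:identifiability} for necessity. Your sufficiency half is correct, and it is more careful than the paper's because you explicitly handle the zero-weight candidates and the truncated windows $i\le K$.

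\textbf{Gap in the necessity half.} The step ``Lemma~\ref{lem:identifiability} supplies $g\neq f$'' fails when the run of $K$ zeros extends to the end of the signal, i.e.\ $f_i=0$ for all $i\ge s$. In that case $g_i=\mathrm{e}^{\mathrm{i}\varphi}f_i=f_i$ for every $i$, so $g=f$. If every run of $K$ zeros in $f$ is trailing, there is no other run you could use instead. The ``first index after the run with $f_i\neq0$'' in your third bullet then also does not exist.

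This cannot be patched, because the statement itself is false for such signals. Take $f=(1,0,\dots,0)$, which is, e.g., the signal of Lemma~\ref{lem:integer-H} with $N\ge T$. It lies outside $\mathcal{S}_K$ for every admissible $K\le T-1$. Yet Algorithm~\ref{algo:algebraic} outputs $\widehat f_i=\sqrt{|Z_{ii}|}\,\mathrm{e}^{\mathrm{i}\widehat\theta_i}=0=f_i$ for all $i\ge 1$, whatever phases it assigns. The paper's proof of Lemma~\ref{lem:identifiability} has the same blind spot.

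What you can actually prove is the sharp characterization: exact recovery is guaranteed if and only if no run of $K$ consecutive zeros is followed by a nonzero entry.
\begin{itemize}
\item Your sufficiency induction goes through verbatim under this weaker hypothesis. If $f_i\neq 0$ and $i>K$, the indices $i-K,\dots,i-1$ cannot all vanish.
\item Your necessity argument is valid exactly when such a later nonzero entry exists.
\end{itemize}

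\textbf{Overclaim in the third bullet.} Drop the ``concretely'' bullet. Let $i^*$ be the first nonzero index after the run. The phase $\theta^*$ the algorithm assigns at $i^*$, where all weights vanish, is a fixed function of the band data. If $\theta^*$ happens to equal $\arg f_{i^*}$, then $f$ is recovered at $i^*$, possibly everywhere, and no $\varphi\in(0,2\pi)$ makes $g$ consistent with the convention. So the algorithm need not fail on $f$ itself. What is true, and what your first two bullets already establish, is that the output depends only on the $K$-band, so it cannot equal both $f$ and $g$. That is precisely the sense in which recovery is not guaranteed outside the class.
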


\begin{proof}
    The proof is deferred to Appendix~\ref{app:proof-exact-recovery-algebraic-estimator}.
\end{proof}

\subsubsection{Stability in the Noisy Case}

The algebraic estimator above is stated for a general bandwidth $K$, where each new entry is inferred from a sliding window of the previous $K$ estimates. For the stability analysis, we focus on the simplest regime: signals in $\mathcal{S}_1$, meaning every entry of $f$ is nonzero. 
In this case, only the $1$-band of $Z$ needs to be measured, and each step of the recursion depends solely on the immediately preceding entry. 
This eliminates the additional combinatorial effects of larger windows and isolates the core phenomenon: the estimator propagates information forward one index at a time, so any noise in the measured matrix elements can accumulate as the reconstruction proceeds. 

The main question is, therefore, how accurately the $1$-band entries of $Z$ must be estimated to ensure that this error propagation remains controlled. 
The next theorem answers this, deriving explicit bounds on the number of circuit repetitions needed to guarantee that the final estimate $\widehat f$ remains within a prescribed error tolerance.

\begin{theorem}[Stability of Block-by-Block Algebraic Estimator]\label{thm:stability-base-case}
Let $f \in \mathbb{C}^T$ be an unknown signal as defined in~\eqref{def:discrete-time-signal} such that all its entries are nonzero, that is, $f \in \mathcal{S}_1$. Define $\gamma \coloneqq \min_i |f_i|^2 > 0$ and let $\eta > 0$. Suppose we estimate all the elements of the $1$-band of $Z = f f^\dagger$ as follows:
\begin{itemize}
    \item each diagonal element $Z_{ii}$ is estimated using $N_{\mathrm{diag}}$ repetitions of the circuit presented in Section~\ref{sec:measurement-diag};
    \item each real and imaginary part of each first off-diagonal element $Z_{i,i+1}$ is estimated using $N_{\mathrm{off}}$ repetitions of the corresponding circuit presented in Section~\ref{sec:measurement-off-diag}. 
\end{itemize}
If 
\begin{equation}\label{eq:circuit-repetitions-bound}
    N_{\mathrm{diag}} \ge \frac{8T}{\gamma \eta^2} \log\!\left(\frac{4T}{\delta}\right), \qquad N_{\mathrm{off}} \geq \frac{32T(2T^3 - 3T^2 + T)}{3\gamma \eta^4}
    \log\!\left(\frac{8(T-1)}{\delta}\right),
\end{equation}
then with probability at least $1 - \delta$, the block-by-block algebraic estimator (Algorithm~\ref{algo:algebraic}) produces an estimate $\widehat f$ satisfying
\begin{equation}
    \left \|\widehat f - f \right \|_2 \le \eta.
\end{equation}
\end{theorem}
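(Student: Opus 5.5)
The proof splits the error into a magnitude part and a phase part, controls each by a deterministic perturbation bound, and then converts the resulting accuracy requirements into shot counts via Hoeffding and a union bound. For $K=1$, Algorithm~\ref{algo:algebraic} reduces to $|\widehat f_i| = \sqrt{|\widehat Z_{ii}|}$ and $\widehat\theta_i = \widehat\theta_{i-1} - \arg(\widehat Z_{i-1,i})$ with $\widehat\theta_0 = 0$. Write $\widehat f_i = r_i \mathrm{e}^{\mathrm{i}\widehat\theta_i}$ and $f_i = |f_i|\mathrm{e}^{\mathrm{i}\theta_i}$. The triangle inequality gives
\begin{equation*}
|\widehat f_i - f_i| \le \bigl| r_i - |f_i| \bigr| + |f_i|\,\bigl|\mathrm{e}^{\mathrm{i}\widehat\theta_i} - \mathrm{e}^{\mathrm{i}\theta_i}\bigr| \le \bigl| r_i - |f_i| \bigr| + |\widehat\theta_i - \theta_i|,
\end{equation*}
using $|f_i| \le 1$. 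The main deterministic step is to bound each of these two terms in terms of per-entry errors $\varepsilon$ (diagonal) and $\varepsilon'$ (off-diagonal).

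\textbf{Magnitudes.} Suppose $|\widehat Z_{ii} - Z_{ii}| \le \varepsilon$. Then
\begin{equation*}
\bigl| \sqrt{|\widehat Z_{ii}|} - \sqrt{Z_{ii}} \bigr| \le \frac{\varepsilon}{\sqrt{Z_{ii}}} \le \frac{\varepsilon}{\sqrt\gamma},
\end{equation*}
by $|\sqrt a - \sqrt b| = |a-b|/(\sqrt a + \sqrt b)$. Summing the squares over $i$, the magnitude part of $\|\widehat f - f\|_2^2$ is at most $T\varepsilon^2/\gamma$.

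\textbf{Phases.} For each $i$, $Z_{i-1,i}$ has modulus at least $\gamma$. If $|\widehat Z_{i-1,i} - Z_{i-1,i}| \le \varepsilon'$ with $\varepsilon' \le \gamma/2$, then
\begin{equation*}
|\arg\widehat Z_{i-1,i} - \arg Z_{i-1,i}| \le \frac{\pi}{2}\cdot\frac{\varepsilon'}{|Z_{i-1,i}|} \le \frac{c\,\varepsilon'}{\gamma},
\end{equation*}
using that the angular deviation is at most $\arcsin(\varepsilon'/|z|)$. The phase recursion telescopes, so the errors add: $|\widehat\theta_i - \theta_i| \le i\,c\,\varepsilon'/\gamma$. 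Summing squares gives $\sum_i i^2 = (T-1)T(2T-1)/6 = (2T^3 - 3T^2 + T)/6$, which is exactly the polynomial appearing in~\eqref{eq:circuit-repetitions-bound}. The phase contribution to the error is therefore at most $(c\,\varepsilon'/\gamma)^2 (2T^3 - 3T^2 + T)/6$.

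\textbf{Choice of tolerances.} Using $(a+b)^2 \le 2a^2 + 2b^2$, it suffices to make each contribution at most $\eta^2/4$. This requires $\varepsilon \le \eta\sqrt\gamma/(2\sqrt{T})$, and $\varepsilon'$ of order $\eta\gamma/\sqrt{T^3}$.

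\textbf{Probabilistic step.} The diagonal estimator is an average of Bernoulli variables, so Hoeffding gives
\begin{equation*}
\Pr\bigl[|\widehat Z_{ii} - Z_{ii}| > \varepsilon\bigr] \le 2\mathrm{e}^{-2N_{\mathrm{diag}}\varepsilon^2}.
\end{equation*}
The real and imaginary off-diagonal estimators are averages of variables in $\{-1,0,1\}$, so
\begin{equation*}
\Pr\bigl[\text{deviation} > t\bigr] \le 2\mathrm{e}^{-N_{\mathrm{off}} t^2/2}.
\end{equation*}
Bounding the real and imaginary parts each by $\varepsilon'/\sqrt2$ bounds the complex error by $\varepsilon'$. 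A union bound over $T$ diagonal events and $2(T-1)$ off-diagonal events, with failure budget $\delta/2$ for each group, produces the logarithmic factors $\log(4T/\delta)$ and $\log(8(T-1)/\delta)$. Substituting the tolerances yields the stated $N_{\mathrm{diag}}$.

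\textbf{Main obstacle.} The delicate part is the phase-perturbation step and matching the exact constants and $\eta$-powers in $N_{\mathrm{off}}$. The stated bound scales as $\eta^{-4}$ and $\gamma^{-1}$, which suggests the authors tie $\varepsilon'$ to $\varepsilon$ through the ratio $\varepsilon'/\varepsilon$ in the informal statement. That is, they bound the argument error by $\varepsilon'/\varepsilon$-type quantities, normalising by $\sqrt{\gamma}\cdot\varepsilon$-sized magnitudes, rather than by $\varepsilon'/\gamma$. My plan is to first do the clean version above with $\varepsilon'/\gamma$. I would then re-derive the phase lemma in the form $|\arg\widehat z - \arg z| \lesssim \varepsilon'/\varepsilon$, valid when $|z| \gtrsim \varepsilon$. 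Finally I would choose $\varepsilon = \eta\sqrt{\gamma}/(2\sqrt T)$ and require $\varepsilon'/\varepsilon \lesssim \eta/T^{3/2}$, i.e. $\varepsilon'^2 \sim \gamma\eta^4/T^4$, which reproduces the $T\cdot(2T^3 - 3T^2 + T)/(\gamma\eta^4)$ scaling with constant $32/3$.
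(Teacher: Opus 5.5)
Your plan follows the paper's proof. Both use the magnitude/phase split with $(a+b)^2\le 2a^2+2b^2$, the bound $\varepsilon/\sqrt{Z_{ii}}$ on magnitudes, linear accumulation of per-step phase errors giving $\sum_i i^2=(2T^3-3T^2+T)/6$, and Hoeffding plus a union bound with budget $\delta/2$ per group. Your diagnosis of the $\gamma^{-1}\eta^{-4}$ scaling is also correct. The paper assumes $\gamma>\varepsilon$, so that $|Z_{i,i+1}|\ge\gamma>\varepsilon$. Its phase-perturbation lemma then gives a per-step error $\sqrt{2-2\sqrt{1-\varepsilon'^2/\varepsilon^2}}\le\sqrt2\,\varepsilon'/\varepsilon$. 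It takes $\varepsilon^2=\gamma\eta^2/(4T)$ and $\varepsilon'^2=3\gamma\eta^4/(16T(2T^3-3T^2+T))$, and inserts these into $N_{\mathrm{diag}}\ge 2\varepsilon^{-2}\log(4T/\delta)$ and $N_{\mathrm{off}}\ge 2\varepsilon'^{-2}\log(8(T-1)/\delta)$ from its concentration lemmas.

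Your final step needs two corrections.

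\textbf{No new lemma is needed.} Your $\varepsilon'/\varepsilon$ form needs $|z|\gtrsim\varepsilon$. Since $|Z_{i-1,i}|\ge\gamma$, this means $\varepsilon=\eta\sqrt\gamma/(2\sqrt T)<\gamma$, i.e.\ $\eta^2<4T\gamma$. That restriction is missing from the theorem statement, and the paper also imposes it only implicitly. In that regime your clean bound $\varepsilon'/\gamma$ is already the smaller one, so the paper's condition on $\varepsilon'$ implies yours up to constants.

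\textbf{The constant does not come out as $32/3$.} Your off-diagonal bound $2\mathrm{e}^{-Nt^2/2}$ is the correct one, since each shot contributes $0$ or $\pm1$. With the $\varepsilon'/\sqrt2$ split it gives $N_{\mathrm{off}}\ge 4\varepsilon'^{-2}\log(8(T-1)/\delta)$. That is twice the paper's requirement, whose factor $2$ comes from its element-wise concentration lemma treating these samples as $[0,1]$-valued Bernoulli variables. Copying the paper's bookkeeping therefore gives $64/3$, and your $\arcsin$ constant $\pi/2$ gives $8\pi^2/3$.

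To reach the stated $N_{\mathrm{off}}$ honestly, spend slack elsewhere. Your diagonal bound $2\mathrm{e}^{-2N\varepsilon^2}$ is four times sharper in the exponent than the paper's, so the stated $N_{\mathrm{diag}}$ already makes the magnitude contribution to $\|\widehat f-f\|_2^2$ at most $\eta^2/8$. Combined with your $\varepsilon'/\gamma$ phase bound, the stated $N_{\mathrm{off}}$ then yields $\|\widehat f-f\|_2\le\eta$ whenever $\eta^2\le(28/\pi^2)\,T\gamma$, together with $\varepsilon'<\gamma$ so the $\arcsin$ step applies. This is a regime condition of the same type as the paper's implicit $\eta^2<4T\gamma$.
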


\begin{proof}
    The proof is deferred to Appendix~\ref{app:stability-base-case}.
\end{proof}

\subsubsection{Query Complexity}

With the stability guarantee in place, we now turn to counting the total resource requirements. From~\eqref{eq:circuit-repetitions-bound}, we fix
\begin{equation}
    N_{\mathrm{diag}} = \left\lceil \frac{8T}{\gamma \eta^2} \log\!\left(\frac{4T}{\delta}\right) \right\rceil,
    \qquad
    N_{\mathrm{off}} = \left\lceil \frac{32T(2T^3 - 3T^2 + T)}{3\gamma \eta^4}
    \log\!\left(\frac{8(T-1)}{\delta}\right) \right\rceil,
    \label{eq:NdNoff-final-short}
\end{equation}
which ensures that $\|\widehat f - f\|_2 \le \eta$ with probability at least $1-\delta$.

Recall from Section~\ref{sec:lifted-representation} that the circuits used to estimate the entries of $Z$ rely on a small set of primitive unitaries (along with standard gates such as Hadamard, $R_z(\theta)$, and CNOT). 
These primitives are: 1) the state-preparation unitary $U_{\psi}$ and its inverse $U_{\psi}^\dagger$,  2) the time-evolution unitary $e^{\mathrm{i}H\Delta}$, and  3) the controlled time-evolution unitary, $\mathrm{controlled}\text{-}e^{\mathrm{i}H\Delta}$. 
We define the cost of our estimator in terms of the number of queries (query complexity) to these primitive unitaries. Note that a query to $U_{\psi}^{\dagger}$ is counted the same as a query to $U_{\psi}$, since it can be implemented by reversing the circuit of $U_{\psi}$.

Now recall that the sampled time points are $t_i = i \Delta$, where $i \in \{0, 1, \ldots T-1\}$. For estimating one diagonal element $Z_{ii}$, the circuit uses:
\begin{align}
    N_{\mathrm{diag}} \text{ queries each to } U_{\psi} \text{ and } U_{\psi}^{\dagger} & \implies 2 N_{\mathrm{diag}} \text{ queries to } U_{\psi},\\
    N_{\mathrm{diag}} \text{ queries to } \mathrm{e}^{-\mathrm{i}Ht_i} & \implies i N_{\mathrm{diag}} \text{ queries to } \mathrm{e}^{-\mathrm{i}H \Delta}.
\end{align}
Similarly, for estimating one off-diagonal element $Z_{i, i+1}$ (both real and imaginary parts combined), the circuits use:
\begin{align}
    2 N_{\mathrm{off}} \text{ queries each to } U_{\psi} \text{ and } U_{\psi}^{\dagger} & \implies 4N_{\mathrm{off}} \text{ queries to } U_{\psi},\\
    2 N_{\mathrm{off}} \text{ queries to } \mathrm{e}^{-\mathrm{i}Ht_i} & \implies 2 i N_{\mathrm{off}} \text{ queries to } \mathrm{e}^{-\mathrm{i}H \Delta}, \\
    2 N_{\mathrm{off}} \text{ queries to } \text{controlled-}\mathrm{e}^{-\mathrm{i}H(t_{i+1} - t_i)} & \implies 2 N_{\mathrm{off}} \text{ queries to } \text{controlled-}\mathrm{e}^{-\mathrm{i}H \Delta}.
\end{align}
Furthermore, there are $T-1$ diagonal (the first diagonal element is always 1 since $f_0 = 1$) and $T-1$ off-diagonal elements to be estimated. 
With that, the total queries to the aforementioned primitive unitaries are given in Table~\ref{tab:cost-summary-short}.
\begin{table}[h!]
\centering
\renewcommand{\arraystretch}{1.2}
\begin{tabular}{l l l}
\toprule
\textbf{Unitary} & \textbf{Source of calls} & \textbf{Total queries}\\
\midrule
$U_{\psi}$ or $U_{\psi}^{\dagger}$ & Diagonal + Off-diagonal & $2(T-1) N_{\mathrm{diag}} + 4(T-1) N_{\mathrm{off}}$\\
$\mathrm{e}^{-\mathrm{i}H\Delta}$ & Diagonal + Off-diagonal & $T(T-1) N_{\mathrm{diag}}/2 + (T-1)(T-2) N_{\mathrm{off}}$\\
$\mathrm{controlled}\text{-}\mathrm{e}^{-\mathrm{i}H\Delta}$ & Off-diagonal only & $2(T-1) N_{\mathrm{off}}$\\
\bottomrule
\end{tabular}
\caption{Total queries to the primitive unitaries required to estimate the $1$-band of $Z = f f^\dagger$.}
\label{tab:cost-summary-short}
\end{table}

\subsection{Block-By-Block Eigenvector Estimator}
\label{sec:block-by-block-eigenvector-estimator}

The block-by-block eigenvector estimator also reconstructs $f$ sequentially, but now by exploiting the local rank-one structure of $\widehat{Z}^{(K)}_{\vphantom{a}}$ on its overlapping $(K+1)\times(K+1)$ diagonal blocks. 
Refer to Figure~\ref{fig:eigenvector-estimator}, where we show an illustration with $K=2$, for a high-level understanding of how this estimator works, so that it is easier to follow the technical details of this section.

\subsubsection{Local Blocks and Eigenpairs}

We begin by defining the $b$-th $(K+1) \times (K+1)$ diagonal block of $\widehat{Z}^{(K)}_{\vphantom{a}}$ as
\begin{equation}\label{eq:Q-b-blocks-definition}
Q^{(b)} \coloneqq \widehat{Z}_{b:b+K, b:b+K}^{(K)}\,,
\end{equation}
where $b \in \{0, \ldots, T-K-1\}$ denotes the starting index of this block along the diagonal of $\widehat{Z}^{(K)}_{\vphantom{a}}$. 

In the noiseless case, this block relates to the true signal segment 
\begin{equation}
    f_{b:b+K} = (f_b, \dots, f_{b+K})^\top
\end{equation}
via the outer product
\begin{equation}\label{eq:Q-b-blocks-noiseless}
    Q^{(b)} = \widehat{Z}_{b:b+K, b:b+K}^{(K)} = Z_{b:b+K, b:b+K} = f_{b:b+K}^{\vphantom{\dagger}} f_{b:b+K}^{\dagger}.
\end{equation}
This means that $Q^{(b)}$, in the noiseless case, is positive semidefinite and rank-one, with the principal eigenvalue equal to $\|f_{b:b+K}\|_2^2$ and the principal eigenvector proportional to $f_{b:b+K}$.

In the noisy regime, the observed block $Q^{(b)}$ is a perturbation of this rank-one matrix. 
To extract a robust local signal estimate, the estimator computes the eigendecomposition of $Q^{(b)}$ and extracts the principal eigenpair
\begin{equation}
\left(\lambda_{\max}^{(b)},\, v_{\max}^{(b)}\right ),
\end{equation}
where $ v_{\max}^{(b)} \in \mathbb{C}^{K+1}$ has unit norm. The local $(K+1)$-length estimate of the true signal $f$ on indices $b, \ldots, b+K$, that is, an estimate of the true signal segment $f_{b:b+K}$, is then
\begin{equation}
    \widehat{f}^{\,(b)} = \sqrt{\lambda_{\max}^{(b)}}\cdot v_{\max}^{(b)}.
\end{equation}

The estimator repeats this process for every $b \in \{0, \ldots, T-K-1\}$. 

\begin{figure}
    \centering
    \includegraphics[width=\linewidth]{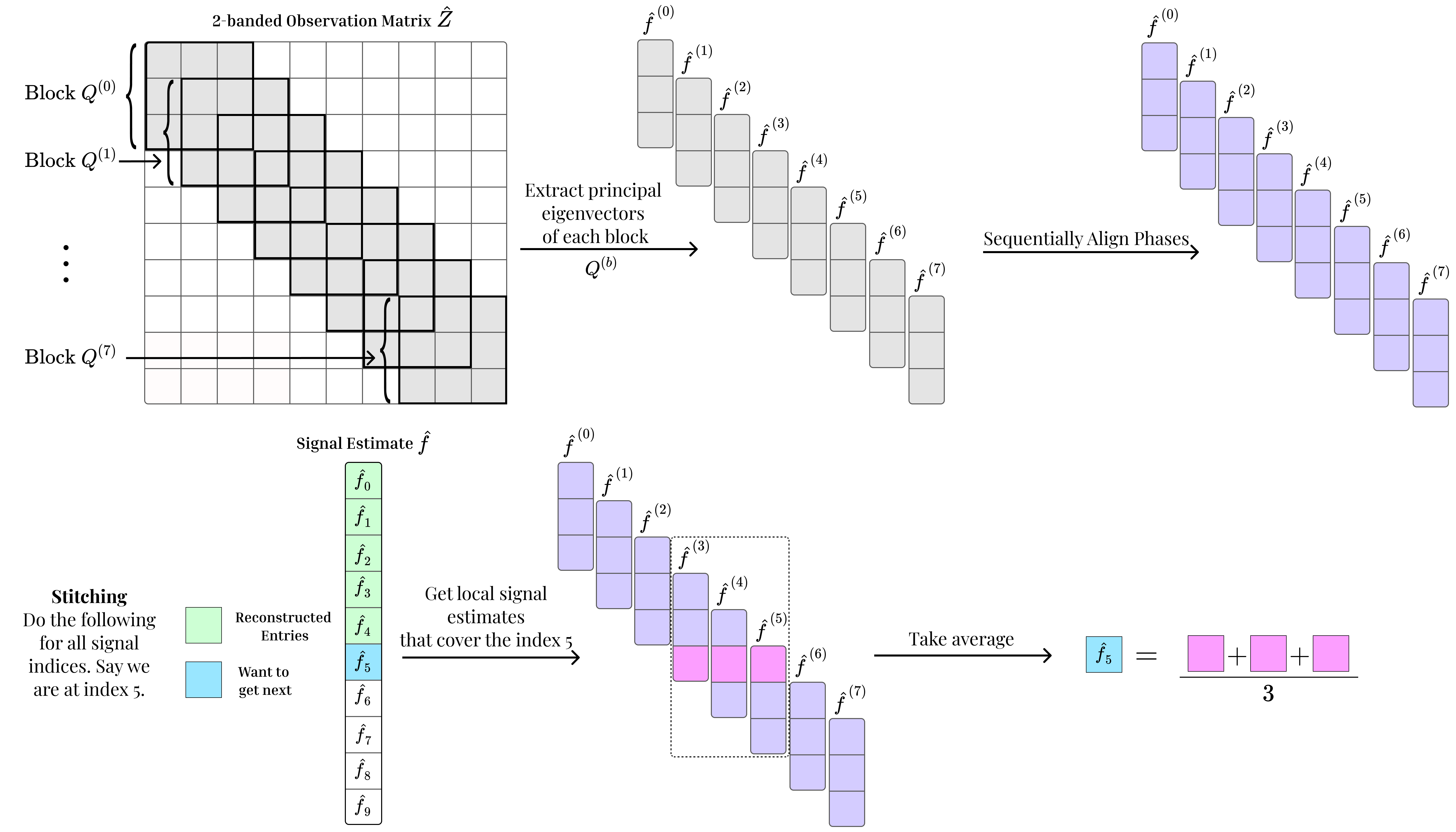}
    \caption{Schematic of the Block-by-Block Eigenvector Estimator (shown for $K=2$).}
    \label{fig:eigenvector-estimator}
\end{figure}

\subsubsection{Sequential Phase Alignment}

Since each local estimate $\widehat{f}^{\,(b)}$ is derived from an independent eigendecomposition, it is defined only up to an arbitrary global phase. 
Consequently, these estimates cannot simply be concatenated to form a global signal estimate $\widehat{f}$. Instead, it is helpful to view them as initial, unaligned estimates of the corresponding signal segments.

A key observation is that consecutive local estimates overlap in their indices: the local estimate $\widehat{f}^{\,(b)}$, which provides an initial estimate for $f_{b:b+K}$, and the estimate $\widehat{f}^{\,(b+1)}$ for $f_{b+1:b+K+1}$ share the $K$ indices $b+1,\ldots,b+K$. Refer to Figure~\ref{fig:eigenvector-estimator} to visualize this overlap. 
This overlap provides a common region where the two local estimates describe the same underlying signal values, differing only by a phase factor. 
The estimator exploits such shared regions between local estimates to align their phases before stitching them together.

The estimator performs this alignment sequentially, proceeding from the first local estimate $\widehat{f}^{\,(0)}$ to the last $\widehat{f}^{\,(T-K-1)}$. 
It begins by anchoring the first local estimate $\widehat{f}^{\,(0)}$. 
Given that we have the constraint $f_0 = 1$, the estimator simply rotates and rescales $\widehat{f}^{\,(0)}$ so that its first entry is exactly equal to one, that is, $\widehat{f}_0^{\,\,(0)} = 1$.

For every subsequent local estimate $\widehat{f}^{\,(b)}$ with $b \geq 1$, the estimator relies on the overlap between $\widehat{f}^{\,(b)}$ and the set of previously processed (that is, phase-aligned) local estimates $\widehat{f}^{\,(0)}, \widehat{f}^{\,(1)}, \ldots, \widehat{f}^{\,(b-1)}$.
For phase-alignment for this current local estimate $\widehat{f}^{\,(b)}$, the estimator aligns it against the "consensus" of these previously phase-aligned estimates within the overlap region. 

To understand this more formally, we define two vectors $u, v \in \mathbb{C}^K$ to represent the current local estimate and the consensus estimate, respectively, on the $K$ overlapping indices. 
Specifically, the vector $u$ consists of the first $K$ entries of the current local estimate $\widehat{f}^{\,(b)}$. 
We exclude the final entry, as it does not overlap with any of the previous local estimates. To this end, we define
\begin{equation}
    u_k \coloneqq \widehat{f}^{\,\,(b)}_k, \quad \text{for } k=0, \ldots, K-1.
\end{equation} 
Furthermore, the vector $v$ represents the current consensus estimate at the corresponding global indices $i = b+k$, for $k=0, \ldots, K-1$. 
The estimator computes $v_k$ by averaging the aligned values from all previously phase-aligned local estimates $\widehat{f}^{\,(p)}$, where $p < b$, that cover the index $i$:
\begin{equation}
    v_k = \frac{1}{|\mathcal{P}_{i}|} \sum_{p \in \mathcal{P}_{i}} \widehat{f}^{\,\,(p)}_{i-p},
\end{equation}
where 
\begin{equation}
    \mathcal{P}_{i} = \{ p \in \{0, \dots, b-1\} \mid p \le i \le p+K \}
\end{equation}
is the set of previous blocks that overlap at global index $i$.
One can also consider a weighted version of this average if the data are heteroskedastic, but we leave analysis of such estimators to future work. 

The estimator then seeks a phase rotation $\phi$ that aligns $u$ with $v$. 
Mathematically, this corresponds to minimizing the squared Euclidean distance between $v$ and $\mathrm{e}^{\mathrm{i}\phi} u$:
\begin{equation}
    \min_{\phi \in [0,2\pi)} \left\| v - \mathrm{e}^{\mathrm{i}\phi} u \right\|_2^2.
\end{equation}
Expanding this norm, we get
\begin{align}
    \left\| v - \mathrm{e}^{\mathrm{i}\phi} u \right\|_2^2 &= \langle v - \mathrm{e}^{\mathrm{i}\phi} u, v - \mathrm{e}^{\mathrm{i}\phi} u \rangle \\
    &= \|v\|^2 + \|u\|^2 - 2 \operatorname{Re}\left[ \overline{\mathrm{e}^{\mathrm{i}\phi}} \langle u, v \rangle \right] \\
    &= \|v\|^2 + \|u\|^2 - 2 \operatorname{Re}\left[ \mathrm{e}^{-\mathrm{i}\phi} \langle u, v \rangle \right].
\end{align}
To minimize the error, the term $\operatorname{Re}[ \mathrm{e}^{-\mathrm{i}\phi} \langle u, v \rangle ]$ must be maximized since this is the only term dependent on $\phi$. 
This is achieved by choosing $\phi$ to cancel out the phase of $\langle u, v \rangle$, effectively rotating the complex number $\langle u, v \rangle$ onto the positive real axis. 
Therefore, the optimal phase shift is simply the argument of $\langle u, v \rangle$:
\begin{equation}
    \phi_{\text{opt}} = \arg(\langle u, v \rangle).
\end{equation}
This expression directly implies that the alignment is determined by a weighted consensus over the overlap region. 
Since the optimal phase is derived from the sum $\langle u, v \rangle = \sum_k \overline{u}_k v_k$, terms with larger magnitudes $|\overline{u}_k v_k|$ contribute more weight to the final result. 
This naturally enhances robustness, as it prevents the alignment from being skewed by low-amplitude indices where the signal-to-noise ratio is poor.

Once $\phi_{\text{opt}}$ is computed, the estimator updates the current local estimate by applying the rotation
\begin{equation}
    \widehat{f}^{\,(b)} \leftarrow e^{\mathrm{i}\phi_{\text{opt}}} \widehat{f}^{\,(b)}.
\end{equation}

\subsubsection{Stitching}

Once all the local estimates $\widehat{f}^{\,(b)}$ have been phase-aligned, the estimator stitches them together to form the final global signal estimate $\widehat{f}$. 
Since the local estimates overlap, most indices $i$ of the signal are covered by multiple local estimates. 
To mitigate noise, the estimator computes the final estimate $\widehat{f}_i$ by averaging all aligned local estimates that cover that specific index.

Formally, for each index $i$, the estimator averages all aligned local signal estimates that cover that index in the following way:
\begin{equation}\label{eq:stitching-estimates}
    \widehat{f}_i = \frac{1}{N_i} \sum_{b=\max(0, i-K)}^{\min(i, T-K-1)} \widehat{f}^{\,\,(b)}_{i-b},
\end{equation}
where $N_i$ is the number of local estimates covering index $i$, and it is given as
\begin{equation}
    N_i = \min(i, T-K-1) - \max(0, i-K) + 1.
\end{equation}
In the bulk of the signal (specifically for indices $i$ where $K \le i < T-K$), the formula in~\eqref{eq:stitching-estimates} averages $K+1$ independent estimates. 
Near the boundaries, the number of overlapping terms naturally decreases.
To this end, a pseudocode of this estimator is given by Algorithm~\ref{algo:eigen-estimator-averaging}.

\begin{algorithm}[t]
\caption{Block-by-Block Eigenvector Estimator}
\label{algo:eigen-estimator-averaging} 
\begin{algorithmic}[1]
\State \textbf{Input:} $K$-banded observation matrix $\widehat{Z}^{(K)}_{\vphantom{a}} \in \mathbb{C}^{T \times T}$
\State \textbf{Output:} Signal estimate $\widehat{f} \in \mathbb{C}^T$

\State Extract overlapping blocks of size $(K+1) \times (K+1)$: $Q^{(b)} \gets \widehat{Z}_{b:b+K,b:b+K}$ for all $b=0,\ldots,T-K-1$
\For{each block $Q^{(b)}$}
    \State Compute principal eigenvector $v_{\max}^{(b)}$ and eigenvalue $\lambda_{\max}^{(b)}$ of $Q^{(b)}$
    \State Store Local estimate: $\widehat{f}^{\,(b)} = \sqrt{\lambda_{\max}^{(b)}} \cdot v_{\max}^{(b)}$
\EndFor
\State Rotate and rescale $\widehat{f}^{\,(0)}$ so that its first entry is 1
\For{$b=1$ to $T-K-1$}
    \State $u \gets (\widehat{f}^{\,\,(b)}_0, \dots, \widehat{f}^{\,\,(b)}_{K-1})^\top$ \Comment{First $K$ entries of current local estimate $\widehat{f}^{\,(b)}$}
    \For{$k=0$ to $K-1$} \Comment{Construct consensus estimate $v$}
        \State $i \gets b + k$ \Comment{Corresponding global index}
        \State $\mathcal{P}_i \gets \{ p \in \{0, \dots, b-1\} \mid p \le i \le p+K \}$ \Comment{Previous local estimates $\widehat{f}^{(p)}$ covering $i$}
        \State $v_k \gets \frac{1}{|\mathcal{P}_i|} \sum_{p \in \mathcal{P}_i} \widehat{f}^{\,\,(p)}_{i-p}$\label{step:average-prev-aligned-estimates} \Comment{Average of previous aligned estimates}
    \EndFor
    \State Compute $\langle u, v \rangle$
    \State Compute phase: $\phi_{\mathrm{opt}} \gets \arg(\langle u, v \rangle)$
    \State Align current local estimate: $\widehat{f}^{\,(b)} \gets \widehat{f}^{\,(b)} \mathrm{e}^{\mathrm{i}\phi_{\mathrm{opt}}}$
\EndFor
\For{$i = 0$ to $T-1$}
    \State $\mathcal{B}_i \gets \{b \mid b \le i \le b+K\}$ \Comment{All local estimates $\widehat{f}^{\,(b)}$  covering global index $i$}
    \State $\widehat{f}_i \gets \frac{1}{|\mathcal{B}_i|} \sum_{b \in \mathcal{B}_i} \widehat{f}^{\,\,(b)}_{i-b}$ \Comment{Average of all aligned estimates}
\EndFor
\end{algorithmic}
\end{algorithm}

\subsubsection{Exact Recovery in the Noiseless Case}

Here we examine how the estimator behaves in the noiseless regime, where the $K$-band of $\widehat{Z}^{(K)}_{\vphantom{a}}$ matches that of $Z$. 
The theorem below shows that, under these ideal conditions, the eigenvector estimator recovers the signal exactly whenever $f$ has no run of $K$ consecutive zeros.

\begin{theorem}[Exact Recovery for Block-By-Block Eigenvector Estimator in the Noiseless Case]\label{thm:exact-rec-eigen-est}
The block-by-block eigenvector estimator (Algorithm~\ref{algo:eigen-estimator-averaging}) guarantees exact recovery of an unknown signal $f$ if and only if $f \in \mathcal{S}_K$, that is, $f$ contains no run of $K$ consecutive zeros.
\end{theorem}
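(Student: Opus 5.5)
The plan is to prove the two directions separately. The ($\Rightarrow$) direction will reduce to the identifiability lemma (Lemma~\ref{lem:identifiability}). The ($\Leftarrow$) direction will be an induction over the block index $b$ that tracks the noiseless behaviour of each step of Algorithm~\ref{algo:eigen-estimator-averaging}. Throughout, I assume $T \ge K+1$, so that at least one block exists, and I write $\langle u,v\rangle=\sum_k \overline{u}_k v_k$, as in the derivation of $\phi_{\mathrm{opt}}$.

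\emph{Necessity.} The algorithm is a deterministic function of the $K$-band of $\widehat{Z}^{(K)}_{\vphantom{a}}$ alone. Suppose $f\notin\mathcal{S}_K$. The construction in the proof of Lemma~\ref{lem:identifiability} then gives a one-parameter family $\{g_\varphi\}_{\varphi\in(0,2\pi)}$ of pairwise distinct signals, all distinct from $f$, whose lifted matrices share the $K$-band of $Z$. The estimator returns one output vector for all of them, so it recovers at most one member of $\{f\}\cup\{g_\varphi\}$. Exact recovery therefore cannot be guaranteed. I would note that this is precisely the sense of ``guarantees'' used in Theorem~\ref{thm:exact-rec-algebraic-est}.

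\emph{Sufficiency, local step.} Let $f\in\mathcal{S}_K$. In the noiseless case \eqref{eq:Q-b-blocks-noiseless} gives $Q^{(b)}=f_{b:b+K}f_{b:b+K}^\dagger$. The segment $f_{b:b+K}$ has $K+1$ entries, so it cannot be identically zero, since that would be a run of more than $K$ zeros. Hence $Q^{(b)}$ is rank one with a simple positive top eigenvalue $\|f_{b:b+K}\|_2^2$, and its unit eigenvector is $f_{b:b+K}/\|f_{b:b+K}\|_2$ up to a phase. The eigenvector is well defined up to phase exactly because the top eigenvalue is simple. Consequently
\[
\widehat f^{\,(b)}=\mathrm{e}^{\mathrm{i}\alpha_b}f_{b:b+K}
\]
for some unknown phase $\alpha_b$.

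\emph{Sufficiency, base case.} For $b=0$, the first entry is $\mathrm{e}^{\mathrm{i}\alpha_0}f_0=\mathrm{e}^{\mathrm{i}\alpha_0}$. Normalizing it to $1$ is a pure rotation with modulus factor $1$, so the anchored block equals $f_{0:K}$ exactly.

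\emph{Sufficiency, inductive step.} Assume the aligned blocks $0,\dots,b-1$ equal the true segments. Then every term averaged in step~\ref{step:average-prev-aligned-estimates} is an exact copy of $f_i$, so $v=f_{b:b+K-1}$, while $u=\mathrm{e}^{\mathrm{i}\alpha_b}f_{b:b+K-1}$. This gives
\[
\langle u,v\rangle=\mathrm{e}^{-\mathrm{i}\alpha_b}\|f_{b:b+K-1}\|_2^2 .
\]

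The main obstacle is making sure that $\phi_{\mathrm{opt}}=\arg\langle u,v\rangle$ is well defined, i.e.\ that $\langle u,v\rangle\neq0$. This is where the hypothesis enters sharply: the overlap window has exactly $K$ indices, and $f\in\mathcal{S}_K$ forbids all of them from vanishing. So $\|f_{b:b+K-1}\|_2>0$, and $\phi_{\mathrm{opt}}=-\alpha_b$. The rotated block is then $f_{b:b+K}$ exactly, which closes the induction. I would remark that the window size $K$ is precisely the threshold at which this alignment argument fails for $f\notin\mathcal{S}_K$, consistent with the necessity part.

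\emph{Sufficiency, stitching.} Every index $i$ lies in some block $b$ with $0\le b\le T-K-1$ and $b\le i\le b+K$, so each $|\mathcal{B}_i|\ge1$. Every aligned block covering $i$ contributes exactly $f_i$, so the average in the stitching step \eqref{eq:stitching-estimates} returns $\widehat f_i=f_i$ for all $i$. This completes the proof.
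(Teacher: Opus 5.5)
Your sufficiency argument is the paper's proof. It uses the same induction over $b$, the same computation $\langle u,v\rangle=\mathrm{e}^{-\mathrm{i}\alpha_b}\|f_{b:b+K-1}\|_2^2$ with $f\in\mathcal{S}_K$ used exactly to make it nonzero, and the same stitching step. You are slightly more careful than the paper: you check that $f_{b:b+K}\neq 0$ so the top eigenvalue is simple, that $|f_0|=1$ makes the base-case rescaling trivial, and that $|\mathcal{B}_i|\ge 1$. Necessity is delegated to Lemma~\ref{lem:identifiability}, exactly as in the paper, and this is where both arguments have a genuine hole.

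Your claim that the $g_\varphi$ are ``all distinct from $f$'' requires $f_i\neq 0$ for some $i\ge s+K$. If every run of $K$ zeros is followed only by zeros, then $g_\varphi=f$ for every $\varphi$, and the estimator in fact succeeds. To see this, run your own induction without assuming $f\in\mathcal{S}_K$. Whenever the overlap $f_{b:b+K-1}$ vanishes, so does $f_{b+K}$, hence $\widehat f^{(b)}=0=f_{b:b+K}$ whatever convention is used for $\arg 0$. An all-zero block likewise gives $\widehat f^{(b)}=\sqrt{0}\,v^{(b)}_{\max}=0$ regardless of the eigenvector returned.

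For example, take $T=3$, $K=1$, $f=(1,1,0)\notin\mathcal{S}_1$. The algorithm produces $\widehat f^{(0)}=(1,1)$ and the aligned $\widehat f^{(1)}=(1,0)$, so $\widehat f=f$. The same happens for $f=(1,0,\dots,0)$, which is the signal of Lemma~\ref{lem:integer-H} whenever $T\le N$. Such an $f$ is also the only signal with $f_0=1$ producing its $K$-band: any $g$ with that band has the same zero pattern, so the deterministic estimator returns both $g$ and $f$. Recovery is therefore guaranteed even in your sense, and the ``only if'' (together with the corresponding direction of Lemma~\ref{lem:identifiability}) is false as stated.

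What your argument does prove, verbatim, is necessity for signals in which some run of $K$ zeros is followed, at some later index, by a nonzero entry. Pick that run; the $g_\varphi$ are then pairwise distinct, distinct from $f$, satisfy $g_0=1$, and share the $K$-band of $Z$. Combined with the extended sufficiency argument above, the correct characterization is: no run of $K$ zeros in $f$ is followed by a nonzero entry.
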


\begin{proof}
    The proof is deferred to Appendix~\ref{app:proof-exact-recovery-eigenvector-estimator}.
\end{proof}

\subsection{Least-Squares Estimator} \label{sec:quantum-phaselift-estimator}

Unlike the block-by-block estimators, which reconstruct an unknown signal $f$ sequentially by operating locally on small diagonal blocks of $\widehat{Z}^{(K)}_{\vphantom{a}}$, the least-squares estimator treats the problem as a matrix completion problem, where the aim is to reconstruct the entire lifted matrix $Z$ using the $K$-band measurements in one step by solving a least-squares optimization problem and then returning the (unnormalized) principal eigenvector of this reconstructed matrix as a signal estimate~$\widehat{f}$.

To formulate this optimization problem, we first need a convenient way to express our measurements. 
Each entry we measure, whether a real diagonal value or the real and imaginary parts of an off-diagonal, can be viewed as the result of applying a specific linear functional to the matrix $Z$. 
To formalize this, we define the following Hermitian matrices:
\begin{equation}
    A^{(i, i)} \coloneqq |i\rangle \langle i|, \qquad A^{(i, j, \text{Re})} \coloneqq \tfrac{1}{2}\left (|i\rangle \langle j| +  |j\rangle \langle i|\right), \qquad A^{(i, j, \text{Im})} \coloneqq \tfrac{1}{2\mathrm{i}}\left (|i\rangle \langle j| -  |j\rangle \langle i|\right).
\end{equation}
It is easy to see that for any Hermitian matrix $X$, these operators essentially extract the entries we are interested in:
\begin{equation}
    \operatorname{Tr}\!\left[A^{(i, i)} X\right]=X_{ii},\quad
\operatorname{Tr}\!\left[A^{(i, j, \text{Re})} X\right]=\operatorname{Re}\left[X_{ij}\right],\quad
\operatorname{Tr}\!\left[A^{(i, j, \text{Im})} X\right]=\operatorname{Im}\left [X_{ij}\right].
\end{equation}

The observations of the $K$-band entries of $Z$ thus induce the following collection of real-valued measurements:
\begin{equation}
    \left \{ b^{(i, i)}\right \}_{i=0}^{T-1}, \qquad \left \{ b^{(i, j, \text{Re})}, b^{(i, j, \text{Im})} \right\}_{\substack{0\le i<j\le T-1\\ 1\le j-i\le K}}
\end{equation}
with 
\begin{align}
    b^{(i, i)} & \coloneqq \operatorname{Tr}\!\left[A^{(i, i)} Z\right] + \varepsilon^{(i, i)} = \operatorname{Tr}\!\left[A^{(i, i)} \widehat{Z}^{(K)}_{\vphantom{a}}\right]\\
    b^{(i, j, \text{Re})} & \coloneqq \operatorname{Tr}\!\left[A^{(i, j, \text{Re})} Z\right] + \varepsilon^{(i, j, \text{Re})} = \operatorname{Tr}\!\left[A^{(i, j, \text{Re})} \widehat{Z}^{(K)}_{\vphantom{a}}\right]\\
    b^{(i, j, \text{Im})} & \coloneqq \operatorname{Tr}\!\left[A^{(i, j, \text{Im})} Z\right] + \varepsilon^{(i, j, \text{Im})} = \operatorname{Tr}\!\left[A^{(i, j, \text{Im})} \widehat{Z}^{(K)}_{\vphantom{a}}\right].
\end{align}
Here $\varepsilon^{(i,i)}$ denotes the error due to the measurement noise on diagonal entries for all $i \in \{0,\ldots,T-1\}$, while $\varepsilon^{(i,j,\mathrm{Re})}$ and $\varepsilon^{(i,j,\mathrm{Im})}$ denote the errors due to the measurement noise on the real and imaginary parts of the off-diagonal entries for all pairs $(i,j)$ with $0 \leq i < j \leq T-1$ and $1 \leq j-i \leq K$. 

Finally, stacking all the measurement matrices into a single ordered collection
\begin{equation}
    \{A_\ell\}_{\ell=1}^m = \{A^{(i,i)} : 0 \le i \le T-1\} \cup\, \{A^{(i,j,\text{Re})}, A^{(i,j,\text{Im})} : 0 \le i < j \le T-1,  1 \le j-i \le K\} \label{eq:A-l-measurement-matrices},
\end{equation}
where $m =  T+ K (2T-K-1)$, since there are $T$ diagonal measurements and, for each offset $k = 1,\ldots,K$, there are $T-k$ off-diagonal entries, each contributing measurements for both real and imaginary parts.
Similarly, stacking the corresponding noisy observations and measurement errors in a single vector
\begin{align}
    b & \coloneqq (b_1, b_2, \ldots, b_m)^{\mathsf{T}},\\
    \varepsilon &  \coloneqq (\varepsilon_1, \varepsilon_2, \ldots, \varepsilon_m)^{\mathsf{T}},\label{eq:b-l-observations}
\end{align}
we arrive at the following compact form:
\begin{equation}
    b_\ell = \operatorname{Tr}\left [A_\ell Z\right] + \varepsilon_\ell, 
    \qquad \text{for all } \ell = 1, \dots, m.
\end{equation}

The estimator then solves the least-squares problem
\begin{equation}\label{eq:least-square-problem}
    X^* = \operatorname*{argmin}_{X \succeq 0} 
    \sum_{\ell = 1}^{m} \left( \operatorname{Tr}[A_{\ell} X] - b_{\ell} \right)^2,
\end{equation}
seeking the PSD matrix most consistent with the noisy $K$-band measurements. 
We restrict the search to the PSD cone because the true matrix $Z = f f^\dagger$ is rank one and positive semidefinite. 
The constraint $X \succeq 0$ is convex, and each map $X \mapsto \operatorname{Tr}[A_\ell X]$ is real-valued on Hermitian inputs, so the objective is a real least-squares loss over a convex feasible set.

Once $X^*$ is obtained, the final step is to extract a signal estimate from it. 
In the noiseless setting, the minimizer of the above least-squares problem is unique, and it is exactly equal to the lifted matrix $Z$, that is, $X^* = Z = f f^{\dagger}$ (we will see rigorously why this is the case in the proof of Theorem~\ref{thm:exact-recovery-quantum-phaselift-estimator}). 
This implies that the principal eigenvector of $X^*$ is proportional to $f$ up to some global phase. 
This global phase is arbitrary, and we can fix it by using the fact that in our setting we have $f_0=1$. 

In the noisy setting, $X^*$ is simply a perturbation of $Z$, and its principal eigenvector still provides the best rank-one approximation in the sense of the spectral norm. 
We therefore take $v_{\max}$, the normalized principal eigenvector of $X^*$, scale it by the square-root of the corresponding eigenvalue $\lambda_{\max}$, that is, $\sqrt{\lambda_{\max}}$, and fix its global phase by enforcing $f_0 = 1$. 
The resulting vector is returned as the signal estimate $\widehat{f}$. 
To this end, a pseudocode for the least-squares estimator is given in Algorithm~\ref{algo:least-squares}.

\begin{algorithm}[t]
\caption{Least-Squares Estimator}
\label{algo:least-squares} 
\begin{algorithmic}[1]
\State \textbf{Input:} $K$-banded observation matrix $\widehat{Z} \in \mathbb{C}^{T \times T}$
\State \textbf{Output:} Signal estimate $\widehat{f}  \in \mathbb{C}^T$
\State Set $m \gets T+ K (2T-K-1)$
\State Initialize measurement matrices $A_1 \ldots A_m \in \mathbb{C}^{T\times T}$ according to~\eqref{eq:A-l-measurement-matrices}
\State Initialize observation vector $b \in \mathbb{R}^{m}$ according to~\eqref{eq:b-l-observations}
\State Solve the least-squares optimization problem (see~\eqref{eq:least-square-problem})
\begin{equation}
    X^*  = \operatorname*{argmin}_{X \succeq 0} \sum_{\ell = 1}^{m} \left ( \operatorname{Tr}\!\left [ A_{\ell} X\right] - b_{\ell} \right)^2
\end{equation}
\State Compute principal eigenvector $v_{\max}$ and eigenvalue $\lambda_{\max}$ of $X^*$
\State Rotate and rescale the vector $\sqrt{\lambda_{\max}} \cdot v_{\max}$ so that its first entry is 1
\State Write $\widehat f \gets \sqrt{\lambda_{\max}} \cdot v_{\max}$
\end{algorithmic}
\end{algorithm}

\subsubsection{Exact Recovery in the Noiseless Case}

In the noiseless setting, with exact $K$-band measurements and $f \in \mathcal{S}_K$, the least-squares estimator recovers the true signal exactly.

\begin{theorem}[Exact Recovery for Least-Squares Estimator  in the Noiseless Case]\label{thm:exact-recovery-quantum-phaselift-estimator}
The least-squares estimator (Algorithm~\ref{algo:least-squares}) guarantees exact recovery of an unknown signal $f$ if and only if $f \in \mathcal{S}_K$, that is, $f$ contains no run of $K$ consecutive zeros.
\end{theorem}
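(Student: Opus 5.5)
In the noiseless case the objective in~\eqref{eq:least-square-problem} is $\sum_\ell(\operatorname{Tr}[A_\ell X]-\operatorname{Tr}[A_\ell Z])^2\ge 0$. It vanishes at $X=Z$, which is feasible since $Z=ff^\dagger\succeq 0$. The minimizers are therefore exactly the PSD matrices $X$ with $\operatorname{band}_K(X)=\operatorname{band}_K(Z)$. The whole proof then reduces to one question: is $Z$ the \emph{unique} PSD completion of its $K$-band? Once $X^*=Z$ is known, the rest of Algorithm~\ref{algo:least-squares} is immediate. $Z$ has rank one with eigenvalue $\|f\|_2^2$ and eigenvector $f/\|f\|_2$, so $\sqrt{\lambda_{\max}}v_{\max}=\mathrm{e}^{\mathrm{i}\phi}f$ for some phase $\phi$. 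Since $f_0=1\neq 0$, rotating so that the first entry equals one returns exactly $f$.

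\textbf{Sufficiency ($f\in\mathcal{S}_K$).} Let $X\succeq 0$ agree with $Z$ on the $K$-band. Factor $X=V^\dagger V$ with columns $x_i\in\mathbb{C}^r$, so that $X_{ij}=\langle x_i,x_j\rangle$. The band constraints give $\|x_i\|^2=|f_i|^2$ and $\langle x_i,x_j\rangle=\overline{f_i}f_j$ (up to the paper's conjugation convention) whenever $|i-j|\le K$. In particular $x_i=0$ whenever $f_i=0$.

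For two indices $i,j$ with $|i-j|\le K$ and both $f_i,f_j\neq 0$, we get $|\langle x_i,x_j\rangle|=\|x_i\|\|x_j\|$. This is the equality case of Cauchy--Schwarz, so $x_j=c\,x_i$, and the inner product forces $c=f_j/f_i$. Now fix a unit vector $u=x_{i_0}/\overline{f_{i_0}}$ for some nonzero index $i_0$ (for instance $i_0=0$); up to the chosen convention we have $x_i=\overline{f_i}\,u$ for $i=i_0$. I then propagate this along the nonzero indices. Because $f\in\mathcal{S}_K$, consecutive nonzero indices differ by at most $K$: a gap larger than $K$ would contain at least $K$ zeros. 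So the chain reaches every nonzero index, and we obtain $x_i=\overline{f_i}u$ for all $i$, with zero indices trivially satisfying this too. Hence $X_{ij}=f_i\overline{f_j}=Z_{ij}$ for \emph{all} $i,j$, including entries outside the band. So $X=Z$ and the minimizer is unique.

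The main care needed is in the zero indices and in the chaining argument. The key point is to identify the graph on the nonzero indices, with edges between indices at distance at most $K$, and to show it is connected exactly when there is no run of $K$ zeros.

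\textbf{Necessity ($f\notin\mathcal{S}_K$).} I reuse the construction in the proof of Lemma~\ref{lem:identifiability}. If there is a run of $K$ zeros at $s,\dots,s+K-1$, then two cases arise.

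\emph{Case 1: some nonzero index lies after the run.} The matrix $G=gg^\dagger$ built there has the same $K$-band as $Z$ and is PSD, so it is also a zero-objective minimizer. The least-squares problem then has multiple minimizers, including a rank-two PSD matrix such as $\tfrac12(Z+G)$, and exact recovery is not guaranteed. In particular the estimator cannot distinguish $f$ from $g\neq f$, whose first entries coincide since $s\ge1$.

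\emph{Case 2: the run contains the tail of $f$.} Here $f$ is still determined. If the stated ``only if'' requires it, I would follow the lemma's convention that a run counts only when it separates nonzero parts, and note this caveat.
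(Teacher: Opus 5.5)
Your sufficiency argument is correct and is essentially the paper's proof. Both use a PSD factorization, the Cauchy--Schwarz equality case for pairs at distance at most $K$, and connectivity of the nonzero indices. Working directly with Gram vectors is a little cleaner than the paper's route through columns of $X^\star$ and the auxiliary vector $h$. You do need to fix the conjugation: with $X_{ij}=x_i^\dagger x_j=f_i\overline{f_j}$, the ratio is $c=\overline{f_j}/\overline{f_i}$. That is what is consistent with your own $x_i=\overline{f_i}\,u$; your $c=f_j/f_i$ is not.

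Your Case 2 is also correct, and you should state it outright rather than hedge, because it is a counterexample to the literal ``only if''. Take $f=(1,0,\dots,0)$, for example the paper's integer-Hamiltonian signal with $T\le W+1$. It lies outside $\mathcal{S}_K$ for every $K\le T-1$. Yet any PSD $X$ matching the $K$-band has $X_{ii}=0$ for $i\ge 1$, hence vanishing rows and columns there, so $X^\star=Z$ and the estimator returns $f$ exactly.

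The paper's necessity proof has exactly this gap. It defers to Lemma~\ref{lem:identifiability}, and there the constructed $g$ coincides with $f$ whenever no nonzero entry follows the run. That lemma contains no convention excluding trailing runs, so do not attribute one to it.

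What your two cases actually prove is the corrected statement: exact recovery holds if and only if no run of $K$ zeros is followed, somewhere later, by a nonzero entry. Your sufficiency proof only uses that consecutive nonzero indices are at most $K$ apart, and your Case~1 gives the converse. Your closing remark that the nonzero-index graph is connected ``exactly when there is no run of $K$ zeros'' is wrong for the same reason, since trailing zeros do not disconnect it.
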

\begin{proof}
    The proof is deferred to Appendix~\ref{app:proof-exact-recovery-phaselift-estimator}.
\end{proof}

\subsubsection{Stability in the Noisy Case}

In this section, we establish the stability of the least-squares estimator in the presence of measurement noise. 
We adopt the geometric framework of the descent cone analysis, which is standard in the literature of robust low-rank matrix recovery~\cite{Chandrasekaran2012}. 
This approach relies on the concept of the smallest conic singular value of the measurement map restricted to the set of feasible error directions.

Let $\mathcal{A}: \mathbb{C}^{T \times T} \to \mathbb{R}^{m}$ be the linear measurement map induced by the collection $\{A_{\ell}\}_{\ell}$ (see~\eqref{eq:A-l-measurement-matrices}) such that 
\begin{equation}\label{eq:measurement-map}
    [\mathcal{A}(X)]_{\ell} = \operatorname{Tr}[A_\ell X]
\end{equation}
Then the observations can be written compactly as $b = \mathcal{A}(Z) + \varepsilon$, and the estimator minimizes the least-squares loss $L(X) \coloneqq \|\mathcal{A}(X) - b\|^2_2 $ subject to $X \succeq 0$. 

Since $X^*$ minimizes $L$ and $Z$ is a feasible solution, the error direction $X^* - Z$ must lie in a restricted set of directions known as the descent cone at $Z$. 
Intuitively, this cone contains all possible directions $\Delta$ that constitute a feasible perturbation from $Z$ and along which the objective function $L$ does not increase. 
More formally, we define it as follows:
\begin{equation}\label{def:descent-cone}
\mathcal{D}(L, Z) \coloneqq \left \{\Delta \in \mathbb{C}^{T \times T}: \exists \tau > 0 \text{ s.t. } Z + \tau \Delta \succeq 0 \text{ and } L(Z + \tau \Delta) \leq L(Z) \right\}.
\end{equation}
Furthermore, as we will see later in Theorem~\ref{thm:stability-proof-least-squares-estimator}, the stability of the recovery depends on the smallest conic singular value of the measurement map $\mathcal{A}$ restricted to the cone $\mathcal{D}(L, Z)$. Formally, it is defined as follows:
\begin{equation}\label{eq:mini-conic-singular-value}
    \sigma_{\mathrm{min}}(\mathcal{A}, \mathcal{D}(L, Z)) \coloneqq \inf_{\Delta \in \mathcal{D}(L, Z)\setminus \{0\}} \frac{\left \| \mathcal{A}(\Delta) \right \|_2}{ \|\Delta\|_2}.
\end{equation}

First and foremost, it is important to verify that $\sigma_{\mathrm{min}}(\mathcal{A}, \mathcal{D}(L, Z)) > 0$ because, as we will see later in Theorem~\ref{thm:stability-proof-least-squares-estimator}, the stability bound, which is given by~\eqref{eq:stability-bound}, explicitly depends on the inverse of $\sigma_{\mathrm{min}}(\mathcal{A}, \mathcal{D}(L, Z))$. 
By definition, the condition $\sigma_{\mathrm{min}}(\mathcal{A}, \mathcal{D}(L, Z)) > 0$ holds if and only if the numerator never goes to zero for any non-zero perturbation $\Delta$. 
Equivalently, the measurement map $\mathcal{A}$ must not annihilate any non-zero direction in the descent cone; that is, the intersection of the cone with the kernel of the map must be trivial: $\mathcal{D}(L, Z) \cap \ker(\mathcal{A}) = \{0\}$. 
This condition is guaranteed by the fact that $\mathcal{A}$ is injective on the set of signals $\mathcal{S}_K$, a property we established before in the exact recovery proof of Theorem~\ref{thm:exact-recovery-quantum-phaselift-estimator}. 
The following lemma formalizes the equivalence between the injectivity of the map $\mathcal{A}$ on the signal set $\mathcal{S}_K$ and the trivial intersection of the descent cone with the kernel of this map.
\begin{lemma}\label{lemma:injectivity-kernel-equivalence}
    The measurement map $\mathcal{A}$ is injective on the signal set $\mathcal{S}_K$ if and only if $\mathcal{D}(L, Z) \cap \ker(\mathcal{A}) = \{0\}$ for all $Z = ff^{\dagger}$ generated by signals $f \in \mathcal{S}_K$.
\end{lemma}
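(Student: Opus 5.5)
The plan is to first fix what \emph{injective on $\mathcal{S}_K$} must mean for the statement to be meaningful. I will read it in the sense established in the proof of Theorem~\ref{thm:exact-recovery-quantum-phaselift-estimator}: for every $f\in\mathcal{S}_K$ with $Z=ff^\dagger$, the only positive semidefinite $X$ with $\mathcal{A}(X)=\mathcal{A}(Z)$ is $X=Z$. In words, $Z$ is the unique point of the PSD cone consistent with its own $K$-band.

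The key observation concerns kernel directions. If $\Delta\in\ker(\mathcal{A})$, then for every $\tau>0$
\begin{equation}
L(Z+\tau\Delta)=\|\mathcal{A}(Z)+\tau\mathcal{A}(\Delta)-b\|_2^2=L(Z).
\end{equation}
So the loss condition in~\eqref{def:descent-cone} is automatic, and
\begin{equation}
\mathcal{D}(L,Z)\cap\ker(\mathcal{A})=\{\Delta\in\ker(\mathcal{A}) : \exists\,\tau>0,\ Z+\tau\Delta\succeq 0\}.
\end{equation}
This holds regardless of the noise vector $b$, which is why the equivalence does not depend on $\varepsilon$. With this in hand, both directions reduce to the PSD feasibility condition alone.

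($\Rightarrow$) Take $\Delta\in\mathcal{D}(L,Z)\cap\ker(\mathcal{A})$ with witness $\tau>0$, and set $X=Z+\tau\Delta$. Then $X\succeq 0$ and $\mathcal{A}(X)=\mathcal{A}(Z)$, so injectivity gives $X=Z$. Hence $\tau\Delta=0$, and since $\tau>0$, $\Delta=0$.

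($\Leftarrow$) Suppose $X\succeq0$ satisfies $\mathcal{A}(X)=\mathcal{A}(Z)$, and set $\Delta=X-Z$. Then $\Delta\in\ker(\mathcal{A})$, and $\tau=1$ gives $Z+\Delta=X\succeq0$. By the observation above, $\Delta\in\mathcal{D}(L,Z)\cap\ker(\mathcal{A})=\{0\}$, so $X=Z$. In particular, if $g$ is any signal with the same $K$-band as $f$, taking $X=gg^\dagger$ gives $gg^\dagger=ff^\dagger$. The constraint $g_0=f_0=1$ then fixes the global phase, so $g=f$.

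I expect the main obstacle to be the interpretive step rather than the algebra. If ``injective on $\mathcal{S}_K$'' were read only as uniqueness among rank-one lifts (the content of Lemma~\ref{lem:identifiability}), the ($\Rightarrow$) direction would need more. One must rule out higher-rank PSD matrices $X$ sharing the $K$-band of $Z$, which is exactly the PSD-completion uniqueness argument in the proof of Theorem~\ref{thm:exact-recovery-quantum-phaselift-estimator}. I would invoke that argument explicitly. Its mechanism is that each $(K+1)\times(K+1)$ diagonal block of $X$ equals the corresponding block of $Z$ and is rank one. Positive semidefiniteness then forces the columns of a Gram factorization of $X$ to be aligned along overlapping windows, and the no-$K$-zero-run condition lets this alignment propagate through all indices, giving $X=Z$. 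Combining that uniqueness with the two implications above completes the equivalence. Along the way I will note that $Z+\tau\Delta\succeq0$ forces $\Delta$ to be Hermitian, so nothing is lost by working over $\mathbb{C}^{T\times T}$.
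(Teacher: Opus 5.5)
Your proof is correct and follows the paper's argument. For ($\Rightarrow$) you set $Y=Z+\tau\Delta$ and invoke the PSD-uniqueness from the proof of Theorem~\ref{thm:exact-recovery-quantum-phaselift-estimator}; for ($\Leftarrow$) you take $\Delta=Y-Z$ with $\tau=1$. Your observation that the loss condition holds automatically for every $\Delta\in\ker(\mathcal{A})$, whatever $b$ is, is a slightly cleaner and more general form of the paper's noiseless-case check, and your explicit reading of ``injective'' correctly identifies that the paper's ($\Rightarrow$) step rests on that theorem rather than on rank-one injectivity alone.
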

\begin{proof}
    The proof is deferred to Appendix~\ref{app:proof-for-lemma-injectivity-kernel-equivalence}.
\end{proof}

With that, we now state the main stability result of the least-squares estimator.

\begin{theorem}[Stability of Least-Squares Estimator]\label{thm:stability-proof-least-squares-estimator}
    Let $f \in \mathcal{S}_K$ and let $Z = ff^{\dagger}$ be the corresponding lifted matrix. Suppose the least-squares estimator (Algorithm~\ref{algo:least-squares}) outputs $\widehat{f}$ as the final signal estimate, then the following holds up to some global phase $\phi \in [0, 2\pi)$:
    \begin{equation}\label{eq:stability-bound}
        \left \| \widehat{f} - \mathrm{e}^{\mathrm{i} \phi}f \right \|_2 \leq \frac{2(2\sqrt{2} + 1)}{\sigma_{\mathrm{min}}(\mathcal{A}, \mathcal{D}(L, Z)) } \cdot \frac{\left \|\varepsilon\right \|_2}{\left \| f\right \|_2}, 
    \end{equation}
    where $\sigma_{\mathrm{min}}(\mathcal{A}, \mathcal{D}(L, Z))$ is defined as in~\eqref{eq:mini-conic-singular-value} and $\varepsilon$ is the measurement noise vector defined in~\eqref{eq:b-l-observations}.
\end{theorem}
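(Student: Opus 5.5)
The proof splits into two stages: a matrix-level error bound $\|X^*-Z\|_2 \lesssim \|\varepsilon\|_2/\sigma_{\min}$ from optimality and the descent cone, and a rank-one perturbation argument converting this into a bound on the leading eigenvector scaled by $\sqrt{\lambda_{\max}}$.

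\textbf{Stage 1: matrix error.} Set $\Delta := X^* - Z$, where $\|\cdot\|_2$ is the Schatten-2 (Frobenius) norm. Since $X^*$ is a minimizer and $Z$ is feasible, $L(X^*) \le L(Z)$. Taking $\tau = 1$ in~\eqref{def:descent-cone} then gives $\Delta \in \mathcal{D}(L,Z)$. Write $L(X) = \|\mathcal{A}(X-Z) - \varepsilon\|_2^2$. The inequality $L(X^*)\le L(Z)$ reads
\begin{equation}
\|\mathcal{A}(\Delta)-\varepsilon\|_2^2 \le \|\varepsilon\|_2^2 .
\end{equation}
By the triangle inequality this yields $\|\mathcal{A}(\Delta)\|_2 \le 2\|\varepsilon\|_2$. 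The definition~\eqref{eq:mini-conic-singular-value} gives $\|\mathcal{A}(\Delta)\|_2 \ge \sigma_{\min}\|\Delta\|_2$. Combining the two,
\begin{equation}
\|X^*-Z\|_2 \le \frac{2\|\varepsilon\|_2}{\sigma_{\min}} .
\end{equation}
Positivity of $\sigma_{\min}$ comes from Lemma~\ref{lemma:injectivity-kernel-equivalence} together with the injectivity established in Theorem~\ref{thm:exact-recovery-quantum-phaselift-estimator}. (Strictly, one would need a compactness argument for $\inf>0$, but the theorem only needs the bound as stated.)

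\textbf{Stage 2: from $X^*$ to $\widehat f$.} Write $\widehat f_{\mathrm{raw}} = \sqrt{\lambda_{\max}}\,v_{\max}$, so that $\widehat f_{\mathrm{raw}}\widehat f_{\mathrm{raw}}^\dagger$ is the best rank-one approximation of $X^*$. Then
\begin{equation}
\|\widehat f_{\mathrm{raw}}\widehat f_{\mathrm{raw}}^\dagger - Z\|_2 \le \|\widehat f_{\mathrm{raw}}\widehat f_{\mathrm{raw}}^\dagger - X^*\|_2 + \|X^*-Z\|_2 \le 2\|X^*-Z\|_2 ,
\end{equation}
because $Z$ is itself rank one and so competes in the Eckart--Young problem. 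Now apply the standard inequality from the PhaseLift literature (Cand\`es--Strohmer--Voroninski):
\begin{equation}
\min_\phi \|x - \mathrm{e}^{\mathrm{i}\phi}f\|_2 \le C\,\frac{\|xx^\dagger - ff^\dagger\|_2}{\|f\|_2}.
\end{equation}
To get the constant, I plan to decompose $x = \alpha f/\|f\|_2 + x_\perp$ with $\alpha \ge 0$ after choosing the phase. Expanding $\|xx^\dagger-ff^\dagger\|_2^2$ gives $(\alpha^2-\|f\|^2)^2 + 2\alpha^2\|x_\perp\|^2 + \|x_\perp\|^4$. Separately, $\|x-f\|^2 = (\alpha-\|f\|)^2 + \|x_\perp\|^2$. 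Using $|\alpha-\|f\|| \le |\alpha^2-\|f\|^2|/\|f\|$ controls the parallel part. The perpendicular part is handled in two regimes: when $\alpha \gtrsim \|f\|/\sqrt2$ it is bounded by the cross term, and when $\alpha$ is smaller the first term is already of order $\|f\|^4$. This case split is where the $2\sqrt2$ in the constant $2(2\sqrt2+1)$ should appear. Combined with the factor $2$ from Stage 1, the rank-one step should contribute a factor $2\sqrt2+1$ per unit of $\|X^*-Z\|_2/\|f\|_2$. It may be cleaner to bound $\|\widehat f_{\mathrm{raw}}\widehat f_{\mathrm{raw}}^\dagger - Z\|$ through $\|X^*-Z\|$ plus the tail eigenvalues of $X^*$ instead of the factor $2$; I will adjust bookkeeping to match the constant.

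\textbf{Main obstacle: the global phase.} The algorithm fixes the phase by forcing $\widehat f_0 = 1$, whereas the theorem only claims closeness up to some $\mathrm{e}^{\mathrm{i}\phi}$. I will therefore state the bound for $\widehat f_{\mathrm{raw}}$ with the optimal phase, and note that the output differs from it by a unit-modulus rotation. Rescaling to enforce $\widehat f_0 = 1$ exactly would change magnitudes, so I would interpret that step as a phase rotation only, consistent with the "up to global phase" qualifier.

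The delicate part is therefore the eigenvector perturbation step with the exact constant, not Stage 1, which is routine.
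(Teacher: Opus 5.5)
Your proposal is correct, but Stage 2 takes a genuinely different route from the paper. Stage 1 matches the paper: optimality of $X^*$ against the feasible point $Z$, the triangle inequality, then the definition of $\sigma_{\min}$. You also treat the rescaling step of the algorithm as a pure phase rotation, which is what the paper does implicitly.

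In Stage 2 the paper bounds eigenvalue and eigenvector separately:
\begin{itemize}
\item Weyl gives $|\sqrt{\lambda_{\max}}-\|f\|_2|\le\|X^*-Z\|_2/\|f\|_2$.
\item Davis--Kahan gives $\sin\theta\le 2\|X^*-Z\|_2/\|f\|_2^2$.
\item The inequality $2\sin(\theta/2)\le\sqrt2\sin\theta$ turns this into a phase-aligned vector bound.
\end{itemize}
The constant $1+2\sqrt2$ comes from that chain. So you should not expect a $2\sqrt2$ to emerge from your case split.

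You instead use Eckart--Young to reduce to comparing two rank-one matrices, then do an explicit two-dimensional computation. This avoids eigenvector perturbation theory altogether. All you need is a rank-one constant $C$ with $2C\le 2\sqrt2+1$, and your case split gives it. Normalize $\|f\|_2=1$.
\begin{itemize}
\item For $\alpha^2\ge\tfrac12$ you get $C=1$.
\item For $\alpha^2<\tfrac12$, use $(1-\alpha)^2\le(1-\alpha^2)^2$ and $\beta^2\le\tfrac14+\beta^4<(1-\alpha^2)^2+\beta^4$. Together these show $(1-\alpha)^2+\beta^2\le 2\|xx^\dagger-ff^\dagger\|_2^2$, i.e.\ $C=\sqrt2$.
\end{itemize}
The sharp value is $C=\sqrt{(1+\sqrt2)/2}$, attained at $\alpha=0$. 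Your route therefore proves the theorem with the smaller constant $4\sqrt2\approx 5.66$, in place of $2(2\sqrt2+1)\approx 7.66$.

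The one caution is bookkeeping in the second regime. The crudest version (numerator $\le 1+\beta^2$, denominator $\ge\tfrac14+\beta^4$) gives only $C=2$, hence an overall constant of $8$, which just misses the stated bound.
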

\begin{proof}
    The proof is deferred to Appendix~\ref{app:stability-proof-least-squares-estimator}.
\end{proof}

It is important to emphasize that, unlike the analysis for the algebraic estimator, the stability result in Theorem~\ref{thm:stability-proof-least-squares-estimator} does not provide an explicit query complexity bound. 
This is because the stability depends critically on the value of the smallest conic singular value $\sigma_{\mathrm{min}}(\mathcal{A}, \mathcal{D}(L, Z))$, for which we do not have a closed-form analytical expression in terms of the parameters $T$ and $K$. 
In the literature of phase retrieval, obtaining meaningful lower bounds on $\sigma_{\mathrm{min}}(\mathcal{A}, \mathcal{D}(L, Z))$ is one of the central tasks. 
Existing results typically rely on strong randomness assumptions on the measurement ensemble and on sophisticated concentration-of-measure arguments. 
In contrast, the measurement map considered here is deterministic and highly structured, corresponding to banded observations, which places it outside the scope of these techniques.
We therefore leave the derivation of explicit lower bounds on $\sigma_{\mathrm{min}}(\mathcal{A}, \mathcal{D}(L,Z))$ to future work and, for this reason, refrain from stating an explicit query complexity bound for the least-squares estimator.

\section{Numerical Simulations}
\label{sec:numerical-simulations}

In this section, we numerically investigate the performance of all three estimators, that is, the block-by-block algebraic estimator (Algorithm~\ref{algo:algebraic}), block-by-block eigenvector estimator (Algorithm~\ref{algo:eigen-estimator-averaging}), and least-squares estimator (Algorithm~\ref{algo:least-squares}). 

\subsection{Hamiltonians}
We generate three families of datasets with signal entries $f_i = \langle\psi| \mathrm{e}^{\mathrm{i} H t_i}|\psi\rangle$ sampled at times $t_i = i \Delta$, and for the purposes of our simulations, we fix the signal length to $T=50$. 
We will analyze time-series from the following two Hamiltonians:

\paragraph{Fermi-Hubbard Model.}
The Fermi-Hubbard Hamiltonian is
\begin{equation}\label{eq:fermi-hubbard}
H_{\mathrm{FH}}
= -t_0 \sum_{\langle i,j \rangle,\sigma}
\left(
  \widehat{c}^{\,\,\dagger}_{i\sigma} \widehat{c}_{j\sigma}^{\vphantom{\dagger}}
  + \widehat{c}^{\,\,\dagger}_{j\sigma} \widehat{c}_{i\sigma}^{\vphantom{\dagger}}
\right)
+ U \sum_{i} \widehat{n}_{i\uparrow} \widehat{n}_{i\downarrow},
\end{equation}
where $\widehat{c}^{\,\,\dagger}_{i\sigma}$ and $\widehat{c}_{i\sigma}\vphantom{\dagger}$ are fermionic creation and annihilation operators, respectively, on site $i$ with spin $\sigma \in \{\uparrow,\downarrow\}$ and $\widehat{n}_{i\sigma} = \widehat{c}^{\,\,\dagger}_{i\sigma} \widehat{c}_{i\sigma}\vphantom{\dagger}$.
The parameters $t_0$ and $U$ denote the hopping strength and the on-site interaction, respectively. 
The sum over $\langle i,j \rangle$ indicates a sum over nearest neighbors on the underlying lattice geometry. 

\paragraph{Transverse-Field Ising Model.} 
The transverse-field Ising Hamiltonian is
\begin{equation}\label{eq:tf-ising-model}
H_{\mathrm{ISING}}
= -J \sum_{\langle i,j \rangle} \widehat{\sigma}^{\,z}_i \widehat{\sigma}^{\,z}_j
 - h_z \sum_i \widehat{\sigma}^{\,z}_i
 - h_x \sum_i \widehat{\sigma}^{\,x}_i,
\end{equation}
where $\widehat{\sigma}^{\,x}_i$ and $\widehat{\sigma}^{\,z}_i$ denote Pauli operators acting on site $i$, and the parameters $J$, $h_x$, and $h_z$ denote the coupling strengths.
\newline

\subsection{Datasets}
\label{sec:datasets}

We obtain each dataset by using a first-order Trotter formula to simulate the unitary evolution $\mathrm{e}^{\mathrm{i} H t_i}$. 
We use $i$ Trotter steps to simulate each time $t_i$, and each Trotter step has step size $\Delta$. 
Note that we are simply interested in testing the proposed estimators on typical signals one might encounter in practice, not their physical relevance or accuracy, and so are not concerned with the accuracy of Trotterized dynamics itself.

\paragraph{First Dataset.} 
The first dataset considers a 2D lattice geometry for the aforementioned Fermi-Hubbard model, more concretely, a $2\times 2$ square lattice. 
The parameters $t_0$ and $U$ are drawn independently and uniformly at random from an interval $t_0 \in [0,1]$ and $U \in [0,4]$.
As input state, we consider the following coherent Gibbs state:
\begin{equation}
    |\psi\rangle=\sum_k \sqrt{p_k}|E_k\rangle \ \ \ \ \text{with} \ \ \ \ p_k=\frac{e^{-\beta E_k}}{Z} \ \ \ \ \text{and} \ \ \ \ Z=\sum_k e^{-\beta E_k},
\end{equation}
where $\{\ket{E_k}\}_k$ and $\{E_k\}_k$ are the eigenvectors and eigenvalues, respectively, of $H_{\mathrm{FH}}$.
For this dataset, we fix the input state with $\beta = 0.5$, and we set the sampling interval as $\Delta = 0.2$.

\paragraph{Second Dataset.}
The second dataset also considers a 2D lattice geometry for the Fermi-Hubbard Model, but with a $2\times 3$ rectangular lattice.
For this dataset, we fix the input state as a simple product Fock state with exactly one fermion in the spin-up sector and one fermion in the spin-down sector, and we set the sampling interval as $\Delta = 0.05$. 
The parameters $J$, $h_x$, and $h_z$ are drawn independently and uniformly at random from an interval $[0,1]$. 

\paragraph{Third Dataset.} 
The third dataset considers the aforementioned transverse-field Ising model on a $4 \times 3$ rectangular lattice.
Here, we fix the input state as the uniform superposition state, and we set the sampling interval as $\Delta = 0.05$. The parameters $J$, $h_x$, and $h_z$ are drawn independently and uniformly at random from an interval $[0,1]$. 

\subsection{Time-Series Recovery Examples}

We begin by illustrating the performance of our three estimators on some example signals from each of the three datasets presented above.
Figure~\ref{fig:FH-example-reconstruction-high-shots} shows the real and imaginary parts of the exact signal $f$ alongside the reconstructed signal.
For each example, we estimate all entries of the $K$-band of the lifted matrix $Z$ with bandwidth $K=6$, using $3000$ shots per matrix entry. 
\begin{figure}[!ht]
\begin{subfigure}[h]{\textwidth}
\centering
\includegraphics[width=0.9\linewidth]{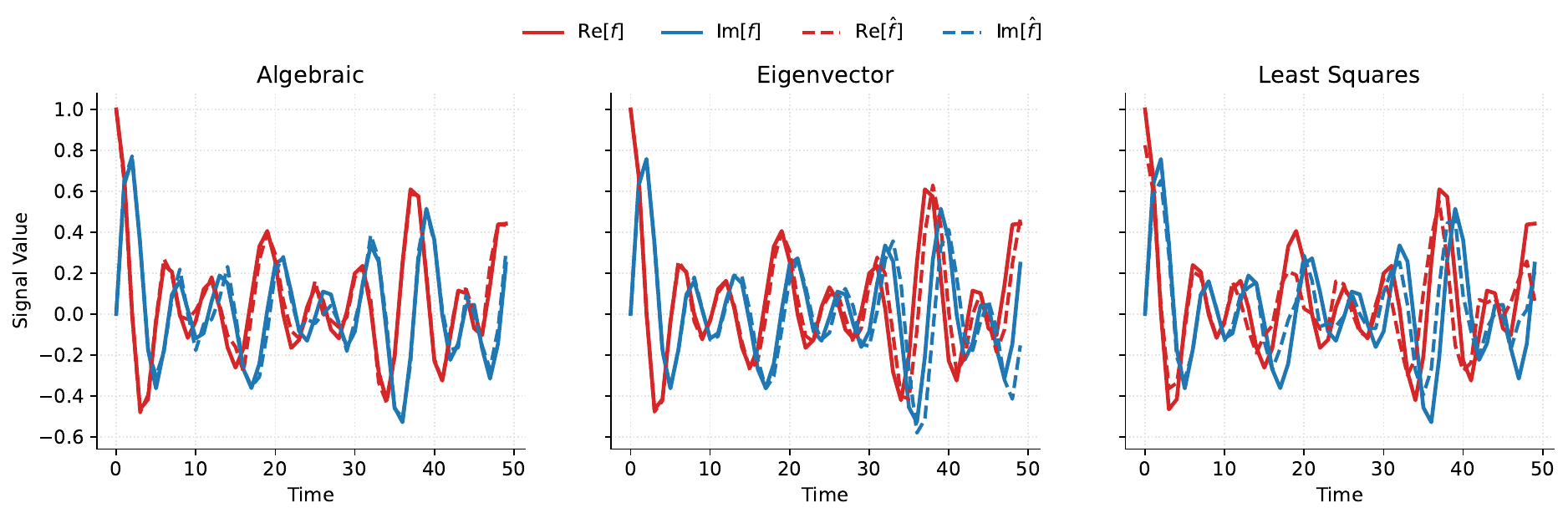}
\caption{$2 \times 2$ Fermi-Hubbard Model}
\end{subfigure}
\begin{subfigure}[h]{\textwidth}
\centering
\includegraphics[width=0.9\linewidth]{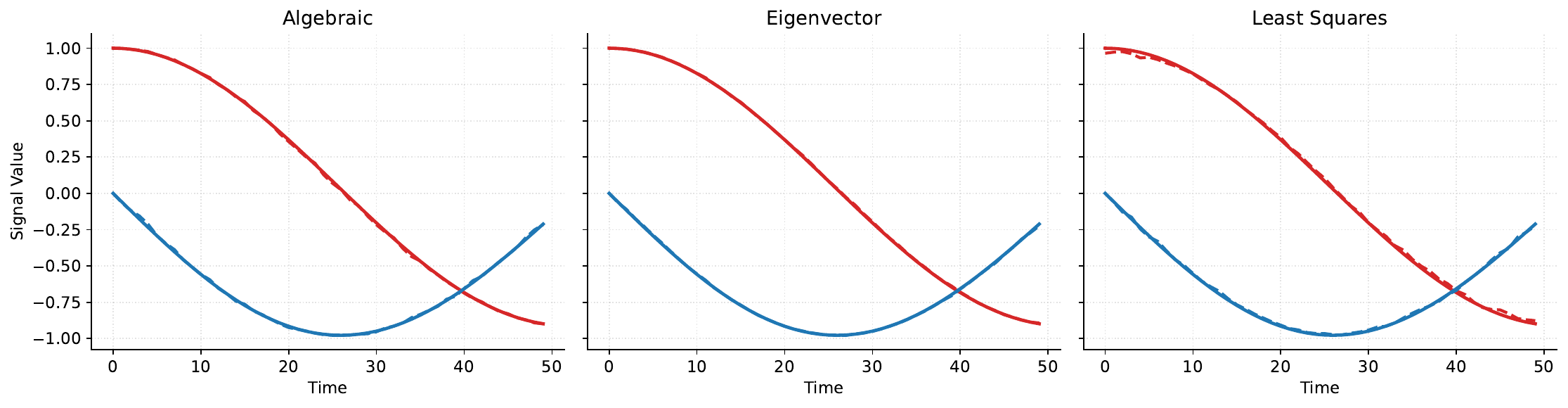}
\caption{$2 \times 3$ Fermi-Hubbard Model}
\end{subfigure}
\begin{subfigure}[h]{\textwidth}
\centering
\includegraphics[width=0.9\linewidth]{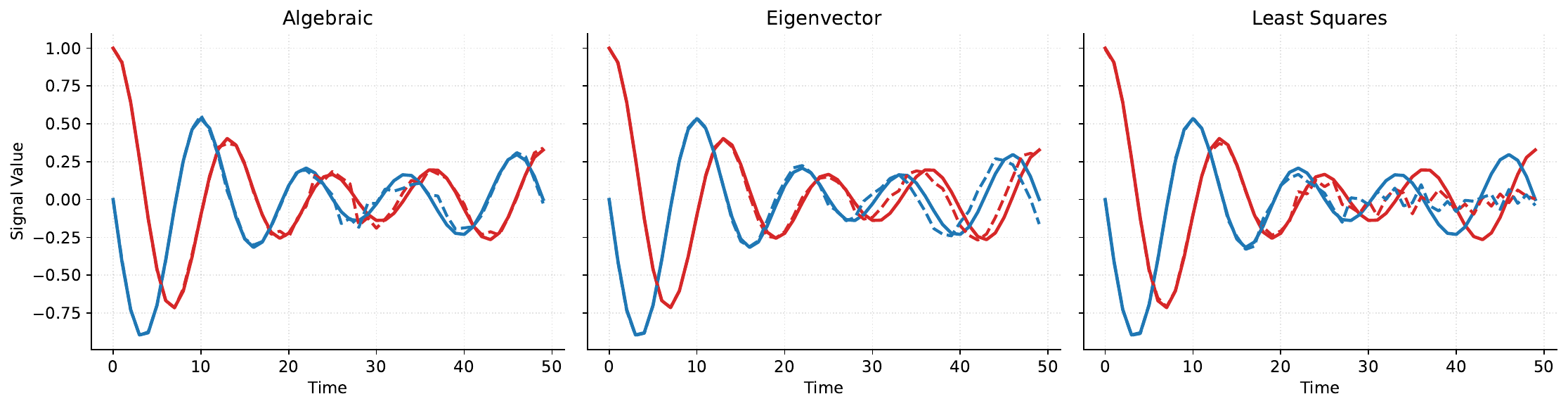}
\caption{$4 \times 3$ Transverse-Field Ising Model}
\end{subfigure}
\caption{ Signal recovery using our three proposed estimators (algebraic, eigenvector, and least-squares) for some example signals corresponding to the $2\times 2$ Fermi-Hubbard Model, $2\times 3$ Fermi-Hubbard Model, and $4 \times 3$ transverse-field Ising Model. 
For recovering each of the above signals, we use $3000$ shots for estimating each entry of the $6$-band of the corresponding lifted matrix $Z$.}
    \label{fig:FH-example-reconstruction-high-shots}
\end{figure}

We observe that the algebraic and eigenvector estimators perform more reliably than the least-squares estimator. 
We also observed that the computational cost of the least-squares is significantly higher than the other two estimators.
This is due to the fact that the least-squares estimator solves an optimization problem over $T \times T$ matrices, whereas the other two estimators operate on small local blocks and scale linearly in $T$. 
Due to the weaker performance of the least-squares estimator, from now on, we will only focus on the block-by-block algebraic and eigenvector estimators.

\subsection{The Length of Consecutive Zeros Constrains the Matrix Bandwidth}

To demonstrate Lemma \ref{lem:identifiability}, stating that a signal $f$ with $W$ consecutive zeros  
is uniquely determined by a $K$-band satisfying $K\geq W+1$ of its lifted matrix, we numerically study the signal defined in the statement of Lemma \ref{lem:integer-H}. 
We choose this signal to have easy control of the parameter $W$.
We fix the signal length to $T=20$ and generate signals with $W = 2, \ldots, T-1$.
For each signal, we perform recovery using measurements of the $K$-band of the lifted matrix, with $K$ ranging from $2$ to $T-1$. 
For each pair $(W, K)$, we perform $45$ independent trials and compute the average normalized Euclidean error (reconstruction error) for each estimator.

\begin{figure}[h!]
\begin{subfigure}[h]{\textwidth}
   \centering
\includegraphics[width=0.7\linewidth]{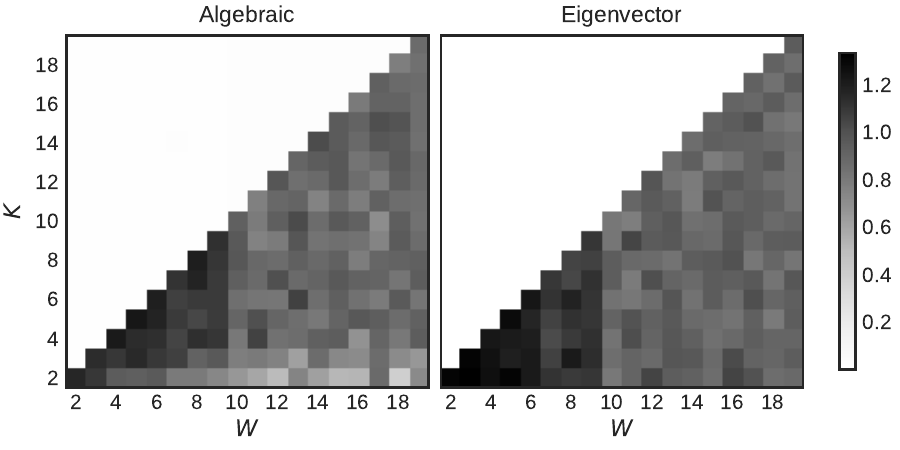}
\caption{$10^8$ shots per measurement.}
\end{subfigure}
\begin{subfigure}[h]{\textwidth}
    \centering
\includegraphics[width=0.7\linewidth]{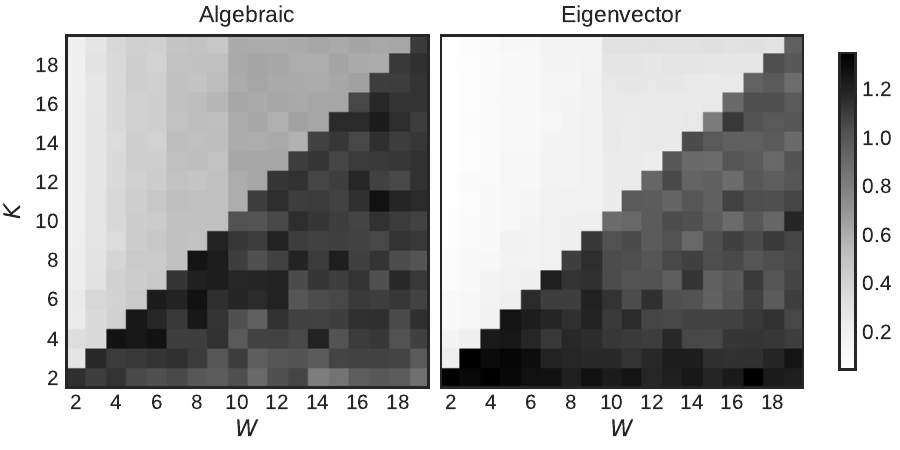}
\caption{$100$ shots per measurement.}
\end{subfigure}
\caption{  Signal recovery for the signal of Lemma \ref{lem:integer-H} using the algebraic and eigenvector estimators. 
We plot the average normalized Euclidean error in the recovered signal in grayscale for all values of $K= 2, \ldots T-1$ and $W= 2, \ldots, T-1$}
    \label{fig:Ksig-vs-Kmeas-example}
\end{figure}
The results in Figure~\ref{fig:Ksig-vs-Kmeas-example} using grayscale show two cases, one where the number of shots for measuring each entry of the $K$-band of $Z$ is $100$, and one where the number of shots is $10^{8}$ (effectively noiseless). 
At each noise level, the relationship $K\geq W+1$ imposed by Lemma~\ref{lem:identifiability} is clearly visible and remains significantly robust against shot noise.

\subsection{Recovery Performance as a Function of the Total Number of Shots}

In this subsection, we study how the reconstruction error scales with the total number of measurement shots and how this scaling depends on the relationship between the measurement bandwidth $K$ and the intrinsic structure of the signal.

Throughout this analysis, the total number of shots is distributed uniformly across all measured entries of the $K$-band of the lifted matrix $Z$.
As established in Lemma~\ref{lem:identifiability}, accurate recovery is only possible when the measurement bandwidth satisfies $K \geq W+1$, where $W$ denotes the length of the largest run of consecutive zero entries in the signal.

In numerical simulations, however, exact zeros are rarely observed due to shot noise.
We mentioned this before in Section~\ref{sec:adaptive-bandwidth-selection} as well.
To obtain a more realistic notion of intrinsic width, we therefore define $W$ operationally using a fixed noise threshold.
Given a threshold $\chi>0$, we say that a signal has intrinsic width $W$ if it contains a run of $W$ consecutive entries whose absolute value is smaller than $\chi$.
Throughout this section, we take $\chi=0.1$.

Using this definition, we group signals according to their estimated intrinsic width $W$ and analyze the reconstruction error within each group. We generate random signals for each dataset by randomly sampling the parameters of each Hamiltonian from the intervals previously specified. We sort them into groups labeled by $W$ according to whether they contain a run of $W$ consecutive entries whose absolute value is smaller than $\chi = 0.1$. The majority of these signals have $W=0$ at $\chi=0.1$. We generated random signals until we had $10$ random signals with $W=24$ to compare against $10$ random signals with $W=0$.

Figure~\ref{fig:Ksig-vs-Kmeas-realistic-FH22} illustrates this behavior for the $2\times 2$ Fermi-Hubbard model with a coherent Gibbs input state ($\beta = 0.5$) and signal length $T=50$.
We focus on two representative families of signals, characterized by intrinsic widths $W=0$ and $W=24$, respectively. 
The estimator is run on $5$ noisy versions of every signal in each $W$ family at a given shot budget. 
For each signal in each families ($W=0$ and $W=24$) we generate $5$ noisy versions using Bernoulli sampling from the expectation value, and we compute its normalized Euclidean reconstruction error.
Finally, for each family we compute the mean, maximum and minimum of these errors, with the mean reconstruction error shown in a solid line and the minimum and maximum indicated by the shaded region in Figure~\ref{fig:Ksig-vs-Kmeas-realistic-FH22}.

\begin{figure}[t]
\begin{subfigure}[h]{\textwidth}
\includegraphics[width=\linewidth]{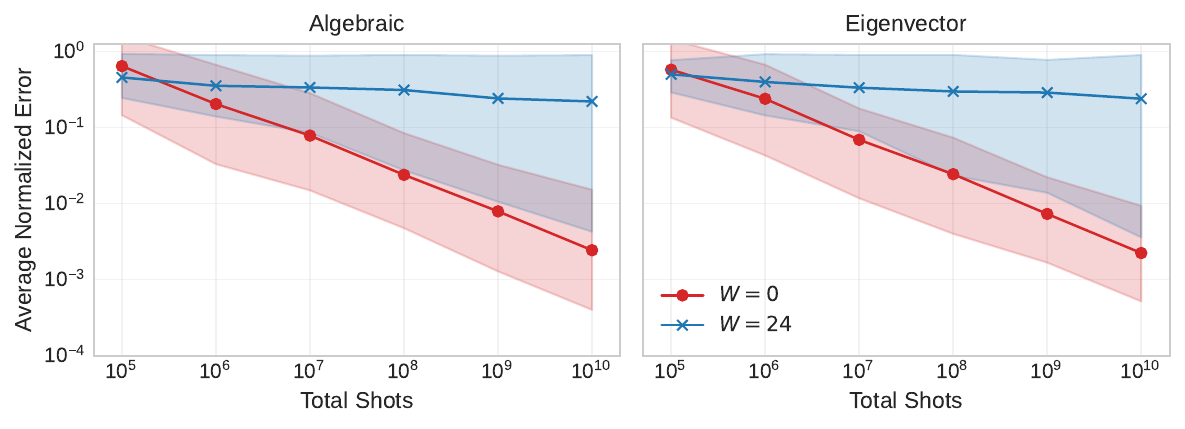}
\caption{With $K=2$ for all groups of signals. }
\end{subfigure}
\begin{subfigure}[h]{\textwidth}
\includegraphics[width=\linewidth]{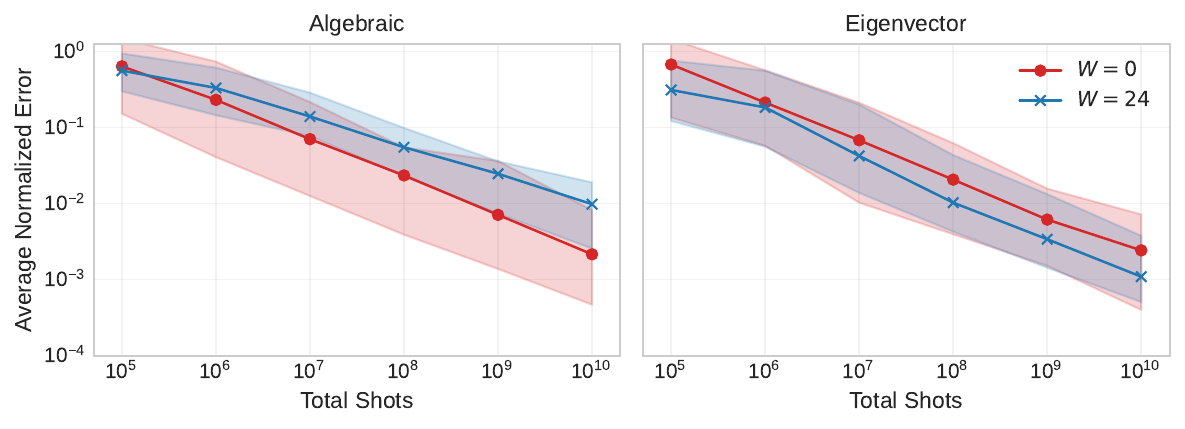}
\caption{With $K=W+2$ for each group of signals.}
\end{subfigure}
\caption{ \textbf{$2\times 2$ Fermi-Hubbard Model.} For signals with a coherent Gibbs state with $\beta=0.5$ as input, $T=50$ and randomly chosen Hamiltonian parameters. The signals are grouped into two sets of ten random signals labeled by $W =0$ and $W=24$. This grouping is done according to whether they contain either $0$ or $24$ consecutive entries below a noise threshold of $0.1$. 
}
    \label{fig:Ksig-vs-Kmeas-realistic-FH22}
\end{figure}

Figure~\ref{fig:Ksig-vs-Kmeas-realistic-FH22}(a) corresponds to a fixed measurement bandwidth $K=2$.
In this setting, signals with $W=0$ satisfy the identifiability condition $K \geq W+1$, and their reconstruction error decreases steadily as the total number of shots increases.
In contrast, signals with $W=24$ violate this condition, and their reconstruction error exhibits a pronounced plateau, indicating that increasing the number of shots cannot compensate for insufficient band information.

In Figure~\ref{fig:Ksig-vs-Kmeas-realistic-FH22}(b), we instead choose the measurement bandwidth adaptively as $K=W+2$.
This guarantees that the identifiability condition is satisfied for both signal families.
In this regime, the reconstruction error decreases with the total number of shots for all signals, confirming that once $K$ is chosen appropriately relative to $W$, any additional measurements directly translate into improved reconstruction accuracy.

It is important to point out that the shadowed area corresponds to the value between the maximum and minimum inside a given family and not the estimation error of a given single signal, which is much smaller, as one can see in Figure~\ref{fig:flat-to-decay}. 
The larger deviation observed for larger deviation for large shot budgets result from some signals inside the family becoming recoverable beyond some large shot budget threshold due to the following phenomena. 
We note that as the shot noise decreases, due to an increase in the shot budget, there is an associated effective width $W^*\leq W$, as previously suppressed time-series entries may rise above the noise threshold.
Once the condition $W\geq K\geq W^*+1$ is satisfied, we observe a transition from a plateau regime, where recovery is impossible, to a regime where the error begins to decay with the number of shots.
We discuss this effect later in Appendix~\ref{app:extranumerics} in Figure~\ref{fig:flat-to-decay}.
Therefore, the wide variation inside a family result from signal having the same $W$ for a threshold of noise of $0.1$ value but having a set a very different effective $W^*$ as the shot noise decreases.

Finally, we observe the same qualitative behavior for the other datasets considered in this work; corresponding results are reported in Appendix~\ref{app:extranumerics}.
These results demonstrate that the scaling of the reconstruction error with the total number of shots is governed primarily by the relationship between the measurement bandwidth $K$ and the effective intrinsic width $W$ of the signal, rather than by the shot count alone.

\subsection{Fixed Shots Budget Analysis}

In realistic quantum experiments, the total number of available measurement shots is limited and is often fixed. 
This means that increasing the number of matrix elements we choose to estimate necessarily reduces the number of shots allocated to each individual measurement. 
In this subsection, we study this experimentally relevant trade-off by fixing the total number of shots and varying the bandwidth $K$ used for reconstruction. 
Specifically, we fix the total number of shots to approximately one million and distribute them evenly across all measured entries of the $K$-band of the lifted matrix $Z$. 
We restrict attention to signals that do not contain two consecutive entries below the noise threshold of $0.1$, corresponding to an effective intrinsic width $W=1$ under the operational definition introduced earlier.
For each of the three datasets, we filter such signals and then evaluate the average normalized Euclidean reconstruction error for the algebraic and eigenvector estimators.

\begin{figure}[!ht]
    \centering
    \includegraphics[width=0.9\linewidth]{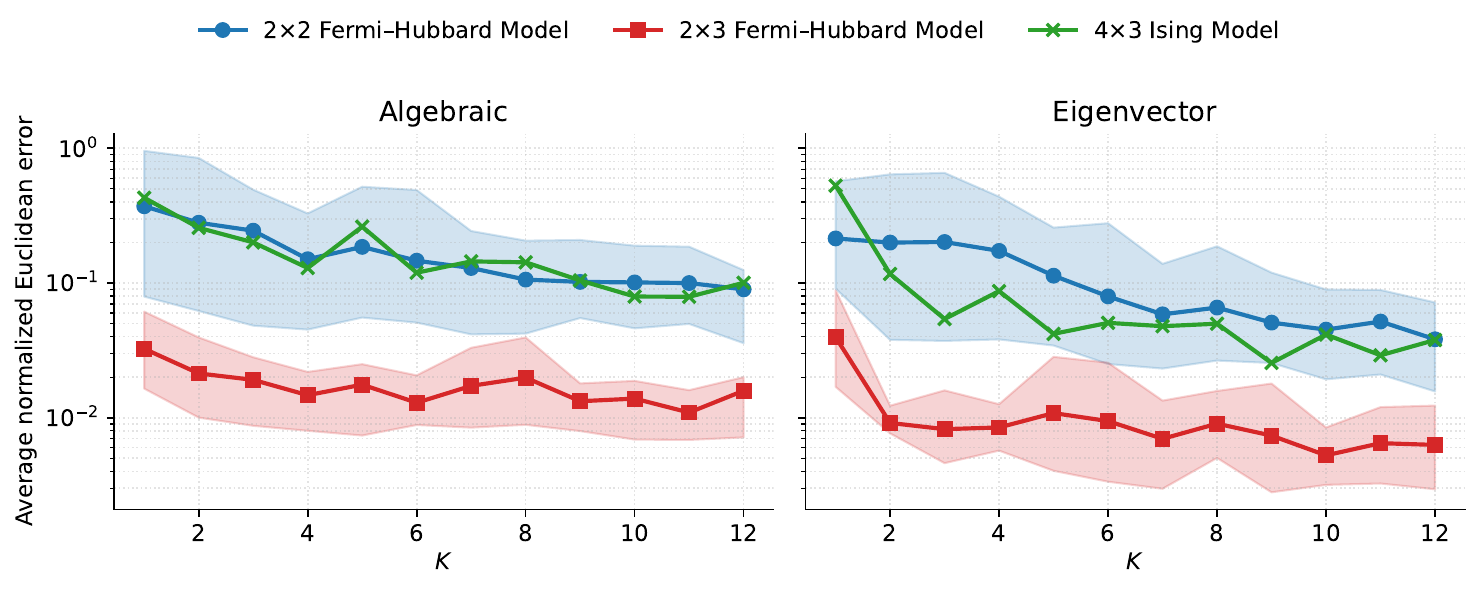}
    \caption{\textbf{Fixed Shots Budget Analysis.} Average normalized Euclidean reconstruction error of the block-by-block algebraic (left) and block-by-block eigenvector (right) estimators as a function of the measurement bandwidth $K$, with the total number of measurement shots fixed to approximately one million and distributed uniformly across the measured entries of the $K$-band of the lifted matrix $Z$.
Results are averaged over randomly generated signals with intrinsic width $W=1$ (defined operationally with noise threshold $\chi=0.1$) from the $2\times2$ and $2\times3$ Fermi-Hubbard models and the $4\times3$ transverse-field Ising model.
Solid lines denote the mean reconstruction error across filtered signals for each dataset, while shaded regions indicate the range between the minimum and maximum observed errors across signals.
No shaded region is shown for the transverse-field Ising model, since our corresponding dataset contains only one signal with $W=1$.}
    \label{fig:constant-shots}
\end{figure}

The results, shown in Figure~\ref{fig:constant-shots}, illustrate a clear trade-off. 
Increasing $K$ initially improves reconstruction accuracy because it provides more information about the signal.
However, as $K$ increases further, the fixed total shot budget must be divided among a growing number of matrix entries.
This reduces the number of shots per entry, increases statistical noise in each measurement, and eventually limits the achievable reconstruction accuracy.
Beyond a certain point, this increased noise outweighs the benefit of measuring additional matrix elements, and the reconstruction error begins to plateau.

The quality of the reconstruction achieved under a fixed shot budget is particularly strong. 
As discussed before in Figure~\ref{fig:constant-shots-summary} in Section~\ref{sec:summary}, we observe that even for long signals with length $T=150$, the recovered time-series closely matches the exact signal across the entire time window, despite operating under a strict shot budget of approximately one million total shots. 
This demonstrates that the proposed estimators, especially the eigenvector estimator, remain effective well beyond the moderate signal lengths used in the above numerical analysis and that their performance does not degrade noticeably as the time-series length increases.
In addition, we include a second representative example (Figure~\ref{fig:fix-shots-budget-analysis-example-signal}) in the appendix showing the reconstruction of a typical signal under the same fixed-shot-budget regime.

We emphasize that increasing the signal length $T$ further introduces an additional constraint: even if the measurement bandwidth $K$ is held fixed, the total number of measured matrix elements grows with $T$, reducing the number of shots per entry under a fixed budget.
This effect naturally leads to a plateau in reconstruction accuracy as $T$ increases, analogous to the behavior observed when increasing $K$. 

On near-term quantum devices where controlled operations are expensive and noisy, increasing $K$ under the fixed shot budget constraint will, for large enough $K$, eventually lead to \textit{worse} accuracy rather than just diminishing returns. 
A complete understanding of the optimal point $K^\star$ which balances the noise from these extra controlled operations would require a detailed noise model for specific hardware, so we leave this to future work.

\section{From Time-Series Estimation to Spectral Estimation}
\label{sec:time-series-to-spectrum}

Having established robust estimators in Section~\ref{sec:estimators} that recover the time-series $f$ (defined in~\eqref{def:discrete-time-signal}) and produce an estimate $\widehat{f}$, in this section, we now explain how such a time-series estimate can be converted into an estimate of the underlying energy spectrum of $H$ within the subspace actually probed by the input state $|\psi\rangle$.
We also analyze how the error in the time series estimation propagates into the error in estimating the eigenvalues of $H$ and clarify the roles played by the sampling step $\Delta$, the number of samples or distinct time points, $T$, and the spectral norm of the Hamiltonian.

Consider that $H$ has the following spectral decomposition:
\begin{equation}\label{eq:H-spectral-decomposition}
    H = \sum_{k=1}^{d} E_k \ket{E_k}\!\bra{E_k},
\end{equation}
where each $\ket{E_k} \in \mathbb{C}^d$ is an eigenvector with the corresponding eigenvalue $E_k \in \mathbb{R}$.
Assume that the input state $|\psi\rangle$ has support on exactly $r$ eigenvectors, i.e.,
\begin{equation}\label{eq:psi-decomposition}
    |\psi\rangle = \sum_{k=1}^{r} c_k |E_k\rangle,
    \qquad c_k \neq 0 \text{ and } \sum_{k=1}^r |c_k|^2 = 1.
\end{equation}
Here, $r$ is left arbitrary and may range from $1$ to $d$, and we introduce it simply to keep the analysis more general.
That being said, the signal takes the following form:
\begin{equation}
    f(t) = \sum_{k=1}^r |c_k|^2\,\mathrm{e}^{-\mathrm{i}E_k t}.
    \label{eq:signal-energy-mixture}
\end{equation}
Note that only the $r$ eigenvalues $\{E_k\}_{k=1}^r$ that appear in~\eqref{eq:signal-energy-mixture} are identifiable from $f(t)$.
Therefore, all of the analysis in this section concerns the restriction of $H$ to the $r$-dimensional subspace
\begin{equation}
    \mathcal{H}_{\psi} \coloneqq \mathrm{span}\{|E_1\rangle, \ldots, |E_r\rangle\},
\end{equation}
since eigenvectors orthogonal to $|\psi\rangle$ contribute zero weight and are invisible to the time-series.

Sampling $f(t)$ on the grid $t_n = n\Delta$ gives
\begin{equation}
    f_n = f(n\Delta) = \sum_{k=1}^r |c_k|^2\,\mathrm{e}^{-\mathrm{i}E_k n\Delta},
    \qquad n=0,\ldots,T-1.
    \label{eq:discrete-signal-energy}
\end{equation}
Discrete-time sampling induces an intrinsic ambiguity in the energies.
Indeed, for any integer $m\in\mathbb{Z}$ and any $n$,
\begin{equation}
    \mathrm{e}^{-\mathrm{i}(E + \frac{2\pi m}{\Delta})\, n\Delta}
    = \mathrm{e}^{-\mathrm{i}E n\Delta}\,\mathrm{e}^{-\mathrm{i}2\pi mn}
    = \mathrm{e}^{-\mathrm{i}E n\Delta}.
\end{equation}
Therefore, from the discrete samples $\{f_n\}_{n=0}^{T-1}$ alone, an energy $E$ cannot be distinguished from
$E + 2\pi m/\Delta$.
To make the energies uniquely identifiable (within $\mathsf{S}_\psi$), we assume a known energy window and choose $\Delta$ small enough so that this window fits strictly inside one aliasing period.
To be precise, we assume that there exists a known $E_{\max}>0$ such that the $r$ energies visible to $|\psi\rangle$ satisfy
\begin{equation}\label{eq:energy-upper-bound}
    |E_k| \le E_{\max}
    \qquad \text{for all } k=1,\ldots,r,
\end{equation}
and the sampling step $\Delta$ is chosen so that
\begin{equation}
    \Delta < \frac{\pi}{E_{\max}}.
    \label{eq:aliasing-avoidance}
\end{equation}
This is exactly the Nyquist--Shannon sampling rate. 
In practice, one can interpret this as a simple rescaling rule: if one has an a priori bound $E_{\max}$ on the energies that $|\psi\rangle$ can overlap with, then one chooses $\Delta$ so that it satisfies the above condition.

However, even with a proper sampling rate, extracting energies from $\widehat{f}$ is non-trivial due to two error sources:
\begin{enumerate}
    \item \textbf{Spectral Leakage:} Since we sample for a finite duration $T$, this sets the best achievable resolution. So if two energies are extremely close, then over short times they remain nearly indistinguishable. 
    Due to this, it is important to assume that there exists a nonzero separation between energies restricted to $\mathcal{H}_{\psi}$ (energies outside $\mathcal{H}_{\psi}$ are irrelevant as discussed above):
    \begin{equation}\label{def:energy-gap}
        \gamma \coloneqq \min_{j\neq k} | E_j - E_k | > 0.
    \end{equation}
    \item \textbf{Time-Series Estimation Error:} Since we are reconstructing the spectrum from a time-series estimate $\widehat{f}$ which deviates from the true time-series by $\| \widehat{f} - f\|_2 \leq \varepsilon$, this error propagates in the estimation of the spectrum.
\end{enumerate}

The following theorem now establishes the sufficient conditions on the total simulation time $T$, the estimation error $\varepsilon$, and other parameters, to guarantee that the true spectral peaks are distinguishable from noise.

\begin{theorem}[Stability of Spectral Estimation]\label{thm:stability-spectral-estimation}
    Let $f \in \mathbb{C}^T$ be the discrete time-series (defined in~\eqref{def:discrete-time-signal}) induced by a Hamiltonian $H$ with spectral decomposition given by~\eqref{eq:H-spectral-decomposition} and the input state $|\psi\rangle$ defined in~\eqref{eq:psi-decomposition}. Furthermore, let $\widehat{f}$ be a time-series estimate such that $\|\widehat f-f\|_2\le \varepsilon$ for $\varepsilon>0$. Suppose that the $r$ energies visible to $|\psi\rangle$ satisfy~\eqref{eq:energy-upper-bound} and are separated by a gap $\gamma$ as defined in~\eqref{def:energy-gap}. Also assume that the sampling step $\Delta > 0$ satisfies the Nyquist--Shannon sampling rate as given by~\eqref{eq:aliasing-avoidance}. 

    If the total number of time steps $T$ is sufficiently large, such that:
    \begin{enumerate}
        \item \textbf{Resolution Condition:} $T \geq 2\pi/(\Delta\gamma) $, and
        \item \textbf{Signal-to-Noise Condition:}
        \begin{equation}
            \frac{2\min_{k=1}^r |c_k|^2}{\pi} \sqrt{T} > \frac{4 \pi}{\gamma \Delta \sqrt{T}} + 2\varepsilon.
        \end{equation}
    \end{enumerate}
    Then for every eigenvalue $E_k$, the discrete Fourier Transform of the estimate $\widehat{f}$ exhibits a local maximum at the grid index $\ell_k$ such that the eigenvalue estimate $\widehat{E}_k = \sfrac{2\pi \ell_k}{T \Delta}$ satisfies:
    \begin{equation}
        \left | \widehat{E}_k - E_k \right | \leq \frac{\pi}{T \Delta}.
    \end{equation}
    Consequently, to achieve a target precision $\delta > 0$, it suffices to choose $T \geq \sfrac{2\pi}{\delta \Delta}$.
\end{theorem}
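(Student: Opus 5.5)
Define the DFT $F(\ell) \coloneqq \frac{1}{\sqrt{T}}\sum_{n=0}^{T-1} f_n \mathrm{e}^{\mathrm{i} 2\pi \ell n/T}$, and $\widehat{F}(\ell)$ analogously for $\widehat f$. By Parseval, $\|\widehat F - F\|_2 = \|\widehat f - f\|_2 \le \varepsilon$, so $|\widehat F(\ell) - F(\ell)| \le \varepsilon$ for every $\ell$. The proof will therefore compare values of $|\widehat F|$ through $|F|$, paying at most $2\varepsilon$ in any comparison. Substituting~\eqref{eq:discrete-signal-energy} gives
\[
F(\ell) = \frac{1}{\sqrt T}\sum_{k=1}^r |c_k|^2 D_T(\omega_\ell - E_k\Delta),
\qquad
D_T(x) \coloneqq \sum_{n=0}^{T-1}\mathrm{e}^{\mathrm{i} x n},
\]
where $\omega_\ell = 2\pi\ell/T$. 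The Nyquist condition~\eqref{eq:aliasing-avoidance} ensures that the points $E_k\Delta$ lie strictly inside $(-\pi,\pi)$, so each energy corresponds to a unique frequency modulo $2\pi$, and we may index $\ell$ accordingly (negative $\ell$ is read modulo $T$).

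\textbf{Step 1 (lower bound at the nearest grid point).} For each $k$, let $\ell_k^0$ be the grid index with $|\omega_{\ell_k^0} - E_k\Delta| \le \pi/T$. Then $|D_T(x)| = |\sin(Tx/2)/\sin(x/2)|$, and for $|x|\le \pi/T$ the concavity estimate gives $|D_T(x)| \ge 2T/\pi$. The self-term therefore contributes at least $\frac{2|c_k|^2}{\pi}\sqrt T$ to $|F(\ell_k^0)|$.

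\textbf{Step 2 (cross-term leakage).} For $j\neq k$, the resolution condition $T\ge 2\pi/(\Delta\gamma)$ implies $|\omega_{\ell_k^0}-E_j\Delta| \ge \gamma\Delta - \pi/T \ge \gamma\Delta/2$. Using $|D_T(x)|\le 1/|\sin(x/2)| \le \pi/|x|$ on $(-\pi,\pi]$, and summing over $j$ weighted by $|c_j|^2$ (total weight at most $1$), the cross terms at $\ell_k^0$ are bounded by $\frac{1}{\sqrt T}\cdot\frac{2\pi}{\gamma\Delta}$. The same bound on leakage holds at the competitor grid points considered below.

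\textbf{Step 3 (existence of the peak and localization).} By the signal-to-noise condition, $|\widehat F(\ell_k^0)|$ exceeds every value of $|\widehat F|$ at grid points that are at distance $\ge$ roughly $\gamma\Delta/2$ from all $E_j\Delta$, since at such points the total leakage plus noise is at most $\frac{2\pi}{\gamma\Delta\sqrt T}+\varepsilon$. Hence the maximum of $|\widehat F|$ over the window $|\omega_\ell - E_k\Delta| < \gamma\Delta/2$ is attained in the interior of the window, which gives a local maximum $\ell_k$. I would then argue that $\ell_k$ is the nearest grid point to $E_k\Delta$, so that $|\widehat E_k - E_k| = |\omega_{\ell_k} - E_k\Delta|/\Delta \le \pi/(T\Delta)$.

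This last localization is the main obstacle. The second-nearest grid point can carry self-term amplitude comparable to the nearest one, up to a factor of about $2/\pi$ when $E_k\Delta$ sits midway between grid points. The available SNR margin, which is only $\frac{4\pi}{\gamma\Delta\sqrt T}+2\varepsilon$ against $\frac{2}{\pi}|c_k|^2\sqrt T$, does not obviously separate these two points. I would first try to prove the weaker $|\widehat E_k - E_k|\le 2\pi/(T\Delta)$ (adjacent grid points). If the strict statement is needed, I would either reinterpret $\ell_k$ as the nearest grid index and show that it is a local maximum, or rescale the constant. Finally, the precision claim follows by setting $\pi/(T\Delta)\le\delta$, which is implied by $T\ge 2\pi/(\delta\Delta)$.
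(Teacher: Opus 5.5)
Your Steps 1 and 2 are the paper's proof, step for step. Both use the Parseval transfer $|\widehat F(\ell)-F(\ell)|\le\varepsilon$ and the main-lobe bound $|D_T(x)|\ge 2T/\pi$ on $|x|\le\pi/T$ (the paper puts the $1/\sqrt T$ into its kernel). Both use $T\ge 2\pi/(\Delta\gamma)$ to keep the nearest grid point at distance at least $\gamma\Delta/2$ from the other frequencies. Both then apply the leakage bound $2\pi/(\gamma\Delta\sqrt T)$ at the nearest index and at grid points far from every frequency.

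The paper stops essentially where your Step 3 begins. It defines $\ell_k$ to be the nearest grid index to $E_k\Delta$, so $|\widehat E_k-E_k|\le\pi/(T\Delta)$ holds by construction. The only comparison it proves is that $|\widehat F(\ell_k)|$ strictly exceeds $|\widehat F(\ell)|$ for every $\ell$ at distance at least $\gamma\Delta/2$ from all $E_j\Delta$. It never compares $\ell_k$ with grid points inside that window, in particular its two neighbours. So ``local maximum'' is asserted rather than proved: the paper takes your option of reinterpreting $\ell_k$ as the nearest index and stops there.

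Your diagnosis is right, and no extra idea will close this, because the localization claim is false under the stated hypotheses. Take $r\ge 2$ and $T$ large, and put $E_k\Delta$ at distance $\pi/T-\eta$ from a grid point $\omega_\ell$ and $\pi/T+\eta$ from $\omega_{\ell-1}$. The self-term magnitudes at $\ell$ and $\ell-1$ differ by an amount that vanishes as $\eta\to0$, and the leakage at each is at most $2\pi/(\gamma\Delta\sqrt T)$. Now perturb $\widehat F(\ell-1)$ in phase with $F(\ell-1)$ by just over twice that bound, which the signal-to-noise condition permits once $T$ is large. This gives $|\widehat F(\ell-1)|>|\widehat F(\ell)|$, so $\ell$ is not a local maximum, yet it is the only grid index within $\pi/T$ of $E_k\Delta$.

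What Steps 1--2 genuinely support is one of two statements:
\begin{itemize}
\item the paper's separation statement at the nearest index; or
\item your window statement, that a local maximum exists within roughly $\gamma\Delta/2$ of $E_k\Delta$. This needs a little slack at the window edges, since the first grid point outside the window can be slightly closer than $\gamma\Delta/2$ to a neighbouring frequency.
\end{itemize}
Your fallback to $2\pi/(T\Delta)$ is not automatic either. It requires comparing the main lobe of the self-term with its first sidelobe (at most about a third of the main-lobe height). The stated signal-to-noise condition only controls comparisons with grid points far from every frequency.

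A further gap you share with the paper: $|D_T|$ is $2\pi$-periodic, so separations must be measured on the circle, and $|D_T(x)|\le\pi/|x|$ applies only after reducing $x$ to $[-\pi,\pi]$. Under $\Delta<\pi/E_{\max}$ the points $E_k\Delta$ may lie anywhere in $(-\pi,\pi)$, so energies far apart can be close modulo $2\pi$. Then your bound $|\omega_{\ell_k^0}-E_j\Delta|\ge\gamma\Delta-\pi/T$ fails, and so does the paper's identification $\gamma_\omega=\gamma\Delta$.

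For example, take $T$ odd, $E_1\Delta=\pi-\pi/T$, $E_2\Delta=-\pi+\pi/T$ and $|c_1|^2>|c_2|^2$. Both frequencies sit on adjacent grid points, and every other DFT coefficient of $f$ vanishes. There is then no local maximum within $\pi/T$ of $E_2\Delta$, even for arbitrarily small $\varepsilon$, although both stated conditions hold for large $T$. Replacing $\gamma\Delta$ by the circular gap, or sampling with $\Delta\le 2\pi/(2E_{\max}+\gamma)$, repairs this.

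You were also right to read $\ell$ modulo $T$: the paper's $\widehat E_k=2\pi\ell_k/(T\Delta)$ with $\ell_k\in\{0,\dots,T-1\}$ needs that for negative energies.
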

\begin{proof}
    The proof is deferred to Appendix~\ref{app:spectral-estimation-proof}.
\end{proof}

\paragraph{Binning of Nearly Degenerate Energies.}
The resolution condition in Theorem~\ref{thm:stability-spectral-estimation} makes explicit a fundamental limitation of time-series-based spectral estimation: resolving two eigenvalues separated by a small gap $\gamma$ requires total simulation time $T$ that scales as $1/(\Delta \gamma)$. 
In many physically relevant Hamiltonians, particularly in many-body systems, the spectrum may contain clusters of very closely spaced energies. In such a situation, enforcing strict extraction of every individual eigenvalue may be both unnecessary and impractical, as the required total simulation time $T$ becomes prohibitively large. 
A natural alternative is to group nearby eigenvalues into spectral bins and aim to estimate their aggregate spectral weight rather than resolving each eigenvalue individually. 
From the perspective of the time-series $f$ of length $T$ sampled at rate $\Delta$, this corresponds to treating eigenvalues within a resolution window of width $O(1/(T\Delta))$ as effectively indistinguishable. 
One can readily extend the analysis of Theorem~\ref{thm:stability-spectral-estimation} to this setting: by relaxing the gap condition and allowing eigenvalues within a bin to be merged, one trades fine spectral resolution for reduced total simulation time and, with that, increased robustness to noise. 
In practice, one can choose the bin width as a function of the target accuracy and noise level,  yielding a controlled coarse-grained spectral estimate that is often sufficient for physical applications.

\section{Conclusion and Open Problems}
\label{sec:conclusion}

In this paper, we introduced Quantum Phaselift, a lifting-based framework that enables the estimation of quantum time-series without the prohibitive circuit depths required by standard methods.
By estimating the entries of the rank-one matrix $Z = ff^{\dagger}$ rather than directly estimating the signal entries, our approach requires controlled time evolution over short times.
We demonstrated that measuring a narrow band of size $K$ of this matrix is information-theoretically sufficient for unique signal reconstruction. 

This provides a substantial resource advantage: the depth of the controlled time evolution circuits scales only with $K$, which is $O(1)$ for generic signals, rather than with the total evolution time, effectively decoupling the total evolution time from the control depth. 
We showed that in a noiseless scenario, the bandwidth $K$ is determined by the largest run of consecutive zeros in the time-series vector $f$. 
In the realistic scenario of shot noise induced by the access to a finite number of samples, $K$ is determined by the largest segment of $f$ with values under a pre-selected noise threshold.
We also demonstrated that choosing a band with a slightly larger width, increased by a small constant, can have a significant improvement in the required number of shots to achieve the same quality of signal recovery.

Compared to standard phase-estimation circuits, the only additional requirement of our approach is the need to undo the uncontrolled state preparation and measure the system register.
While this introduces a small overhead in circuit depth, it enables the use of error-mitigation techniques similar to those employed in verified phase estimation~\cite{OBrien2021}, which can substantially improve robustness on near-term hardware.

We introduced three classical post-processing algorithms to reconstruct the underlying signal from the noisy observations of this narrow $K$-band of $Z$: the block-by-block algebraic, block-by-block eigenvector, and least-squares estimators. 
We rigorously proved that all three estimators guarantee exact recovery in the noiseless setting and derived stability bounds for the block-by-block algebraic estimator in the case when the signal has no zero entries and for the least-square estimator in terms of the smallest conic singular value of the associated $K$-banded measurement map, leaving its explicit characterization and the resulting query-complexity bounds to future work.
We observed that among the three estimators, the block-by-block eigenvector performs most efficiently in our numerical experiments.

We numerically demonstrated that one could recover with high quality the time-series generated by Fermi-Hubbard and transverse-field Ising Hamiltonians.
In particular, we demonstrated accurate reconstruction of signals with more than $100$ time points using only a few million measurement shots and modest classical post-processing time.
These results indicate that the proposed framework is well suited for near-term experimental implementations.

In the current analysis, we have considered a banded matrix, with the same bandwidth $K$ on every row. 
In practice, one could design strategies that optimize the selection of matrix elements $Z_{ij}$ to estimate, in order to preserve the quality of the recovery while reducing the shot counts.

More broadly, the lifting framework introduced here is not limited to time-evolution unitaries.
It applies equally to estimating expectation values of the form $\langle \psi | U | \psi \rangle$ for general unitary circuits.
Given a decomposition $U = U_N U_{N-1} \cdots U_1$, one may define a sequence $f(i) = \langle \psi | U_i \cdots U_1 | \psi \rangle$ and apply Quantum Phaselift to estimate the final value $f(N)$.
As in the time-evolution setting, this approach requires controlling only a constant number of circuit layers, substantially reducing implementation cost.

The above idea of \textit{slicing} $U$ into simpler unitaries could also be applied to the (Trotterized) circuit implementing the controlled time-evolution of $\mathrm{e}^{\mathrm{i}HK\Delta}$ in order to reduce further the depth of its controlled component at the cost of increasing further the number of required shots by a factor proportional to the amount of additional circuit slices, as we now must solve a longer time-series vector.

This last suggestion has strong similarities with the recent result by Schiffer et al.~\cite{Schiffer2025}, named Sequential Hadamard Test, where they decompose a Trotterized time-dynamics simulation into circuit slice composed of single gates. 
Their reconstruction algorithm is based on a telescoping product procedure, which shares conceptual similarities with our algebraic estimator when specialized to the case $K = 1$. 
Nonetheless, we expect that the reconstruction techniques developed in our work can be used to improve the robustness and efficiency of the classical post-processing in the Sequential Hadamard Test. 
A fundamental limitation of the Sequential Hadamard Test is that the total number of shots scales with the number of gates in the circuit, in addition to other problem parameters,  which can make practical implementations challenging. 
Again, this can be mitigated by allowing for slices of a larger number of layers of gates, optimizing the trade-off between the reduction of the depth of the controlled circuit and the additional sample cost. 

Finally, the lifting framework presented here admits natural extensions beyond simple time-correlation functions as in~\eqref{eq:L-amplitude}.
It may be adapted to estimate more general observables of the form $\langle \psi | O \mathrm{e}^{-\mathrm{i} H t} | \psi \rangle$, as well as other complex quantities arising in Krylov and subspace methods~\cite{yoshiokaKrylovDiagonalizationLarge2025,yuQuantumCentricAlgorithmSampleBased2025,mottaSubspaceMethodsElectronic2024}.
It also connects naturally to modern spectral-estimation techniques such as dynamic mode decomposition and ESPRIT-style methods~\cite{shenEfficientMeasurementDrivenEigenenergy2024,shenEstimatingEigenenergiesQuantum2025,dingESPRITAlgorithmHigh2024,stroeksSpectralEstimationHamiltonians2022}, and may find applications in the estimation of out-of-time-order correlators~\cite{Abanin2025}.
Exploring these directions is an interesting avenue for future work.

\section{Acknowledgments}
STF and DJP were partially funded by the U.S.\ Department of Energy, Office of Science, National Quantum Information Science Research Centers, Co-design Center for Quantum Advantage under contract number DE-SC0012704 and by the National Science Foundation National Quantum Virtual Laboratory ERASE grant number OSI-2435244.

\appendix

\section{SWAP measurement}

\subsection{Estimating the Diagonal Elements}\label{app:estimating-diagonal-ele-swap}

As mentioned in the main text, for certain hardware architectures, particularly those where deep circuits or long coherence times are prohibitive, it can be more practical to implement a SWAP test between a copy of $\rho_\psi$ and one of $\rho_\psi(t)$. 
In this case, the overlap can be obtained as
\begin{equation}
\operatorname{Tr}\!\left[\operatorname{SWAP}(\rho_\psi\otimes\rho_\psi(t))\right]=\operatorname{Tr}\!\left[\rho_\psi\rho_\psi(t)\right]
\end{equation} 
at the expense of doubling the number of qubits, as shown below:
\begin{center}
    \begin{quantikz}
    \lstick{$\ket{\boldsymbol{0}}$} &\gate{U_\psi} &\gate{\mathrm{e}^{-\mathrm{i}Ht}} & \meter[2]{z}\\
    \lstick{$\ket{\boldsymbol{0}}$} &\gate{U_\psi} &\qw & 
 \end{quantikz}
\end{center}
Here, $z$ is a single bit and is the outcome of jointly measuring both the registers in the eigenbasis of the SWAP operator, whose eigenvalues are $+1$ and $-1$, corresponding respectively to the symmetric and antisymmetric subspaces. 
We next show this in more detail. Let the two registers contain $n$ qubits each. This measurement can be efficiently implemented without an explicit SWAP or controlled-SWAP gate, by performing a layer of Bell-basis measurements across the two registers. 

Concretely, one first applies a layer of $n$ Hadamard gates to all qubits in the first register, followed by a layer of pairwise CNOT gates coupling each qubit $i$ of the first register to the corresponding qubit $i$ of the second register. 
Each qubit is then measured in the computational basis, producing bitstrings
\begin{equation}
    \boldsymbol{x} = x_1 \ldots x_n \quad \text{and} \quad \boldsymbol{y} = y_1 \ldots y_n
\end{equation}
for the two registers, respectively. 
The joint measurement outcome corresponding to the SWAP eigenvalue is obtained by computing the overall parity of the bitwise logical ANDs between corresponding qubits:
\begin{equation}\label{eq:z-parity}
z=\bigoplus_{i=1}^n x_i\cdot y_i,
\end{equation}
where $\cdot$ denotes the logical AND and $\oplus$ denotes XOR (addition modulo $2$). 
The parity bit $z$ thus determines whether the joint two-register state lies in the symmetric $(z=0)$ or antisymmetric $(z=1)$ subspace. 
Averaging $(-1)^z$ over $N$ independent runs of the circuit gives an unbiased estimator for the expectation value of the SWAP operator with respect to the joint state $\rho_\psi\otimes\rho_\psi(t)$,
\begin{equation}
    \langle (-1)^z \rangle = \operatorname{Tr}\!\left[\operatorname{SWAP}(\rho_\psi\otimes\rho_\psi(t))\right],
\end{equation}
which is what we want. 
In other words, an unbiased estimator $\widehat Z_{ii}$ for the $i$-th diagonal element $Z_{ii}$ is given by
\begin{equation}
    \widehat Z_{ii} =  \frac{1}{N}\sum_{i=1}^N (-1)^{z_i},
\end{equation}
where $z_i$ is the measurement outcome of the $i$-th run.

\subsection{Estimating the Off-diagonal Elements}\label{app:estimating-off-diagonal-ele-swap}

One can use the relation $\text{Tr}\left[\operatorname{SWAP}\left(A\otimes B\right)\right]=\text{Tr}\left[AB\right]$ and write
\begin{equation}
    Z_{ij} =
\operatorname{Tr}\!\left[\operatorname{SWAP} \left (\mathrm{e}^{\mathrm{i}H(t_j-t_i)}\otimes I \right)(\rho_\psi(t_i)\otimes\rho_\psi)\right].
\end{equation}
This implies that the same quantity, that is, $\operatorname{Re}[Z_{ij}]$ can be obtained using a shallower circuit that trades circuit depth for an increased qubit count:
\begin{center}
    \begin{quantikz}
    \lstick{$\ket{0}$}  & \gate{\text{Had}}  &\qw        &\ctrl{1}       & \qw                & \gate{\text{Had}}  &\meter{x} \\
     \lstick{$\ket{\boldsymbol{0}}$} &\gate{U_\psi} & \gate{\mathrm{e}^{-\mathrm{i}Ht_i}} & \gate{\mathrm{e}^{\mathrm{i}H(t_j-t_i)}} & \meter[2]{z}                    \\
     \lstick{$\ket{\boldsymbol{0}}$} &\gate{U_\psi} \qw & \qw & \qw   &\qw  
    \end{quantikz}
\end{center}
Here, the control qubit is measured in the computational basis, yielding an outcome $x\in\{0,1\}$, while the two lower registers are jointly measured in the Bell basis, yielding the parity bit $z\in\{0,1\}$ as previously defined in~\eqref{eq:z-parity}.  
Averaging $(-1)^{x + z}$ over $N$ independent runs of the above circuit gives an unbiased estimator for $\operatorname{Re}[Z_{ij}]$,
\begin{equation}
\widehat{\operatorname{Re}[Z_{ij}]} = \frac{1}{N}\sum_{i=1}^N (-1)^{x_i + z_i},
\end{equation}
where $(x_i, z_i)$ is the measurement outcome of the $i$-th run. This can be easily seen from what follows.

Using the shorthand notations as before, that is, $U \equiv \mathrm{e}^{\mathrm{i}H(t_j-t_i)}$, $U_{\phi} \equiv \mathrm{e}^{-\mathrm{i}Ht_i} U_{\psi}$, and $|\phi\rangle \equiv U_{\phi} |\boldsymbol{0}\rangle$, the joint state after the controlled-$U$ operation is
\begin{equation}
    \ket{\eta} = \frac{1}{\sqrt{2}} \left ( \ket{0} \otimes \ket{\phi} \otimes \ket{\psi} +  \ket{0} \otimes U \ket{\phi} \otimes \ket{\psi} \right)
\end{equation}
Measuring the control bit in the $X$ basis (via the final Hadamard gate and the computational basis measurement) and the two lower registers in the Bell basis, that is, the eigenbasis of the SWAP operator, corresponds to measuring the observable $X \otimes \operatorname{SWAP}$. 
The eigenvalue of this observable for a single run of the circuit is $(-1)^{x+z}$. 
The expected value is then
\begin{align}
    \langle (-1)^{x+z} \rangle & = \operatorname{Tr}\!\left [ (X \otimes \operatorname{SWAP}) \ket{\eta}\! \bra{\eta}\right]\\
    & = \frac{1}{2}\operatorname{Tr}\!\left [  \operatorname{SWAP} \left (U \ket{\phi}\! \bra{\phi} \otimes \ket{\psi}\! \bra{\psi} +  \ket{\phi}\! \bra{\phi} U^{\dagger} \otimes \ket{\psi}\! \bra{\psi}\right) \right]\\
    & = \operatorname{Re}\!\left[\operatorname{Tr}\!\left [  \operatorname{SWAP} \left (U \ket{\phi}\! \bra{\phi} \otimes \ket{\psi}\! \bra{\psi}\right) \right]\right]\\
    & = \operatorname{Re}\!\left[Z_{ij}\right].
\end{align}

\section{Proof for Lemma~\ref{lem:integer-H}}\label{app:integer-H-proof}

The operator $P_j = (I - Z_j)/2$ is a projector onto the state $\ket{1}$ for the $j$-th qubit (eigenvalue 1 for $\ket{1}$, 0 for $|0\rangle$). 
For a computational basis state $|x\rangle = |x_{n-1}\dots x_0\rangle$ where $x_j \in \{0,1\}$, the Hamiltonian $H$ acts as:
    \begin{equation}
        H |x\rangle = \sum_{j=0}^{n-1} 2^j P_j |x\rangle = \left( \sum_{j=0}^{n-1} x_j 2^j \right) |x\rangle.
    \end{equation}
    The term in the parentheses is simply the integer value $k$ represented by the binary string $x$. Since $x$ ranges over all binary strings of length $n$, the spectrum of $H$ comprises exactly the integers $E_k = k$ for $k \in \{0, 1, \dots, N-1\}$.

    The signal $f(t)$ is then
    \begin{equation}
        f(t) = \frac{1}{N} \sum_{k=0}^{N-1} \langle k| \mathrm{e}^{-\mathrm{i} H \frac{2\pi}{N} t} |k\rangle = \frac{1}{N} \sum_{k=0}^{N-1} \mathrm{e}^{-\mathrm{i} \frac{2\pi t}{N} k}.
    \end{equation}
    This is a geometric series $\frac{1}{N} \frac{1 - r^N}{1 - r}$ with ratio $r = \mathrm{e}^{-\mathrm{i} \frac{2\pi t}{N}}$. Now there are two cases: 1) If the integer $t$ is a multiple of $N$, then $r=1$ and the sum is $\frac{1}{N}(N) = 1$, and 2)  If the integer $t$ is not a multiple of $N$, then $r \neq 1$. However, for integer values of $t$, we have $r^N = \mathrm{e}^{-\mathrm{i} 2\pi t} = 1$.  Consequently, the numerator $1-r^N$ vanishes, so $f(t) = 0$.

    Setting $N=K+1$, the signal at integer time steps $t=0, 1, \dots$ follows the sequence:
    \begin{equation}
        1, \underbrace{0, 0, \ldots, 0}_{K \text{ times}}, 1, 0, \ldots.
    \end{equation}
This concludes the proof.

\section{Useful Lemmas}
\label{app:useful-lemmas}

In this section, we collect several technical lemmas, along with their proofs, that will be invoked in our main results.

\subsection{Phase Perturbation Lemma}
The following lemma quantifies how perturbation in complex amplitudes translate into perturbation in phase.

\begin{lemma}[Phase Perturbation]\label{lem:phase-diff-z_1-z_2}
    Let $z_1$ and $z_2$ be two complex numbers such that $ \varepsilon < |z_1| \leq 1$, for some $\varepsilon > 0$, $|z_2| \leq 1$, and $|z_1 - z_2| \eqqcolon \delta < \varepsilon$. Then, the following holds:
    \begin{equation}
        |\mathrm{e}^{\mathrm{i}\theta} - 1|\leq \sqrt{2 - 2 \sqrt{1 - \frac{\delta^2}{\varepsilon^2}}} \label{eq:lemma-phase-diff},
    \end{equation}
    where $\theta$ is the angle between $z_1$ and $z_2$.
\end{lemma}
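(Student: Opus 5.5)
The plan is to reduce the claim to a bound on $\sin\theta$ and then prove that bound with elementary planar geometry. Here $\theta\in[0,\pi]$ is the (unsigned) angle between $z_1$ and $z_2$. First I will check that $\theta$ is well defined: $|z_2|\ge |z_1|-|z_1-z_2|>\varepsilon-\delta>0$, so $z_2\neq 0$. Next I rewrite the left-hand side as
\begin{equation*}
|\mathrm{e}^{\mathrm{i}\theta}-1|^2=2-2\cos\theta .
\end{equation*}
Since $x\mapsto\sqrt{2-2x}$ is decreasing, it suffices to show
\begin{equation*}
\cos\theta\ \ge\ \sqrt{1-\delta^2/\varepsilon^2}.
\end{equation*}
I will get this from two facts: $\theta<\pi/2$, and $\sin\theta\le \delta/|z_1|$. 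Given both, $\cos\theta=\sqrt{1-\sin^2\theta}\ge\sqrt{1-\delta^2/|z_1|^2}\ge\sqrt{1-\delta^2/\varepsilon^2}$, using $|z_1|>\varepsilon$.

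\emph{Key geometric step.} Let $R=\{s z_2 : s\ge 0\}$ be the closed ray through $z_2$; it contains $z_2$. Hence
\begin{equation*}
\operatorname{dist}(z_1,R)\le |z_1-z_2|=\delta .
\end{equation*}
I will compute $\operatorname{dist}(z_1,R)$ by minimizing $|z_1-sz_2|^2=|z_1|^2-2s\,\operatorname{Re}(\overline{z_2}z_1)+s^2|z_2|^2$ over $s\ge 0$, noting $\operatorname{Re}(\overline{z_2}z_1)=|z_1||z_2|\cos\theta$. There are two cases.
\begin{itemize}
\item If $\theta\ge\pi/2$, then $\cos\theta\le 0$, the minimum is at $s=0$, and $\operatorname{dist}(z_1,R)=|z_1|>\varepsilon>\delta$. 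This contradicts the displayed bound, so $\theta<\pi/2$.
\item If $\theta<\pi/2$, the minimizer is interior and the distance is the perpendicular distance $|z_1|\sin\theta$. Therefore $|z_1|\sin\theta\le\delta$, which is exactly the sine bound needed.
\end{itemize}

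The only delicate point is excluding obtuse angles, where the perpendicular-distance formula no longer applies. The case split above handles it, using the hypothesis $\delta<\varepsilon<|z_1|$. The assumptions $|z_1|,|z_2|\le 1$ are not needed for this argument; I expect they are there only for consistency with how the lemma is used later.
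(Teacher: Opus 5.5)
Your proof is correct, and it reaches the same intermediate inequality as the paper, $\cos\theta\ge\sqrt{1-\delta^2/|z_1|^2}$, but by a different route.

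The paper uses the law of cosines to write $\cos\theta=(|z_1|^2+|z_2|^2-\delta^2)/(2|z_1||z_2|)$. It then minimizes the right-hand side over $|z_2|$ by setting a derivative to zero, which gives the extremal modulus $|z_2|=\sqrt{|z_1|^2-\delta^2}$. You instead bound the distance from $z_1$ to the ray through $z_2$, and read off $|z_1|\sin\theta\le\delta$ from the perpendicular-distance formula. The underlying geometry is the same: the worst case is the tangent point from the origin to the disk of radius $\delta$ about $z_1$, which is exactly where $|z_2|=\sqrt{|z_1|^2-\delta^2}$.

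The paper's optimization handles all angles at once. Since the minimum value it finds is positive, obtuse angles are excluded automatically. However, it only locates a critical point and leaves implicit that this is the global minimum over $|z_2|>0$. That is true, for instance by AM--GM: $|z_2|^2+(|z_1|^2-\delta^2)\ge 2|z_2|\sqrt{|z_1|^2-\delta^2}$.

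Your argument avoids calculus and makes the geometric content explicit. The price is the separate case split needed to rule out $\theta\ge\pi/2$. You also check that $z_2\neq 0$, so that $\theta$ is defined, which the paper omits. Your remark that the hypotheses $|z_1|,|z_2|\le 1$ are unused applies equally to the paper's proof.
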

\begin{proof}
    Consider the following:
    \begin{equation}
        |z_1 - z_2| = \sqrt{|z_1|^2 + |z_2|^2 - 2 |z_1||z_2|\cos(\theta)} = \delta
    \end{equation}
    This implies that
    \begin{equation}
        \frac{|z_1|^2 + |z_2|^2 - \delta^2}{2 |z_1||z_2|} = \cos(\theta) \label{eq:cos-z_1-z_2}.
    \end{equation}
    Since we are interested in the upper bound of~\eqref{eq:lemma-phase-diff}, we want to find a complex number $z_2$ that makes the largest possible angle with $z_1$, given that $\varepsilon < |z_1| \leq 1$. 
    This translates to finding the value of $|z_2|$ that minimizes the right-hand side of the above equation, given that $|z_2|$ satisfies the conditions in the lemma statement. 
    To this end, we take the derivative of the right-hand side of the above equation with respect to $|z_2|$ and set it to zero. 
    We get that the following $|z_2|$ minimizes the above equation:

    \begin{equation}
        |z_2| = \sqrt{|z_1|^2 - \delta^2}.
    \end{equation}
    Plugging this back in~\eqref{eq:cos-z_1-z_2}, we get 
    \begin{equation}
        \cos(\theta) = \sqrt{1 - \frac{\delta^2}{|z_1|^2}}.
    \end{equation}

    Finally, using this to bound~\eqref{eq:lemma-phase-diff} from above:
    \begin{equation}
        |\mathrm{e}^{\mathrm{i}\theta} - 1| = \sqrt{2 - 2\cos(\theta)} = \sqrt{2 - 2 \sqrt{1 - \frac{\delta^2}{|z_1|^2}}} \leq \sqrt{2 - 2 \sqrt{1 - \frac{\delta^2}{\varepsilon^2}}}.
    \end{equation}
    This concludes the proof.
\end{proof}

\subsection{Concentration Lemmas}

Next, we present two concentration lemmas that apply to any matrix element whose real and imaginary parts are estimated as empirical averages of independent Bernoulli samples.

\begin{lemma}[Element-Wise Concentration]\label{lem:element-wise}
Let $Z \in \mathbb{C}^{T\times T}$ be a fixed matrix and let $\widehat Z$ be an estimator constructed as follows:
\begin{itemize}
    \item For each diagonal entry $Z_{ii}$, the estimate $\widehat Z_{ii}$ is the empirical mean of $N_{\mathrm{diag}}$ independent Bernoulli samples with mean $Z_{ii}$.
    \item For each off-diagonal entry $Z_{ij}$ with $i \neq j$, the real and imaginary parts $\operatorname{Re}(\widehat Z_{ij})$ and $\operatorname{Im}(\widehat Z_{ij})$ are empirical means of $N_{\mathrm{off}}$ independent Bernoulli samples with means $\operatorname{Re}(Z_{ij})$ and $\operatorname{Im}(Z_{ij})$, respectively.
\end{itemize}
Then, for any $\varepsilon > 0$ and any indices $i,j$,
\begin{itemize}
    \item for each diagonal entry $Z_{ii}$,
    \begin{equation}
        \mathbb{P}\left(|\widehat Z_{ii} - Z_{ii}| \ge \varepsilon\right)
        \le 2 \exp\!\left(-\frac{N_{\mathrm{diag}}\,\varepsilon^2}{2}\right),
    \end{equation}
    \item for each off-diagonal entry $Z_{ij}$ with $i \neq j$,
    \begin{equation}
        \mathbb{P}\left(|\widehat Z_{ij} - Z_{ij}| \ge \varepsilon\right)
        \le 4 \exp\!\left(-\frac{N_{\mathrm{off}}\,\varepsilon^2}{2}\right).
    \end{equation}
\end{itemize}
\end{lemma}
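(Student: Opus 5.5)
The plan is to apply Hoeffding's inequality to each empirical mean, then use a union bound over the real and imaginary parts for the off-diagonal entries. The samples are bounded, and Hoeffding's inequality gives, for any empirical mean $\bar X$ of $N$ independent variables taking values in an interval of length $L$,
\begin{equation}
\mathbb{P}\left(|\bar X - \mathbb{E}\bar X| \ge t\right) \le 2\exp\!\left(-\frac{2Nt^2}{L^2}\right).
\end{equation}

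\textbf{Diagonal entries.} The samples are Bernoulli, so they lie in $[0,1]$ and $L=1$. This gives the bound $2\exp(-2N_{\mathrm{diag}}\varepsilon^2)$, which is stronger than the claimed $2\exp(-N_{\mathrm{diag}}\varepsilon^2/2)$.

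\textbf{Off-diagonal entries: range of the samples.} The estimators in Section~\ref{sec:measurement-off-diag} average outcomes taking values in $\{-1,0,+1\}$. These are differences of indicators of the outcomes $(0,\boldsymbol{0})$ and $(1,\boldsymbol{0})$. Each sample therefore lies in $[-1,1]$, so $L=2$, and Hoeffding gives $2\exp(-N_{\mathrm{off}}t^2/2)$ for each of the real and imaginary parts. I would phrase the argument for variables bounded in $[-1,1]$, which covers both readings of "Bernoulli" in the statement. In the diagonal case this weaker range still yields exactly the stated constant.

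\textbf{Off-diagonal entries: combining the two parts.} The step needing the most care is the conversion from the complex modulus to the two real deviations. The naive reduction uses
\begin{equation}
|\widehat Z_{ij}-Z_{ij}|\ge\varepsilon \;\Longrightarrow\; |\operatorname{Re}(\widehat Z_{ij}-Z_{ij})|\ge \varepsilon/\sqrt2 \ \text{ or } \ |\operatorname{Im}(\widehat Z_{ij}-Z_{ij})|\ge \varepsilon/\sqrt2.
\end{equation}
With the union bound this yields $4\exp(-N_{\mathrm{off}}\varepsilon^2/4)$ in the $\pm1$ model. That loses a factor of $2$ in the exponent relative to the claim.

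To recover the stated $4\exp(-N_{\mathrm{off}}\varepsilon^2/2)$, I would use the tighter $[0,1]$ range, where each part is a genuine Bernoulli mean. Hoeffding then gives $2\exp(-2N_{\mathrm{off}}(\varepsilon/\sqrt2)^2)=2\exp(-N_{\mathrm{off}}\varepsilon^2)$ per part. The union bound over the two parts then gives at most $4\exp(-N_{\mathrm{off}}\varepsilon^2)\le 4\exp(-N_{\mathrm{off}}\varepsilon^2/2)$.

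The constant in the exponent is therefore the only delicate point, and it hinges on adopting the lemma's stated hypothesis that each part is a mean of Bernoulli (range-$1$) samples.
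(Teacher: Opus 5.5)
Your proof is correct and follows the paper's route: Hoeffding on each empirical mean with $[0,1]$-valued samples, then a union bound over the real and imaginary parts. The only difference is the reduction from the modulus: the paper uses $|z|\le|\operatorname{Re} z|+|\operatorname{Im} z|$ at threshold $\varepsilon/2$, which lands exactly on $4\exp(-N_{\mathrm{off}}\varepsilon^2/2)$, while your $\varepsilon/\sqrt{2}$ split gives the stronger $4\exp(-N_{\mathrm{off}}\varepsilon^2)$.

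Your caveat about the off-diagonal circuit is well taken, and it applies equally to the paper's proof, which simply asserts $[0,1]$-valued samples. That circuit actually outputs $\{-1,0,+1\}$-valued samples, and $\operatorname{Re} Z_{ij}$ can be negative, so it cannot literally be a Bernoulli mean. Under that model neither split reaches the stated exponent: yours gives $N_{\mathrm{off}}\varepsilon^2/4$ and the paper's would give $N_{\mathrm{off}}\varepsilon^2/8$.
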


\begin{proof}
For each diagonal entry, $\widehat Z_{ii}$ is the empirical mean of $N_{\mathrm{diag}}$ independent Bernoulli random variables taking values in $[0,1]$ with mean $Z_{ii}$. By Hoeffding’s inequality,
\begin{equation}
    \mathbb{P}\left(|\widehat Z_{ii} - Z_{ii}| \ge \varepsilon\right)
    \le 2 \exp\!\left(-\frac{N_{\mathrm{diag}}\,\varepsilon^2}{2}\right).
\end{equation}

For an off-diagonal entry $Z_{ij}$ with $i \neq j$, the real and imaginary parts are estimated separately as empirical means of $N_{\mathrm{off}}$ independent Bernoulli variables, again taking values in $[0,1]$. 
Applying Hoeffding’s inequality with threshold $\varepsilon/2$ gives
\begin{align}
    \mathbb{P}\!\left(|\operatorname{Re}(\widehat Z_{ij}) - \operatorname{Re}(Z_{ij})| \ge \varepsilon/2\right)
        &\le 2 \exp\!\left(-\frac{N_{\mathrm{off}}\,\varepsilon^2}{2}\right),\\
    \mathbb{P}\!\left(|\operatorname{Im}(\widehat Z_{ij}) - \operatorname{Im}(Z_{ij})| \ge \varepsilon/2\right)
        &\le 2 \exp\!\left(-\frac{N_{\mathrm{off}}\,\varepsilon^2}{2}\right).
\end{align}
By the triangle inequality,
\begin{equation}
    |\widehat Z_{ij} - Z_{ij}|
    \le \left|\operatorname{Re}(\widehat Z_{ij}) - \operatorname{Re}(Z_{ij})\right|
       + \left|\operatorname{Im}(\widehat Z_{ij}) - \operatorname{Im}(Z_{ij})\right|.
\end{equation}
If both real and imaginary parts deviate by at most $\varepsilon/2$, then $|\widehat Z_{ij} - Z_{ij}| \le \varepsilon$. Therefore, by the union bound,
\begin{align}
    \mathbb{P}\left(|\widehat Z_{ij} - Z_{ij}| \ge \varepsilon\right)
    &\le \mathbb{P}\!\left(|\operatorname{Re}(\widehat Z_{ij}) - \operatorname{Re}(Z_{ij})| \ge \varepsilon/2\right) + \mathbb{P}\!\left(|\operatorname{Im}(\widehat Z_{ij}) - \operatorname{Im}(Z_{ij})| \ge \varepsilon/2\right) \\
    &\le 4 \exp\!\left(-\frac{N_{\mathrm{off}}\,\varepsilon^2}{2}\right),
\end{align}
which completes the proof.
\end{proof}

The next lemma upgrades these element-wise bounds to a simultaneous bound over all entries in a given index set.

\begin{lemma}[Max-Element Concentration]\label{lem:max-element}
Let $D \subseteq \{0,\dots,T-1\}$ be a set of diagonal indices and $O \subseteq \{0,\dots,T-1\}^2$ a set of off-diagonal index pairs $(i,j)$ with $i \neq j$. Suppose the estimator $\widehat Z$ satisfies the element-wise bounds of Lemma~\ref{lem:element-wise}. Fix tolerances $\varepsilon > 0$, $\varepsilon' > 0$, and a failure probability $\delta \in (0,1)$. If
\begin{equation}
    N_{\mathrm{diag}}  \ge  \frac{2}{\varepsilon^2} \log\!\left(\frac{4|D|}{\delta}\right)
    \quad\text{and}\quad
    N_{\mathrm{off}}   \ge  \frac{2}{\varepsilon'^2} \log\!\left(\frac{8|O|}{\delta}\right),
\end{equation}
then with probability at least $1 - \delta$,
\begin{equation}
    \max_{i \in D} \left|\widehat Z_{ii} - Z_{ii}\right| \le \varepsilon
    \quad\text{and}\quad
    \max_{(i,j)\in O} \left|\widehat Z_{ij} - Z_{ij}\right| \le \varepsilon'.
\end{equation}
\end{lemma}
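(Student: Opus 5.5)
The plan is a union bound over the failure events of Lemma~\ref{lem:element-wise}, with the number of circuit repetitions chosen so that each individual failure probability is small enough to sum to at most $\delta$.

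First I would treat the diagonal entries. For each $i \in D$, Lemma~\ref{lem:element-wise} gives
\begin{equation}
\mathbb{P}\bigl(|\widehat Z_{ii} - Z_{ii}| \ge \varepsilon\bigr) \le 2\exp\!\left(-N_{\mathrm{diag}}\varepsilon^2/2\right).
\end{equation}
Substituting the assumed lower bound on $N_{\mathrm{diag}}$ gives $\exp(-N_{\mathrm{diag}}\varepsilon^2/2) \le \delta/(4|D|)$. Hence each diagonal failure probability is at most $\delta/(2|D|)$. Summing over the $|D|$ indices, the probability that some diagonal entry deviates by at least $\varepsilon$ is at most $\delta/2$.

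Next I would handle the off-diagonal pairs in the same way. For each $(i,j)\in O$, Lemma~\ref{lem:element-wise} applied with tolerance $\varepsilon'$ bounds the failure probability by $4\exp(-N_{\mathrm{off}}\varepsilon'^2/2)$. The assumed bound on $N_{\mathrm{off}}$ makes the exponential at most $\delta/(8|O|)$, so each pair fails with probability at most $\delta/(2|O|)$. The union over the $|O|$ pairs then contributes at most $\delta/2$.

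Finally, a union bound over both families shows that the total failure probability is at most $\delta$. On the complementary event, which has probability at least $1-\delta$, every deviation is strictly less than its tolerance. That gives both maximum bounds claimed in the statement. No independence across entries is needed, because the union bound does not require it.

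The only subtlety is bookkeeping. The constants $4|D|$ and $8|O|$ have to absorb the prefactors $2$ and $4$ from Lemma~\ref{lem:element-wise} while splitting the budget $\delta$ evenly between the two families. I would also check the degenerate cases $D=\emptyset$ or $O=\emptyset$, where the corresponding condition is vacuous and that family contributes zero failure probability. I do not expect any step to be a genuine obstacle.
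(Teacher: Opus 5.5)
Your proposal is correct and follows the paper's proof essentially step for step. Both use a union bound within the diagonal and off-diagonal families, with the sample-size conditions calibrated so that each family fails with probability at most $\delta/2$, followed by a final union bound over the two families.
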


\begin{proof}
For each $i \in D$, Lemma~\ref{lem:element-wise} gives
\begin{equation}
    \mathbb{P}\left(|\widehat Z_{ii} - Z_{ii}| \ge \varepsilon\right)
    \le 2 \exp\!\left(-\frac{N_{\mathrm{diag}}\,\varepsilon^2}{2}\right).
\end{equation}
By the union bound,
\begin{equation}
    \mathbb{P}\!\left(\max_{i\in D} |\widehat Z_{ii} - Z_{ii}| \ge \varepsilon\right)
    \le 2|D| \exp\!\left(-\frac{N_{\mathrm{diag}}\,\varepsilon^2}{2}\right).
\end{equation}
To make this at most $\delta/2$, it suffices to require
\begin{equation}
    2|D| \exp\!\left(-\frac{N_{\mathrm{diag}}\,\varepsilon^2}{2}\right) \le \frac{\delta}{2},
\end{equation}
which rearranges to
\begin{equation}
    N_{\mathrm{diag}} \ge \frac{2}{\varepsilon^2} \log\!\left(\frac{4|D|}{\delta}\right).
\end{equation}

Similarly, for each $(i,j)\in O$, Lemma~\ref{lem:element-wise} gives
\begin{equation}
    \mathbb{P}\left(|\widehat Z_{ij} - Z_{ij}| \ge \varepsilon'\right)
    \le 4 \exp\!\left(-\frac{N_{\mathrm{off}}\,\varepsilon'^2}{2}\right).
\end{equation}
Applying the union bound over all $(i,j)\in O$,
\begin{equation}
    \mathbb{P}\!\left(\max_{(i,j)\in O} |\widehat Z_{ij} - Z_{ij}| \ge \varepsilon'\right)
    \le 4|O| \exp\!\left(-\frac{N_{\mathrm{off}}\,\varepsilon'^2}{2}\right).
\end{equation}
To make this at most $\delta/2$, it suffices that
\begin{equation}
    4|O| \exp\!\left(-\frac{N_{\mathrm{off}}\,\varepsilon'^2}{2}\right) \le \frac{\delta}{2},
\end{equation}
which implies
\begin{equation}
    N_{\mathrm{off}} \ge \frac{2}{\varepsilon'^2} \log\!\left(\frac{8|O|}{\delta}\right).
\end{equation}

Finally, another union bound over the diagonal and off-diagonal events shows that both inequalities hold simultaneously with probability at least $1 - \delta$. 
This completes the proof.
\end{proof}

\subsection{PSD Structural Lemma}
The following lemma characterizes when entries of a PSD matrix force linear dependence between its columns.

\begin{lemma}[Singular $2\times 2$ Principal PSD Submatrix Implies Column Dependence]\label{lem:singular-submatrix-column-dependence}
Let $X \succeq 0$ be a positive semidefinite matrix. For any matrix $M$, let $\mathrm{col}_k(M)$ denote its $k$-th column. Fix indices $i \neq j$ and suppose that the $2  \times 2$ principal submatrix of $X$ on $\{i, j\}$ is singular:
\begin{equation}
    X_{ii} X_{jj} = \left | X_{ij}\right|^2.
\end{equation}
Then, whenever $X_{ii} > 0$, the $\mathrm{col}_i(X)$ and $\mathrm{col}_j(X)$ are linearly dependent, with
\begin{equation}
    \mathrm{col}_j(X) = \frac{X_{ij}}{X_{ii}} \mathrm{col}_i(X).
\end{equation}
\end{lemma}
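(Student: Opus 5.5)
Since $X \succeq 0$, we factor it as a Gram matrix and reduce the statement to the equality case of Cauchy--Schwarz. Write $X = V^\dagger V$ for some matrix $V$, for instance $V = X^{1/2}$, and let $v_k \coloneqq \mathrm{col}_k(V)$. Then $X_{kl} = \langle v_k, v_l\rangle = v_k^\dagger v_l$ for all $k,l$. In particular $X_{ii} = \|v_i\|^2$ and $X_{jj} = \|v_j\|^2$. The hypothesis $X_{ii}X_{jj} = |X_{ij}|^2$ then reads $|\langle v_i, v_j\rangle| = \|v_i\|\,\|v_j\|$.

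\textbf{Step 1: $v_j$ is a multiple of $v_i$.} Since $X_{ii} > 0$, we have $v_i \neq 0$. Equality in Cauchy--Schwarz therefore forces $v_j = c\, v_i$ for some $c \in \mathbb{C}$. This can be made concrete: set $c \coloneqq X_{ij}/X_{ii} = \langle v_i, v_j\rangle/\|v_i\|^2$ and expand
\begin{equation}
\|v_j - c v_i\|^2 = \|v_j\|^2 - \frac{|\langle v_i, v_j\rangle|^2}{\|v_i\|^2} = X_{jj} - \frac{|X_{ij}|^2}{X_{ii}} = 0 .
\end{equation}
Hence $v_j = c\,v_i$.

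\textbf{Step 2: transfer to the columns of $X$.} For every index $k$,
\begin{equation}
X_{kj} = \langle v_k, v_j\rangle = c\,\langle v_k, v_i\rangle = c\,X_{ki}.
\end{equation}
This gives $\mathrm{col}_j(X) = \frac{X_{ij}}{X_{ii}}\,\mathrm{col}_i(X)$, which is exactly the claimed relation and in particular shows the two columns are linearly dependent.

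The only delicate point is the order of the quotient: we need $c$ defined as $X_{ij}/X_{ii}$ and not its conjugate. This has to be checked against the convention $X_{kl} = v_k^\dagger v_l$, which is conjugate-linear in the first slot. With that convention, $X_{ij} = v_i^\dagger v_j = c\,\|v_i\|^2$, so the choice of $c$ is consistent. Everything else is routine.
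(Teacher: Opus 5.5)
Your proof is correct and follows essentially the same route as the paper. Both arguments take a Gram factorization $X = V^\dagger V$, use the equality case of Cauchy--Schwarz to get $v_j = c\,v_i$ with $c = X_{ij}/X_{ii}$, and transfer this to the columns of $X$; the paper does the transfer by applying $Y^\dagger$, you do it entrywise via $X_{kj} = c\,X_{ki}$, and your explicit computation $\|v_j - c\,v_i\|^2 = 0$ together with the check $v_i \neq 0$ makes the Cauchy--Schwarz step a bit more self-contained than the paper's.
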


\begin{proof}
    Since $X \succeq 0$, there exists a matrix $Y$ such that $X = Y^{\dagger} Y$ (Cholesky decomposition). Then
    \begin{equation}
        X_{ij} = \langle \text{col}_i(Y), \text{col}_j(Y)  \rangle, \qquad X_{ii} = \left \Vert \text{col}_i(Y)  \right \Vert_2^2, \qquad X_{jj} = \left \Vert \text{col}_j(Y)  \right \Vert_2^2.
    \end{equation}
    Since $X_{ii} X_{jj} = \left | X_{ij}\right|^2$, we have
    \begin{equation}
        \left | \langle \text{col}_i(Y), \text{col}_j(Y)  \rangle\right|^2 = \left \Vert \text{col}_i(Y)  \right \Vert_2^2 \left \Vert \text{col}_j(Y)  \right \Vert_2^2.
    \end{equation}
    This is essentially the Cauchy-Schwarz inequality between vectors $\text{col}_i(Y)$ and $\text{col}_j(Y)$, and since there is an equality, it implies that $\text{col}_i(Y)$ and $\text{col}_j(Y)$ are linearly dependent. Thus let $\text{col}_j(Y) = \alpha \text{col}_i(Y)$ for some scalar $\alpha \in \mathbb{C}$. Applying $Y^{\dagger}$ gives
    \begin{equation}
        \text{col}_j(X) = Y^{\dagger} \text{col}_j(Y) = \alpha Y^{\dagger} \text{col}_i(Y) = \alpha \text{col}_i(X).
    \end{equation}
    Finally, comparing entries, we get
    \begin{equation}
        \alpha = \frac{\langle \text{col}_i(Y), \text{col}_j(Y)  \rangle}{\left \Vert \text{col}_i(Y)  \right \Vert_2^2} = \frac{X_{ij}}{X_{ii}}.
    \end{equation}
    This concludes the proof.
\end{proof}

\section{Proof for Theorem~\ref{thm:exact-rec-algebraic-est}}
\label{app:proof-exact-recovery-algebraic-estimator}

In this section, we prove the exact recovery for the block-by-block algebraic estimator.

We prove ($\Leftarrow$) with induction. Let $Z = f f^{\dagger}$ and let $\widehat Z = \mathsf{band}_K(Z)$ be the input to the algebraic estimator and $\widehat f$ be the reconstructed signal. 
We know that $f_0 = 1$. 
Therefore, for $i=0$, the estimator simply sets $\widehat f_0 = f_0 = 1$. 
Next, we assume that the estimator has already reconstructed the signal exactly for all indices $j < i$, that is, we have $\widehat f_j = f_j$ for all $j < i$. 
If $f_i = 0$, then $Z_{ii} = 0$ so the estimator estimates the magnitude of $f_i$ as $|\widehat f_i| = 0$. 
This implies that $\widehat f_i = 0 = f_i$. 
If $f_i \neq 0$, then from the assumption that $f \in \mathcal{S}_K$, that is, it contains no run of $K$ consecutive zeros, we have that the set of indices $\{i-K, \ldots, i-1\}$ contains at least one index $j$ with $f_j\neq 0$. For any such $j$, the candidate phase estimate $\widehat{\theta}^{\,\,(j)}_i$ (see~\eqref{eq:candidate-phase-estimate}) follows
    \begin{equation}
        \widehat\theta_i^{\,\,(j)}=\arg(\widehat f_j)-\arg(Z_{ji}) =\arg(f_j)-\arg(f_j\overline{f_i})=\arg(f_i),
    \end{equation}
where the induction hypothesis yields $\widehat{f}_j = f_j$ for the second equality above. 
Hence, all candidate phase estimates are equal to $\arg(f_i)$; therefore, when the estimator computes their weighted circular average, as given by~\eqref{eq:weighted-circular-average}, it returns $\arg(\widehat f_i) = \arg(f_i)$. 
Together with magnitude estimation of $f_i$ as $|\widehat f_i|=\sqrt{|\widehat{Z}_{ii}^{\,(K)}|} = \sqrt{Z_{ii}}=|f_i|$, we finally get $\widehat f_i = f_i$.

Next, the proof for $(\Rightarrow)$ is the same as the proof for $(\Rightarrow)$ of Lemma~\ref{lem:identifiability}.

This concludes the proof.

\section{Proof for Theorem~\ref{thm:exact-rec-eigen-est}}
\label{app:proof-exact-recovery-eigenvector-estimator}
In this section, we prove the exact recovery for the block-by-block eigenvector estimator.

We begin by recalling that, in the noiseless case, each $(K+1) \times (K+1)$ block $Q^{(b)}$ possesses a rank-one structure, as shown in~\eqref{eq:Q-b-blocks-noiseless}:
\begin{equation}
    Q^{(b)} = \widehat{Z}_{b:b+K, b:b+K}^{(K)} = Z_{b:b+K, b:b+K} = f_{b:b+K}^{\vphantom{\dagger}} f_{b:b+K}^\dagger.
\end{equation}
So $Q^{(b)}$ is positive semidefinite, rank one, with the principal eigenpair
\begin{equation}
    \left (\lambda_{\max}^{(b)},\, v_{\max}^{(b)} \right),
\end{equation}
where 
\begin{align}
    \lambda_{\max}^{(b)} & = \left \|f_{b:b+K}\right\|_2^2\\
        v_{\max}^{(b)} & = \mathrm{e}^{\mathrm{i}\phi_b} \frac{f_{b:b+K}}{\left \|f_{b:b+K}\right\|_2^2},
\end{align}
for some $\phi_b \in [0, 2\pi)$.
Consequently,
\begin{equation}
    \widehat f^{\,(b)} = \sqrt{\lambda_{\max}^{(b)}}\cdot v_{\max}^{(b)} = \mathrm{e}^{\mathrm{i}\phi_b} f_{b:b+K}\label{eq:local-estimates-noiseless}.
\end{equation}

With this in place, we begin by proving $(\Leftarrow)$ with induction over the indices $b = 0, 1,, \ldots, T-K-1$.
Let us start with the base case $b=0$. 
Recall that the estimator rotates $\widehat f^{\,(0)}$ so that its first entry equals 1; by the above equality, this enforces $\mathrm{e}^{\mathrm{i}\phi_0} = 1$ and hence
\begin{equation}
    \widehat f^{\,(0)} = f_{0:K}.
\end{equation}
This concludes the base case. 
    
Now assume that the estimator has already phase-aligned $b$ local estimates, that is, $\widehat f^{(0)}, \ldots, \widehat f^{(b-1)}$, and has established that 
\begin{equation}\label{eq:inductive-hypo}
    \widehat f^{\,(b')} = f_{b':b'+K}
\end{equation}
holds for all $b' = 0, 1, \ldots, b-1$.

With that, now consider the $(b+1)$-th local estimate $\widehat{f}^{\,(b)}$. 
Recall that the vector $u \in \mathbb{C}^K$ denotes the first $K$ entries of this (unaligned) local estimate and the vector $v\in \mathbb{C}^K$ denotes the consensus vector formed by averaging previously phase-aligned estimates on the overlap indices (see Step~\ref{step:average-prev-aligned-estimates} of Algorithm~\ref{algo:eigen-estimator-averaging}). 
Since all previously phase-aligned local estimates are identical to their corresponding true signal segment by the induction hypothesis in~\eqref{eq:inductive-hypo}, we have the following for each overlap index $k\in\{0,\ldots,K-1\}$:
\begin{equation}
    v_k = f_{b+k}.
\end{equation}
On the other hand, by~\eqref{eq:local-estimates-noiseless}, the unaligned current local estimate $\widehat{f}^{\,(b)}$ satisfies
\begin{equation}
    u_k = \widehat f^{\,\,(b)}_k = \mathrm{e}^{\mathrm{i}\phi_b} f_{b+k}, \qquad k=0,\ldots,K-1.
\end{equation}
Consequently,
\begin{equation}\label{eq:proof-inner-product}
\langle u, v\rangle
= \sum_{k=0}^{K-1} \overline{u_k}\, v_k
= \sum_{k=0}^{K-1} \overline{\mathrm{e}^{\mathrm{i}\phi_b} f_{b+k}}\, f_{b+k}
= \mathrm{e}^{-\mathrm{i}\phi_b} \sum_{k=0}^{K-1} |f_{b+k}|^2.
\end{equation}
Since $(\Leftarrow)$, we are assuming that $f \in \mathcal{S}_K$, that is, $f$ has no run of $K$ consecutive zeros. 
This implies that the overlap segment $(f_b,\ldots,f_{b+K-1})$ cannot be identically zero. 
Therefore,
\begin{equation}
    \sum_{k=0}^{K-1} \left |f_{b+k} \right |^2 > 0,
\end{equation}
and~\eqref{eq:proof-inner-product} together implies that $\langle u,v\rangle \neq 0$ with
\begin{equation}
    \arg(\langle u,v\rangle) \equiv -\phi_b~~(\operatorname{mod} 2\pi).
\end{equation}
The estimator then chooses $\phi = \arg(\langle u,v\rangle)$ and updates the current local estimate via
\begin{equation}
    \widehat f^{\, (b)} \leftarrow \mathrm{e}^{\mathrm{i}\phi}\widehat f^{\,(b)}.
\end{equation}
Combining this with~\eqref{eq:local-estimates-noiseless} and $\phi \equiv -\phi_b$ yields
\begin{equation}
    \widehat f^{\,(b)} = \mathrm{e}^{\mathrm{i}(\phi+\phi_b)} f_{b:b+K} = f_{b:b+K},
\end{equation}
which establishes the induction step and proves~\eqref{eq:inductive-hypo} for all $b' = 0, 1, \ldots, T-K-1$.

Finally, the estimator outputs $\widehat f$ by averaging all phase-aligned local estimates that cover each index $i$. 
Since every phase-aligned local estimate equals the corresponding true signal segment, as shown above, each summand contributing to $\widehat f_i$ is exactly equal to $f_i$. 
Hence, the average is also $f_i$, and, with that, we finally obtain $\widehat f = f$. 
This concludes the proof for $(\Leftarrow)$

Next, the proof for $(\Rightarrow)$ is the same as the proof for $(\Rightarrow)$ of Lemma~\ref{lem:identifiability}.

This concludes the proof.

\section{Proof for Theorem~\ref{thm:exact-recovery-quantum-phaselift-estimator}}
\label{app:proof-exact-recovery-phaselift-estimator}

In this section, we prove the exact recovery for the least-squares estimator. 

Due to the fact that the lifted matrix $Z$ is a PSD matrix, we can take $X = Z$ as one of the feasible solutions for the least-squares problem in~\eqref{eq:least-square-problem}. In the noiseless setting, the objective function of this problem at this $X$ is zero since
\begin{equation}
    \operatorname{Tr}[A_\ell X] = \operatorname{Tr}[A_\ell Z] = b_{\ell}, \quad \text{for all } \ell = 1, \ldots, m.
\end{equation}
This, along with the fact that the objective function $\geq 0$ for all $X$, imply that every minimizer $X^{\star}$ of the least-squares problem satisfies
\begin{equation}
    \operatorname{Tr}[A_\ell X^{\star}] = \operatorname{Tr}[A_\ell Z], \quad \text{for all } \ell = 1, \ldots, m,
\end{equation}
which is equivalent to the entrywise equalities
\begin{equation}
    X^\star_{ii}=Z_{ii}=|f_i|^2,\quad X^\star_{ij}=Z_{ij}=f_i\overline{f_j},\quad\text{whenever }1\le |i-j|\le K\label{eq:entry-agreement}.
\end{equation}

With this in place, we show that the minimizer of the least-squares problem in is unique and that it is $X^{\star} = Z$ under the assumption that $f$ has no run of $K$ consecutive zeros; that is, we prove $(\Rightarrow)$. 
We begin by defining the index set of nonzero entries of $f$
\begin{equation}
    V\coloneqq\{i\in\{0,\dots,T-1\}: f_i\neq 0\}.
\end{equation}
We then construct the graph $G = (V, E)$ with an undirected edge $\{i, j\} \in E$ iff $1 \leq |i-j| \leq K$. 
It is easy to see that $G$ is connected since $f$ has no run of $K$ consecutive zeros.

Next, consider an arbitrary edge $\{i, j\} \in E$. 
We examine the $2 \times 2$ principal submatrix of $X^{\star}$ restricted to these indices:
\begin{equation}
    \begin{pmatrix}
    X^\star_{ii} & X^\star_{ij}\\
    \overline{X}^\star_{ij} & X^\star_{jj}
    \end{pmatrix}
   =
    \begin{pmatrix}
     |f_i|^2 & f_i\overline{f_j}\\
     \overline{f_i} f_j & |f_j|^2
    \end{pmatrix},
\end{equation}
where the equality follows directly from the agreement on the $K$-band~\eqref{eq:entry-agreement}. 
Clearly, the determinant of the above submatrix is zero:
\begin{equation}
    |f_i|^2 |f_j|^2 - |f_i\overline{f_j}|^2 = 0.
\end{equation}
In other words, 
\begin{equation}
    |X^\star_{ij}|^2 = X^\star_{ii} X^\star_{jj}.
\end{equation}
Since $X^{\star} \succeq 0 $ due to the PSD constraint of our least-squares problem, the above equality implies that the corresponding column vectors, that is, $ \text{col}_i(X^\star) $ and $\text{col}_j(X^\star)$, are linearly dependent (see Lemma~\ref{lem:singular-submatrix-column-dependence}). 
Here we denote the $i$-th column vector and the $i$-th row vector of a matrix $A$ by $\text{col}_i(A) $ and $\text{row}_i(A)$, respectively. 
More explicitly, from Lemma~\ref{lem:singular-submatrix-column-dependence}, we can write
\begin{equation}\label{eq:ratio-i-j-col}
    \text{col}_j(X^\star) = \frac{X_{ij}^{\star}}{X_{ii}^{\star}} \text{col}_i(X^\star) = \frac{f_i \overline{f_j}}{|f_i|^2} \text{col}_i(X^\star) =  \frac{\overline{f_j}}{\overline{f_i}} \text{col}_i(X^\star).
\end{equation}

This local dependency can be extended globally. 
Since $G$ is connected, we can iterate relation~\eqref{eq:ratio-i-j-col} along a path connecting a fixed root vertex $r \in V$ to any arbitrary vertex $j \in V$. 
This yields
\begin{equation}\label{eq:all-cols}
    \text{col}_j(X^\star) = \frac{\hphantom{i}\overline{f_j}\hphantom{i}}{\overline{f_r}} \text{col}_r(X^\star).
\end{equation}
For brevity, let us define the vector 
\begin{equation}
    h \equiv \frac{\hphantom{i}1\hphantom{i}}{\overline{f_r}} \text{col}_r(X^\star).
\end{equation}
We can thus rewrite the equality in~\eqref{eq:all-cols} as follows:
\begin{equation}\label{eq:col-factorization}
    \text{col}_j(X^\star) = h \overline{f_j}.
\end{equation}
Also using the fact that $X^{\star}$ is Hermitian, we can derive a corresponding expression for the $i$-th row, $\text{row}_i(X^\star)$:
\begin{equation}\label{eq:row-factorization}
    \text{row}_i(X^\star) = f_i h^{\dagger},
\end{equation}
for all $i \in V$. Hence, for all $i, j \in V$, we have
\begin{equation}
    X^\star_{ij} = h_i \overline{f_j} = f_i \overline{h_j}.
\end{equation}

Now using the fact that we know the $K$-band entries of any minimizer $X^{\star}$ (see~\eqref{eq:entry-agreement}), the following holds for any edge $\{i, j\} \in E$
\begin{equation}
    h_i \overline{f_j} = X^\star_{ij} = f_i \overline{f_j}.
\end{equation}
Since $f_j \neq 0$, we get $h_i = f_i$ for all $i \in V$ that share an edge with $j \in V$ in G. 
This can then be argued for all $i \in V$, so
\begin{equation}\label{eq:h-i-f-i}
    h_i = f_i,
\end{equation}
holds for all $i \in V$.

For indices $i \notin V$, we have $f_i = 0$ and by~\eqref{eq:entry-agreement}, we have $X_{ii}^{\star} = 0$. 
The positive-semidefinite property of $X^{\star}$ then forces $X_{ij}^{\star} = 0$ for all $j$. 
This can be seen from the Cauchy-Schwarz inequality
\begin{equation}
    |X^\star_{ij}|^2 \le X^\star_{ii} X^\star_{jj} = 0 \quad\Rightarrow\quad X^\star_{ij}=0.
\end{equation}
Thus $X^{\star}$ has zero $i$-th row and column for all $i \notin V$.

Combining~\eqref{eq:col-factorization},~\eqref{eq:h-i-f-i}, and the vanishing rows and columns of $X^{\star}$, we obtain
\begin{equation}
    X^\star_{ij} = h_i \overline{f_j} = f_i \overline{f_j}\quad\text{for all }i,j,
\end{equation}
that is, $X^{\star} = ff^{\dagger} = Z$. 
This concludes the proof that $Z$ is the unique minimizer of the least-squares problem in~\eqref{eq:least-square-problem}, given that $f$ has no run of $K$ consecutive zeros and thus proves~$(\Rightarrow)$.

Next, the proof for $(\Rightarrow)$ is the same as the proof for $(\Rightarrow)$ of Lemma~\ref{lem:identifiability}.

This concludes the proof.

\section{Proof for Theorem~\ref{thm:stability-base-case}}\label{app:stability-base-case}

For the purpose of this proof, we introduce some notations. Let the true signal vector be $f = (r_0, r_1 \mathrm{e}^{\mathrm{i}\theta_1}, r_2 \mathrm{e}^{\mathrm{i}\theta_2}, \ldots, r_{T-1} \mathrm{e}^{\mathrm{i}\theta_{T-1}})^{\mathsf{T}}$ and let the signal estimate be $\widehat{f} = (\widehat{r}_0, \widehat{r}_1 \mathrm{e}^{\mathrm{i}\widehat{\theta}_1}, \widehat{r}_2 \mathrm{e}^{\mathrm{i}\widehat{\theta}_2}, \ldots, \widehat{r}_{T-1} \mathrm{e}^{\mathrm{i}\widehat{\theta}_{T-1}})^{\mathsf{T}}$. 
Note that the estimator sets $\widehat r_0 = r_0 = 1$. 

We begin by deriving the reconstruction error bound, which demonstrates how the errors in measuring the diagonal and first off-diagonal entries of $Z$ propagate through the estimator and contribute to the final reconstruction error in $\widehat{f}$. 
Once we have this error bound, we use it to get the bounds on the number of circuit repetitions required to guarantee that the final estimate $\widehat f$ stays within $\eta$ distance of $f$ with probability at least $1-\delta$. 

\subsection{Reconstruction Error Bound}

Consider the following:
\begin{align}
    \left \Vert f - \widehat{f}\, \right \Vert^2_2 & = \sum_{i=0}^{T-1} \left |r_i \mathrm{e}^{\mathrm{i}\theta_i} - \widehat{r}_i \mathrm{e}^{\mathrm{i}\widehat{\theta}_i}  \right|^2 = \sum_{i=1}^{T-1} \left |r_i \mathrm{e}^{\mathrm{i}\theta_i} - \widehat{r}_i \mathrm{e}^{\mathrm{i}\widehat{\theta}_i}  \right|^2\leq 2 \sum_{i=1}^{T-1} \left ( \left |r_i  - \widehat{r}_i\right|^2 + r_i^2 \left |\mathrm{e}^{\mathrm{i}\theta_i} -  \mathrm{e}^{\mathrm{i}\widehat{\theta}_i} \right|^2\right)\label{eq:Eucl-dist-bound},
\end{align}
where the first equality holds due to the fact that we exactly know the first entry of $f$.

To bound the first term of the above inequality, assume that we can estimate all the diagonal elements of $Z$ with an error of at most $\varepsilon >  0$ compared to their true values, that is, 
\begin{equation}
    \max_{i \in \{1, \ldots, T-1\}}\left | Z_{ii} - \widehat{Z}_{ii}\right| \leq \varepsilon,
\end{equation}
holds. Furthermore, assuming that $\gamma = \min_i |f_i|^2 = \min_i Z_{ii} > \varepsilon$, we can apply the above inequality to obtain
\begin{align}
    \left |r_i  - \widehat{r}_i\right| = \frac{\left |r_i^2  - \widehat{r}_i^{\,2}\right|}{r_i + \widehat{r}_i} = \frac{ \left | Z_{ii} - \widehat{Z}_{ii}\right|}{\sqrt{Z_{ii}} + \sqrt{\widehat{Z}_{ii}}} \leq \frac{\varepsilon}{\sqrt{Z_{ii}} + \sqrt{Z_{ii} - \varepsilon}} \leq \frac{\varepsilon}{r_i}\label{eq:bound-on-magnitudes}.
\end{align}

Next, we bound the square root of the second term of~\eqref{eq:Eucl-dist-bound}:
\begin{align}
    r_i \left |\mathrm{e}^{\mathrm{i}\theta_i} -  \mathrm{e}^{\mathrm{i}\widehat{\theta}_i} \right| & =  r_i \left |\frac{\mathrm{e}^{\mathrm{i}(\theta_i-\theta_{i-1})}}{\mathrm{e}^{-\mathrm{i}\theta_{i-1}}} -  \frac{\mathrm{e}^{\mathrm{i}(\widehat{\theta}_i - \widehat{\theta}_{i-1})}}{\mathrm{e}^{-\mathrm{i}\widehat{\theta}_{i-1}}} \right|\\
    & \leq r_i \left(\left |\frac{\mathrm{e}^{\mathrm{i}(\theta_i-\theta_{i-1})}}{\mathrm{e}^{-\mathrm{i}\theta_{i-1}}} -  \frac{\mathrm{e}^{\mathrm{i}(\widehat{\theta}_i - \widehat{\theta}_{i-1})}}{\mathrm{e}^{-\mathrm{i}\theta_{i-1}}} \right| + \left |\frac{\mathrm{e}^{\mathrm{i}(\widehat{\theta}_i - \widehat{\theta}_{i-1})}}{\mathrm{e}^{-\mathrm{i}\theta_{i-1}}} -  \frac{\mathrm{e}^{\mathrm{i}(\widehat{\theta}_i - \widehat{\theta}_{i-1})}}{\mathrm{e}^{-\mathrm{i}\widehat{\theta}_{i-1}}} \right|\right)\\
    & = r_i \left(\left |\mathrm{e}^{\mathrm{i}(\theta_i-\theta_{i-1})} -  \mathrm{e}^{\mathrm{i}(\widehat{\theta}_i - \widehat{\theta}_{i-1})} \right| + \left |\frac{1}{\mathrm{e}^{-\mathrm{i}\theta_{i-1}}} -  \frac{1}{\mathrm{e}^{-\mathrm{i}\widehat{\theta}_{i-1}}} \right|\right)\\
    & = r_i \left(\left |\mathrm{e}^{\mathrm{i}(\theta_i-\theta_{i-1})} -  \mathrm{e}^{\mathrm{i}(\widehat{\theta}_i - \widehat{\theta}_{i-1})} \right| + \left |\mathrm{e}^{\mathrm{i}\theta_{i-1}} -  \mathrm{e}^{\mathrm{i}\widehat{\theta}_{i-1}} \right|\right)\label{eq:error-in-phases-j}.
\end{align}

To bound the first term of the above inequality, we assume that we can estimate all the first off-diagonal entries of $Z$ with an error of at most $\varepsilon' > 0$ compared to their true values, that is,
\begin{equation}
    \max_{i \in \{0, 1, \ldots, T-2\}}\left | Z_{i,i+1} - \widehat{Z}_{i,i+1}\right| \leq \varepsilon',
\end{equation}
holds. 
The above inequality is equivalent to
\begin{align}
    \max_{i \in \{0, 1, \ldots, T-2\}}\left |r_ir_{i+1} \mathrm{e}^{\mathrm{i}(\theta_i - \theta_{i+1})} -  \left | \widehat{Z}_{i,i+1} \right| \mathrm{e}^{\mathrm{i}(\widehat{\theta}_i - \widehat{\theta}_{i+1})} \right| \leq \varepsilon'.
\end{align}
Assuming $\varepsilon' < \varepsilon $, we have the following from Lemma~\ref{lem:phase-diff-z_1-z_2} below: 
\begin{equation}
    \max_{i \in \{0, 1, \ldots, T-2\}} \left |\mathrm{e}^{\mathrm{i}(\theta_i-\theta_{i+1})} -  \mathrm{e}^{\mathrm{i}(\widehat{\theta}_i - \widehat{\theta}_{i+1})} \right| \leq \sqrt{2 - 2 \sqrt{1 - \frac{\varepsilon'^2}{\varepsilon^2}}}\label{eq:element-phase-error-bound}.
\end{equation}

Substituting this in~\eqref{eq:error-in-phases-j}, we get
\begin{equation}
    r_i \left |\mathrm{e}^{\mathrm{i}\theta_i} -  \mathrm{e}^{\mathrm{i}\widehat{\theta}_i} \right| \leq r_i \left(\sqrt{2 - 2 \sqrt{1 - \frac{\varepsilon'^2}{\varepsilon^2}}} + + \left |\mathrm{e}^{\mathrm{i}\theta_{i-1}} -  \mathrm{e}^{\mathrm{i}\widehat{\theta}_{i-1}} \right|\right). 
\end{equation}
Observe that the second term above represents the phase error in the previous estimate, that is, estimate for the $(i-1)$-th element of $f$. 
With that, we can recursively bound this term using the same technique employed for bounding the $i$-th element's phase error:
\begin{align}
    r_i \left |\mathrm{e}^{\mathrm{i}\theta_i} -  \mathrm{e}^{\mathrm{i}\widehat{\theta}_i} \right| & \leq r_i \left(\sqrt{2 - 2 \sqrt{1 - \frac{\varepsilon'^2}{\varepsilon^2}}} + \left |\frac{\mathrm{e}^{\mathrm{i}(\theta_{i-1}-\theta_{i-2})}}{\mathrm{e}^{-\mathrm{i}\theta_{i-2}}} -  \frac{\mathrm{e}^{\mathrm{i}(\widehat{\theta}_{i-1} - \widehat{\theta}_{i-2})}}{\mathrm{e}^{-\mathrm{i}\widehat{\theta}_{i-2}}} \right|\right)\\
    & \leq r_i \left(\sqrt{2 - 2 \sqrt{1 - \frac{\varepsilon'^2}{\varepsilon^2}}} + \left |\frac{\mathrm{e}^{\mathrm{i}(\theta_{i-1}-\theta_{i-2})}}{\mathrm{e}^{-\mathrm{i}\theta_{i-2}}} -  \frac{\mathrm{e}^{\mathrm{i}(\widehat{\theta}_{i-1} - \widehat{\theta}_{i-2})}}{\mathrm{e}^{-\mathrm{i}\theta_{i-2}}} \right| + \left |\frac{\mathrm{e}^{\mathrm{i}(\widehat{\theta}_{i-1} - \widehat{\theta}_{i-2})}}{\mathrm{e}^{-\mathrm{i}\theta_{i-2}}} -  \frac{\mathrm{e}^{\mathrm{i}(\widehat{\theta}_{i-1} - \widehat{\theta}_{i-2})}}{\mathrm{e}^{-\mathrm{i}\widehat{\theta}_{i-2}}} \right|\right)\\
    & \leq r_i \left(\sqrt{2 - 2 \sqrt{1 - \frac{\varepsilon'^2}{\varepsilon^2}}} + \left |\mathrm{e}^{\mathrm{i}(\theta_{i-1}-\theta_{i-2})} -  \mathrm{e}^{\mathrm{i}(\widehat{\theta}_{i-1} - \widehat{\theta}_{i-2})} \right| + \left |\mathrm{e}^{\mathrm{i}\theta_{i-2}} -  \mathrm{e}^{\mathrm{i}\widehat{\theta}_{i-2}} \right|\right).
\end{align}
The middle term can be bounded from above using~\eqref{eq:element-phase-error-bound}:
\begin{align}
    r_i \left |\mathrm{e}^{\mathrm{i}\theta_i} -  \mathrm{e}^{\mathrm{i}\widehat{\theta}_i} \right| \leq r_i \left(2 \sqrt{2 - 2 \sqrt{1 - \frac{\varepsilon'^2}{\varepsilon^2}}} + \left |\mathrm{e}^{\mathrm{i}\theta_{i-2}} -  \mathrm{e}^{\mathrm{i}\widehat{\theta}_{i-2}} \right|\right).
\end{align}
Recursively repeating this bounding technique $(i-2)$ times and squaring on both sides, we finally get
\begin{equation}
r_i^2 \left |\mathrm{e}^{\mathrm{i}\theta_i} -  \mathrm{e}^{\mathrm{i}\widehat{\theta}_i} \right|^2 \leq i^2 r_i^2 \left(2 - 2 \sqrt{1 - \frac{\varepsilon'^2}{\varepsilon^2}}\right).
\end{equation}

Now, using the above bound and the bound in~\eqref{eq:bound-on-magnitudes}, we can finally bound~\eqref{eq:Eucl-dist-bound} from above:
\begin{align}
    \left \Vert f - \widehat{f}\,\right \Vert^2_2 & \leq 2 \sum_{i=1}^{T-1} \left( \left |r_i  - \widehat{r}_i\right|^2 + r_i^2 \left |\mathrm{e}^{\mathrm{i}\theta_i} -  \mathrm{e}^{\mathrm{i}\widehat{\theta}_i} \right|^2 \right)\\
    & \leq \sum_{i=1}^{T-1}\left( \frac{2\varepsilon^2}{r_i ^2} + 2i^2 r_i^2 \left(2 - 2 \sqrt{1 - \frac{\varepsilon'^2}{\varepsilon^2}}\right) \right)\\
    & \leq  \frac{2T\varepsilon^2}{\gamma} + \frac{2T^3 - 3T^2 + T}{3}\left(2 - 2 \sqrt{1 - \frac{\varepsilon'^2}{\varepsilon^2}}\right)\label{eq:error-bound-K-2},
\end{align}
where the final inequality holds due to the following facts:
\begin{enumerate}
    \item $0 < r_i \leq 1$, for all $i \in \{0, 1, \ldots, T-1\}$,
    \item $\gamma = \min_i |f_i|^2 = \min_i r_i^2$.
\end{enumerate}
This completes the reconstruction error bound analysis.

\subsection{Number of Circuit Repetitions}

Recall from the statement of Theorem~\ref{thm:stability-base-case} that each diagonal element $Z_{ii}$ is estimated using $N_{\mathrm{diag}}$ repetitions of the circuit presented in Section~\ref{sec:measurement-diag}, and each real and imaginary part of each first off-diagonal element $Z_{i,i+1}$ is estimated using $N_{\mathrm{off}}$ repetitions of the corresponding circuit presented in Section~\ref{sec:measurement-off-diag}. 
Each repetition yields a Bernoulli sample, so the empirical averages $\widehat Z_{ii}$ and $\widehat Z_{i,i+1}$ concentrate around their true values $Z_{ii}$ and $Z_{i,i+1}$, respectively.

We start from the reconstruction error bound~\eqref{eq:error-bound-K-2}:
\begin{equation}\label{eq:reconstruction-error}
\left\Vert  f - \widehat{f}\, \right\Vert^2_2
\leq \frac{2T\varepsilon^2}{\gamma}
+ \frac{2T^3 - 3T^2 + T}{3}
  \left(2 - 2\sqrt{1 - \frac{\varepsilon'^2}{\varepsilon^2}}\right).
\end{equation}
To ensure the total reconstruction error is below the target threshold $\eta$, we allocate half of the total error budget to each term in~\eqref{eq:reconstruction-error}:
\begin{align}
\frac{2T\varepsilon^2}{\gamma} &\leq \frac{\eta^2}{2}, \label{eq:error-allocation1}\\
\frac{2T^3 - 3T^2 + T}{3}
    \left(2 - 2\sqrt{1 - \frac{\varepsilon'^2}{\varepsilon^2}}\right)
&\leq \frac{\eta^2}{2}. \label{eq:error-allocation2}
\end{align}

From~\eqref{eq:error-allocation1}, we immediately obtain
\begin{equation}\label{eq:eps-bound}
\varepsilon^2 \leq \frac{\gamma \eta^2}{4T}.
\end{equation}

For~\eqref{eq:error-allocation2}, we use the standard bound
\begin{equation}
2 - 2\sqrt{1 - x} \le 2x, \qquad \text{for all } x \in [0,1],
\end{equation}
which gives a sufficient condition:
\begin{equation}\label{eq:epsp-bound}
\frac{2T^3 - 3T^2 + T}{3} \cdot \frac{2 \varepsilon'^2}{\varepsilon^2} \le \frac{\eta^2}{2}.
\end{equation}
Rearranging and substituting the bound on $\varepsilon^2$ from~\eqref{eq:eps-bound}, we obtain
\begin{equation}\label{eq:epsp-final}
\varepsilon'^2 \le \frac{3 \gamma \eta^4}{16T(2T^3 - 3T^2 + T)}.
\end{equation}

We now invoke the max-element concentration lemma
(Lemma~\ref{lem:max-element}) with
\begin{equation}
    D = \{0,\dots,T-1\},
    \qquad
    O = \{(i,i+1): i = 0,\dots,T-2\},
\end{equation}
so that $|D| = T$ and $|O| = T-1$. For the diagonals,
Lemma~\ref{lem:max-element} states that if
\begin{equation}\label{eq:Ndiag-from-lemma}
    N_{\mathrm{diag}} \ge \frac{2}{\varepsilon^2} \log \left(\frac{4T}{\delta}\right),
\end{equation}
then with probability at least $1 - \delta/2$, we have
\begin{equation}
    \max_{i \in D} |\widehat Z_{ii} - Z_{ii}| \le \varepsilon.
\end{equation}
Combining~\eqref{eq:Ndiag-from-lemma} with the choice~\eqref{eq:eps-bound} gives
\begin{equation}\label{eq:Ndiag}
N_{\mathrm{diag}} \ge \frac{8T}{\gamma \eta^2} \log\!\left(\frac{4T}{\delta}\right).
\end{equation}

For the first off-diagonal entries, Lemma~\ref{lem:max-element} gives that if
\begin{equation}\label{eq:Noff-from-lemma}
    N_{\mathrm{off}}\ge \frac{2}{\varepsilon'^2} \log \left(\frac{8(T-1)}{\delta}\right),
\end{equation}
then with probability at least $1 - \delta/2$, we have
\begin{equation}
    \max_{(i, i+1) \in O} |\widehat Z_{i,i+1} - Z_{i,i+1}| \le \varepsilon'
\end{equation}
Combining~\eqref{eq:Noff-from-lemma} with~\eqref{eq:epsp-bound} yields
\begin{equation}\label{eq:Noff}
    N_{\mathrm{off}} \ge \frac{32T(2T^3 - 3T^2 + T)}{3\gamma \eta^4}
    \log\!\left(\frac{8(T-1)}{\delta}\right).
\end{equation}

Combining~\eqref{eq:Ndiag} and~\eqref{eq:Noff} gives the stated bounds in~\eqref{eq:circuit-repetitions-bound}.  
Under these sampling conditions, both terms in~\eqref{eq:reconstruction-error} are at most $\eta^2/2$, ensuring
\begin{equation}
\left\| f - \widehat{f}\,\right\|_2 \le \eta
\end{equation}
with probability at least $1 - \delta$. This completes the proof for Theorem~\ref{thm:stability-base-case}.

\section{Proof for Lemma~\ref{lemma:injectivity-kernel-equivalence}}
\label{app:proof-for-lemma-injectivity-kernel-equivalence}

In this section, we prove that the measurement map $\mathcal{A}$ is injective on the signal set $\mathcal{S}_K$ if and only if $\mathcal{D}(L, Z) \cap \ker(\mathcal{A}) = \{0\}$ for all $Z = ff^{\dagger}$ generated by signals $f \in \mathcal{S}_K$.

($\Rightarrow$) Suppose $\mathcal{A}$ is injective on $\mathcal{S}_K$. Let $\Delta \in \mathcal{D}(L, Z) \cap \ker(\mathcal{A})$. 
By the definition of the descent cone $\mathcal{D}(L, Z)$ in~\eqref{def:descent-cone}, there exists $\tau > 0$ such that $Y = Z + \tau \Delta \succeq 0$. 
Since $\Delta \in \ker(\mathcal{A})$, we have $\mathcal{A}(\Delta) = 0$, which implies $\mathcal{A}(Y) = \mathcal{A}(Z) + \tau \mathcal{A}(\Delta) = \mathcal{A}(Z)$. 
Thus, $Y$ is a positive semidefinite matrix that produces the exact same measurements as the true lifted matrix $Z$. 
By the exact recovery guarantee (Theorem~\ref{thm:exact-recovery-quantum-phaselift-estimator}), the solution to the noiseless problem is unique. 
Therefore, $Y = Z$, which implies $\tau \Delta = 0$. Since $\tau > 0$, we must have $\Delta = 0$.

($\Leftarrow$) Suppose $\mathcal{D}(L, Z) \cap \ker(\mathcal{A}) = \{0\}$ for all $Z = ff^{\dagger}$ with $f \in \mathcal{S}_K$. 
Assume for the sake of contradiction that $\mathcal{A}$ is not injective on $\mathcal{S}_K$. 
Then there exist distinct signals $f, g \in \mathcal{S}_K$ such that their lifted matrices $Z = f f^\dagger$ and $Y = g g^\dagger$ satisfy $Z \neq Y$ but $\mathcal{A}(Z) = \mathcal{A}(Y)$. 
Consider the direction $\Delta = Y - Z$. 
First, we have $Z + 1 \cdot \Delta = Y \succeq 0$, so $\Delta$ is feasible with respect to the first condition in the definition of the cone $\mathcal{D}(L, Z)$. 
Second, we have $L(Z + \Delta) = \|\mathcal{A}(Y) - b\|_2^2$, and in the noiseless case ($b=\mathcal{A}(Z)$), we have $L(Z+\Delta) = 0 = L(Z)$. 
Thus $\Delta$ satisfies the second condition of the cone $\mathcal{D}(L, Z)$. 
Thus, $\Delta \in \mathcal{D}(L, Z)$. 
Furthermore, $\mathcal{A}(\Delta) = \mathcal{A}(Y) - \mathcal{A}(Z) = 0$, so $\Delta \in \ker(\mathcal{A})$. 
This implies $\Delta$ is a non-zero element in the intersection, which contradicts the hypothesis. Thus, $\mathcal{A}$ must be injective.

This concludes the proof.

\section{Proof for Theorem~\ref{thm:stability-proof-least-squares-estimator}}\label{app:stability-proof-least-squares-estimator}

Let $\Delta \coloneqq X^* - Z$. Consider the following:
\begin{equation}
 \left \| \mathcal{A}(\Delta)  \right \|_2  =  \left \| \mathcal{A}(X^*) - \mathcal{A}(Z) \right \|_2. 
\end{equation}
Using the identity $b = \mathcal{A}(Z) + \varepsilon$, we get
\begin{align}
 \left \| \mathcal{A}(\Delta)  \right \|_2  & =  \left \| \mathcal{A}(X^*) - b + \varepsilon \right \|_2\\
 & \leq \left \| \mathcal{A}(X^*) - b\right \|_2 + \left \| \varepsilon \right \|_2\\
 & \leq \left \| \mathcal{A}(Z) - b\right \|_2 + \left \| \varepsilon \right \|_2 \\
 & = \left \| \varepsilon \right \|_2 + \left \| \varepsilon \right \|_2\\
 & = 2\left \| \varepsilon \right \|_2. 
\end{align}
The first inequality follows directly from the triangle inequality, the second inequality follows from the fact that $\left \| \mathcal{A}(X^*) - b\right \|_2 \leq \left \| \mathcal{A}(Z) - b\right \|_2$ holds since $X^*$ is the minimizer and $Z$ is a feasible solution of the least-squares problem~\eqref{eq:least-square-problem}, and finally, the second equality again uses the identity $b = \mathcal{A}(Z) + \varepsilon$. 
Now finally using the definition of the minimum conic singular value $\sigma_{\mathrm{min}}(\mathcal{A}, \mathcal{D}(L, Z))$ in~\eqref{eq:mini-conic-singular-value}, we have $\left \| \mathcal{A}(\Delta) \right \|_2 \geq \sigma_{\mathrm{min}}(\mathcal{A}, \mathcal{D}(L, Z)) \left \| \Delta \right \|_2$ for any feasible $\Delta$. 
For our case, we have $\Delta \coloneqq X^* - Z$ which is feasible since $X^* - Z \in \mathcal{D}(L, Z)$ and therefore, using this fact and using the above inequality, we have
\begin{equation}\label{eq:minimizer-errror-bound}
    \left \|X^* - Z \right\|_2 \leq \frac{2\left \|\varepsilon\right \|_2}{\sigma_{\mathrm{min}}(\mathcal{A}, \mathcal{D}(L, Z))}.
\end{equation}

This inequality provides a bound on the error between the minimizer $X^*$ and the true lifted matrix $Z$. 
Using this, we next bound the error between the signal estimate $\widehat{f}$ and the true signal $f$. 
The least-squares estimator (Algorithm~\ref{algo:least-squares}) extracts $\widehat{f}$ from the principal eigenpair of $X^*$ as $\widehat{f} = \sqrt{\lambda_{\max}} v_{\max}$, up to a global phase. 
To this end, we apply the Davis-Kahan $\sin \Theta$ theorem~\cite{Davis1970}. 
Let $\theta$ be the angle between the principal eigenvector of $Z$, that is, $f/\|f\|_2$ and that of $X^*$, that is, $v_{\max}$. 
Since the spectral gap of $Z$ is $\|f\|_2^2$, we have
\begin{equation}\label{eq:angle-bound}
    |\sin \theta| \leq \frac{2 \left \|X^* - Z \right\|_2}{\left \|f\right \|_2^2}. 
\end{equation}

With that, consider the following:
\begin{align}
    \min_{\phi \in [0, 2\pi)} \left \| \widehat{f} - \mathrm{e}^{\mathrm{i} \phi} f \right \|_2 & = \min_{\phi \in [0, 2\pi)} \left \| \sqrt{\lambda_{\max}} v_{\max} - \mathrm{e}^{\mathrm{i} \phi} \left \| f\right \|_2 \frac{f}{\left \| f\right \|_2} \right \|_2\\
    & = \min_{\phi \in [0, 2\pi)} \left \| \sqrt{\lambda_{\max}} v_{\max} - \left \| f\right \|_2 v_{\max} +  \left \| f\right \|_2 v_{\max} - \mathrm{e}^{\mathrm{i} \phi} \left \| f\right \|_2 \frac{f}{\left \| f\right \|_2} \right \|_2\\
    & \leq \min_{\phi \in [0, 2\pi)} \left \| \sqrt{\lambda_{\max}} v_{\max} - \left \| f\right \|_2 v_{\max} \right \|_2 +  \left \| \left \| f\right \|_2 v_{\max} - \mathrm{e}^{\mathrm{i} \phi} \left \| f\right \|_2 \frac{f}{\left \| f\right \|_2} \right \|_2\\
    & =  \left \| \sqrt{\lambda_{\max}} v_{\max} - \left \| f\right \|_2 v_{\max} \right \|_2 +  \min_{\phi \in [0, 2\pi)} \left \| \left \| f\right \|_2 v_{\max} - \mathrm{e}^{\mathrm{i} \phi} \left \| f\right \|_2 \frac{f}{\left \| f\right \|_2} \right \|_2\\
    & \leq  \left \| v_{\max} \right \|_2 \left | \sqrt{\lambda_{\max}}  - \left \| f\right \|_2  \right | +  \left \| f\right \|_2  \min_{\phi \in [0, 2\pi)}\left \|  v_{\max} - \mathrm{e}^{\mathrm{i} \phi}  \frac{f}{\left \| f\right \|_2} \right \|_2\\
    & = \left | \sqrt{\lambda_{\max}}  - \left \| f\right \|_2  \right | +  \left \| f\right \|_2 \min_{\phi \in [0, 2\pi)} \left \|  v_{\max} - \mathrm{e}^{\mathrm{i} \phi}  \frac{f}{\left \| f\right \|_2} \right \|_2\label{eq:two-terms-error-bound},
\end{align}
where the first inequality follows from the triangle inequality and the fourth equality follows due to the fact that $v_{\max}$ has norm 1.

Using Weyl's equality $\left | \lambda_{\max}  - \left \| f\right \|_2^2  \right | \leq \left\| X^* - Z\right\|_2$, we bound the first term (eigenvalue perturbation) of the above inequality:
\begin{equation}\label{eq:eigenvalue-perturbation}
\left | \sqrt{\lambda_{\max}}  - \left \| f\right \|_2  \right | = \frac{\left | \lambda_{\max}  - \left \| f\right \|_2^2  \right |}{\left | \sqrt{\lambda_{\max}}  + \left \| f\right \|_2  \right |} \leq \frac{\left\| X^* - Z\right\|_2}{ \left \| f\right \|_2}.
\end{equation}

Now, for bounding the second term (eigenvector perturbation), we use the following standard identity:
\begin{equation}
    \min_{\phi \in [0, 2\pi)} \left \|  v_{\max} - \mathrm{e}^{\mathrm{i} \phi}  \frac{f}{\left \| f\right \|_2} \right \|_2 = 2 \sin(\theta/2).
\end{equation}
Using the inequality $2 \sin(\theta/2) \leq \sqrt{2}\sin(\theta)$, which holds for all $\theta \in [0, \pi/2]$, we get
\begin{equation}
    \min_{\phi \in [0, 2\pi)} \left \|  v_{\max} - \mathrm{e}^{\mathrm{i} \phi}  \frac{f}{\left \| f\right \|_2} \right \|_2 \leq \sqrt{2}\sin(\theta).
\end{equation}
Now using the $\sin\theta$-bound, given in~\eqref{eq:angle-bound}, we get
\begin{equation}
    \min_{\phi \in [0, 2\pi)} \left \|  v_{\max} - \mathrm{e}^{\mathrm{i} \phi}  \frac{f}{\left \| f\right \|_2} \right \|_2 \leq  \frac{2\sqrt{2} \left \|X^* - Z \right\|_2}{\left \|f\right \|_2^2}.
\end{equation}

Finally, substituting the eigenvalue perturbation bound~\eqref{eq:eigenvalue-perturbation}, the above eigenvector perturbation bound, and the least-squares minimizer error bound~\eqref{eq:minimizer-errror-bound}, in~\eqref{eq:two-terms-error-bound}, we get
\begin{equation}
    \min_{\phi \in [0, 2\pi)} \left \| \widehat{f} - \mathrm{e}^{\mathrm{i} \phi} f \right \|_2 \leq \frac{\left\| X^* - Z\right\|_2}{ \left \| f\right \|_2} + \left \| f\right \|_2 \left ( \frac{2\sqrt{2} \left \|X^* - Z \right\|_2}{\left \|f\right \|_2^2} \right ) \leq \frac{2(2\sqrt{2} + 1)}{\sigma_{\mathrm{min}}(\mathcal{A}, \mathcal{D}(L, Z)) } \cdot \frac{\left \|\varepsilon\right \|_2}{\left \| f\right \|_2}.
\end{equation}
This concludes the proof.

\section{Proof for Theorem~\ref{thm:stability-spectral-estimation}}
\label{app:spectral-estimation-proof}

Here, we prove the stability of spectral estimation from time-series estimation.

For convenience, we introduce the discrete angular frequencies
\begin{equation}
    \omega_k \coloneqq E_k \Delta, \qquad \text{for all } k = 1, \ldots, r,
\end{equation}
so that~\eqref{eq:discrete-signal-energy} becomes
\begin{equation}\label{eq:signal-complex-form}
    f_n = \sum_{k=1}^r \left | c_k\right|^2 \mathrm{e}^{-\mathrm{i}\omega_k n}, \qquad \text{for all }n = 0, \ldots, T-1
\end{equation}
 The Nyquist--Shannon sampling rate $\Delta \leq \pi/E_{\max}$ ensures that the mapping $\omega \leftrightarrow E_k$ is bijective. 
Furthermore, we consider the most general scenario, where the $\omega_k$ need not coincide with the discrete grid $\{2\pi \ell / T\}_{\ell=0}^{T-1}$. 
The case in which some frequencies do lie on these grid points naturally appears as a special case within this general scenario.

Now let $D \in \mathbb{C}^{T \times T}$ denote the unitary discrete Fourier transform (DFT) matrix with entries
\begin{equation}
    D_{\ell n} = \frac{1}{\sqrt{T}} \mathrm{e}^{\mathrm{i}2\pi\ell n/T},\qquad \text{where } \ell, n =  0, 1, \ldots, T-1.
\end{equation}
With this, we define the DFT of the true signal $f$ as follows:
\begin{equation}
    F[\ell] \coloneqq (Df)[\ell] = \frac{1}{\sqrt{T}}\sum_{n=0}^{T-1} f_n \mathrm{e}^{\mathrm{i}2\pi\ell n/T}, \qquad \ell =  0, 1, \ldots, T-1.\label{eq:dft-signal}
\end{equation}
Similarly, we define it for the signal estimate $\widehat f$:
\begin{equation}
    \widehat F[\ell] \coloneqq (D\widehat f)[\ell] = \frac{1}{\sqrt{T}}\sum_{n=0}^{T-1} \widehat f_n \mathrm{e}^{\mathrm{i}2\pi\ell n/T}, \qquad \ell =  0, 1, \ldots, T-1\label{eq:dft-signal-estimate}.
\end{equation}
Since DFT is unitary and the Euclidean distance is invariant under the action of a unitary, we have
\begin{equation}
    \left \Vert \widehat F - F \right \Vert_2 = \left \Vert D \widehat f - D f \right \Vert_2 =  \left \Vert \widehat f - f \right \Vert_2 \leq \varepsilon\label{eq:error-DFT-domain}.
\end{equation}

Now substituting~\eqref{eq:signal-complex-form} in~\eqref{eq:dft-signal}, we get the following for all $\ell =  0, 1, \ldots, T-1$ :
\begin{align}
     F[\ell] & = \frac{1}{\sqrt{T}}\sum_{n=0}^{T-1} \left(\sum_{k=1}^r \left | c_k\right|^2 \mathrm{e}^{-\mathrm{i}\omega_k n}\right) \mathrm{e}^{\mathrm{i}2\pi\ell n/T}\\
     & = \sum_{k=1}^r \left | c_k\right|^2 \left (\frac{1}{\sqrt{T}}\sum_{n=0}^{T-1}   \mathrm{e}^{-\mathrm{i}(\omega_k n - 2\pi\ell n/T)}\right)\\
     & = \sum_{k=1}^r \left | c_k\right|^2 D_T(\omega_k - 2\pi \ell / T)\label{eq:dft-with-Dirichlet},
\end{align}
where 
\begin{equation}\label{eq:Dirichlet-kernel}
    D_T(x) \coloneqq \frac{1}{\sqrt{T}} \mathrm{e}^{-\mathrm{i} x (T-1)/2} \frac{\sin(Tx/2)}{\sin(x/2)}
\end{equation}
is the Dirichlet kernel.

Before moving forward, we state the following three properties of the Dirichlet kernel that we will later use in our analysis:
\begin{enumerate}
    \item \textbf{Value at the origin.} Taking the limit $x\to 0$ in~\eqref{eq:Dirichlet-kernel} and using
    $\lim_{x\to 0} \frac{\sin(Tx/2)}{\sin(x/2)} = T$ gives
    \begin{equation}
        |D_T(0)| = \sqrt{T}.
        \label{eq:dirichlet-at-zero}
    \end{equation}
    \item \textbf{Uniform upper bound for $x \in [-\pi, \pi]$.} For all $x \in  [-\pi, \pi]$, we have
    \begin{equation}
        \left | D_T(x)\right| = \frac{1}{\sqrt{T}}  \frac{\left | \sin(Tx/2) \right|}{\left | \sin(x/2)\right |} = \frac{1}{\sqrt{T}}  \frac{ \sin(T|x|/2) }{ \sin(|x|/2)} \leq \frac{1}{\sqrt{T} \sin(|x|/2)}\label{eq:uniform-upper-bound-dirichlet-1},
    \end{equation}
    where the second equality simply follows due to the fact that the Dirichlet kernel is symmetric around zero, that is, $D_T(x) = \overline{D_T(-x)}$, and the inequality holds due to $|\sin y | \leq 1$ for all $y \in \mathbb{R}$.
    
    Now since $|x| \leq \pi$, we have $|x|/2 \leq \pi/2$, and on the range $[0, \pi/2]$, the following inequality holds:
    \begin{equation}
        \sin y \geq \frac{2}{\pi} y.
    \end{equation}
    Taking $y = |x|/2$, we obtain
    \begin{equation}
        \sin(|x|/2) \geq \frac{2}{\pi} \cdot \frac{|x|}{2} = \frac{|x|}{\pi}.
    \end{equation}

    Pluggin this in~\eqref{eq:uniform-upper-bound-dirichlet-1}, we get
    \begin{equation}
        \left | D_T(x)\right| \leq  \frac{\pi }{\sqrt{T} |x| }\label{eq:upper-bound-all-x}
    \end{equation}
    for all $|x| \leq \pi$. This means that the magnitude of the Dirichlet kernel decays away from zero like $\sfrac{\pi }{\sqrt{T} |x| }$.

    \item \textbf{Uniform lower bound for $x \in [-\pi/T, \pi/T]$ (region near zero):} Recall from above that
    \begin{equation}
        \left | D_T(x)\right| = \frac{1}{\sqrt{T}}  \frac{ \sin(T|x|/2) }{ \sin(|x|/2)}\label{eq:lower-bound-dirichlet-1}.
    \end{equation}
    For all $|x| \leq \pi/T$, we have $0 \leq T|x|/2 \leq \pi/2$ and $0 \leq |x|/2 \leq \pi/2$. Therefore, we recall the following two inequalities that hold for all $y \in [0, \pi/2]$:
    \begin{align}
        \frac{2}{\pi} y \leq \sin y \leq y.
    \end{align}
    Taking $y = |x|/2$ and using these inequalities in~\eqref{eq:lower-bound-dirichlet-1}, we get
    \begin{equation}
        \left | D_T(x)\right| \geq \frac{2}{\pi} \sqrt{T}\label{eq:lower-bound-dirichlet-2}.
    \end{equation}
    This implies that the Dirichlet kernel maintains a large magnitude for all $x$ sufficiently close to zero.
\end{enumerate}

Now let $\gamma_\omega \in (0, \pi]$ denote the minimum nonzero gap between any two frequencies of $H$:
\begin{equation}\label{eq:freq-gap}
    \gamma_\omega \coloneqq \min_{j\neq k} | \omega_j - \omega_k|~~(\operatorname{mod} 2\pi).
\end{equation}
Note that $\gamma_\omega$ lies in the range $(0, \pi]$ because the angular frequencies $\omega_j$ lie in $(0, 2\pi]$ and we do modulo $2\pi$. 
Furthermore, $\gamma_\omega$ is related to the energy gap $\gamma$ as $\gamma_\omega = \gamma \Delta$. 
We will use this relation at the end of the analysis to get the final conditions in terms of $\gamma$.
Moreover, let $\ell_k$ denote the nearest grid point to $\omega_k$
\begin{equation}
    \ell_k \coloneqq \operatorname{argmin}_{\ell \in \{0, \ldots T-1\}} \left | \omega_k - 2\pi\ell/T\right|.
\end{equation}
By this definition, we have
\begin{equation}\label{eq:nearest-point-distance}
    \left | \omega_k -  2\pi\ell_k/T\right| \leq \frac{\pi}{T}
\end{equation}
since the distance between two consecutive grid points is $2\pi/T$.

With this in place, we now focus on the amplitudes $F[\ell]$ at those grid points $\ell$ that are closest to the true frequencies. 
Our goal is to derive a lower bound on these amplitudes in terms of $T$, showing that as $T$ increases, the corresponding peaks become stronger and more distinguishable. 
This ensures that the grid points near the true frequencies can be reliably identified as the frequency estimates. 
Evaluating~\eqref{eq:dft-with-Dirichlet} at $\ell = \ell_k$, where, as mentioned above, is the nearest grid point to $\omega_k$, we obtain
\begin{equation}
    |F[\ell_k]| = \Big | | c_k |^2 D_T(\omega_k - 2\pi \ell_k / T) + \sum_{j \neq k} | c_j |^2 D_T(\omega_j - 2\pi \ell_k / T) \Big|
\end{equation}
Using the reverse triangle inequality, we get
\begin{align}
    |F[\ell_k]| & \geq \Big | | c_k |^2 D_T(\omega_k - 2\pi \ell_k / T) \Big|  -  \Big |\sum_{j \neq k} | c_j |^2 D_T(\omega_j - 2\pi \ell_k / T) \Big|\\
    & \geq | c_k |^2 | D_T(\omega_k - 2\pi \ell_k / T) |  -  \sum_{j \neq k} | c_j |^2 | D_T(\omega_j - 2\pi \ell_k / T) |\label{eq:eq:two-term-for-points-near-freqs}.
\end{align}

For bounding the first term from below, we use the fact that the distance between $\omega_k$ and its nearest grid point $2\pi \ell_k/T$ is bounded from above by $\pi/T$ and use the uniform lower bound of the Dirichlet kernel for all $x \in [-\pi/T, \pi/T]$ given in~\eqref{eq:lower-bound-dirichlet-2}, to get
\begin{equation}
    | c_k |^2 | D_T(\omega_k - 2\pi \ell_k / T) | \geq  | c_k |^2 \cdot \frac{2}{\pi} \sqrt{T} =  \frac{2| c_k |^2}{\pi}  \sqrt{T}\label{eq:first-term-bound-for-points-near-freqs}.
\end{equation}

We next bound the second term from above. For this, consider the following:
\begin{equation}
    \omega_j - 2\pi \ell_k/T
= (\omega_j - \omega_k) + (\omega_k - 2\pi \ell_k/T).
\end{equation}
Taking the absolute value on both sides and using the reverse triangle inequality, we get
\begin{equation}
    |\omega_j - 2\pi \ell_k/T| \geq |\omega_j - \omega_k| - |\omega_k - 2\pi \ell_k/T|.
\end{equation}
Using the definition of the minimum frequency gap $\gamma_\omega$ (given in~\eqref{eq:freq-gap}) and~\eqref{eq:nearest-point-distance}, we get
\begin{equation}
    |\omega_j - 2\pi \ell_k/T| \geq \gamma_\omega - \pi/T.
\end{equation}
Now, if we assume that $\gamma_\omega/2 \geq \pi/T$, which means that the minimum separation between any two frequencies is large enough so that no two of them share the same nearest grid point, then we have
\begin{equation}
    |\omega_j - 2\pi \ell_k/T| \geq \frac{\gamma_\omega}{2}.
\end{equation}
Setting $x = \omega_j - 2\pi \ell_k/T $ and using the uniform upper bound~\eqref{eq:upper-bound-all-x}, we get
\begin{equation}
    \left | D_T(\omega_j - 2\pi \ell_k/T) \right| \leq \frac{\pi}{\sqrt{T} \cdot \gamma_\omega/2} = \frac{2 \pi}{ \gamma_\omega}\cdot \frac{1}{\sqrt{T}}\label{eq:second-term-bound-for-points-near-freqs}.
\end{equation}
With this, we get an upper bound on the second term as follows:
\begin{equation}
    \sum_{j \neq k} | c_j |^2 \left | D_T(\omega_j - 2\pi \ell_k / T) \right | \leq  \sum_{j \neq k} | c_j |^2 \left ( \frac{2 \pi}{ \gamma_\omega}\cdot \frac{1}{\sqrt{T}} \right) \leq \frac{2 \pi}{ \gamma_\omega}\cdot \frac{1}{\sqrt{T}}, 
\end{equation}
where the last inequality follows because $\sum_{j \neq k} | c_j |^2 \leq 1$.

Using the bound above for the second term and the bound~\eqref{eq:first-term-bound-for-points-near-freqs} for the first term in~\eqref{eq:eq:two-term-for-points-near-freqs}, we obtain
\begin{equation}\label{eq:true-grid-points-bound}
    |F[\ell_k]| \geq \frac{2| c_k |^2}{\pi}  \sqrt{T} - \frac{2 \pi}{ \gamma_\omega}\cdot \frac{1}{\sqrt{T}}
\end{equation}
This gives a consistent "signal minus leakage" bound.

Next, we analyze the amplitudes $F[\ell]$ at grid points $\ell$ that are sufficiently far from all true frequencies. 
Our goal is to show that, at such points, the contributions from all Dirichlet kernels are small, leading to negligible overall amplitude. 
In particular, we derive an upper bound on $|F[\ell]|$ as a function of $T$ and show that it decreases with increasing $T$. 
This result ensures that the grid points distant from the true frequencies produce only weak responses, allowing the true spectral peaks to be clearly distinguished from background leakage or numerical noise.

Let $\ell$ be a grid point that is at least $\gamma_\omega/2$ away from every $\omega_j$, that is, 
\begin{equation}
    \left|\omega_j - 2\pi \ell/T\right| \ge \frac{\gamma_\omega}{2},
    \qquad \text{for all } j=1,\dots,r.
    \label{eq:offgrid-far-bin}
\end{equation}
Then applying~\eqref{eq:upper-bound-all-x} to each summand of~\eqref{eq:dft-with-Dirichlet}, we get
\begin{align}
    \left | F[\ell]\right| & = \left | \sum_{j=1}^r \left | c_j\right|^2 D_T(\omega_j - 2\pi \ell / T) \right| \leq  \sum_{j=1}^r \left | c_j\right|^2 \left | D_T(\omega_j - 2\pi \ell / T) \right| \\
    & \leq \sum_{j=1}^r | c_j |^2 \left ( \frac{2 \pi}{ \gamma_\omega}\cdot \frac{1}{\sqrt{T}} \right)= \frac{2 \pi}{ \gamma_\omega}\cdot \frac{1}{\sqrt{T}}\label{eq:empty-grid-points-bound},
\end{align}
since $\sum_j |c_j|^2 = 1$. This bound clearly shows that increasing $T$ suppresses the amplitudes of the unwanted grid points.

So far, our analysis has focused on the DFT of the true signal $f$. 
However, in practice, we only have access to an estimated signal $\widehat f$ that is within an error $\varepsilon$ of $f$. 
We now turn our attention to the DFT of this estimated signal and examine how its amplitudes differ from those of the true DFT, specifically quantifying how the estimation error $\varepsilon$ propagates into the frequency domain.

From~\eqref{eq:error-DFT-domain}, we have the following for every $\ell$:
\begin{equation}
    \left | \widehat F[\ell] - F[\ell] \right| \leq \varepsilon\label{eq:dft-error-epsilon}.
\end{equation}
For the grid points $\ell$ that are close to some frequency, we combine~\eqref{eq:true-grid-points-bound} with this to get
\begin{equation}
    \left | \widehat F[\ell]  \right| \geq \frac{2| c_k |^2}{\pi}  \sqrt{T} - \frac{2 \pi}{ \gamma_\omega}\cdot \frac{1}{\sqrt{T}} - \varepsilon. 
\end{equation}
Similarly, for the grid points $\ell$ that are not close to any frequencies, we combine~\eqref{eq:empty-grid-points-bound} with~\eqref{eq:dft-error-epsilon} to get
\begin{equation}
    \left | \widehat F[\ell]  \right| \leq \frac{2 \pi}{ \gamma_\omega}\cdot \frac{1}{\sqrt{T}} + \varepsilon. 
\end{equation}
Hence, the peaks of the desired grid points are distinguishable from the ones that are not as soon as the following holds:
\begin{equation}
    \frac{2 \min_k | c_k |^2}{\pi}  \sqrt{T} - \frac{2 \pi}{ \gamma_\omega}\cdot \frac{1}{\sqrt{T}} - \varepsilon > \frac{2 \pi}{ \gamma_\omega}\cdot \frac{1}{\sqrt{T}} + \varepsilon.
\end{equation}
This gives us our detectability condition if there is a signal estimation error $\varepsilon$:
\begin{equation}
    \frac{2 \min_k | c_k |^2}{\pi}  \sqrt{T} > \frac{4 \pi}{\gamma_\omega}\cdot \frac{1}{\sqrt{T}} + 2\varepsilon.
\end{equation}
In terms of the energy gap $\gamma$, we have
\begin{equation}
    \frac{2 \min_k | c_k |^2}{\pi}  \sqrt{T} > \frac{4 \pi}{\gamma \Delta}\cdot \frac{1}{\sqrt{T}} + 2\varepsilon.
\end{equation}

With this setup, we now transition from the angular frequency domain to the eigenvalue domain, since our final goal is to estimate the eigenvalues with a target precision of $\delta$. 
The grid spacing in the angular frequency domain is $2\pi/T$, and since $\omega_k = E_k \Delta$, the induced grid spacing in the eigenvalue domain is
\begin{equation}
    \frac{2\pi}{T \Delta}.
\end{equation}
Thus, if we want eigenvalue precision $\delta > 0$, it suffices to choose $T$ so that
\begin{equation}
    \frac{2\pi}{T \Delta} \le \delta
    \qquad \Longleftrightarrow \qquad
    T \ge \frac{2\pi}{\delta \Delta}.
    \label{eq:offgrid-eigenvalue-grid}
\end{equation}
Then each true $\omega_k$ is within $\pi/T$ of some grid point $2 \pi \ell_k/T$, and we can define the coarse estimate
\begin{equation}
\widehat{E}_k \coloneqq \frac{2\pi \ell_k}{T \Delta},
\end{equation}
which obeys $|\widehat{E}_k - E_k| \le \delta$.
This concludes the proof.

\section{Additional Numerical Simulations}\label{app:extranumerics}

\subsection{Recovery Performance as a Function of the Total Number of Shots}

In this subsection, we continue the study of how the reconstruction error scales with the total number of measurement shots, using the second and third datasets explained in Section \ref{sec:numerical-simulations}. 
Recall that these datasets consist of signals generated using a $2 \times 3$ Fermi-Hubbard model and a $4 \times 3$ transverse-field Ising model. 

\begin{figure}[!ht]
\begin{subfigure}[h]{\textwidth}
   \centering
\includegraphics[width=\linewidth]{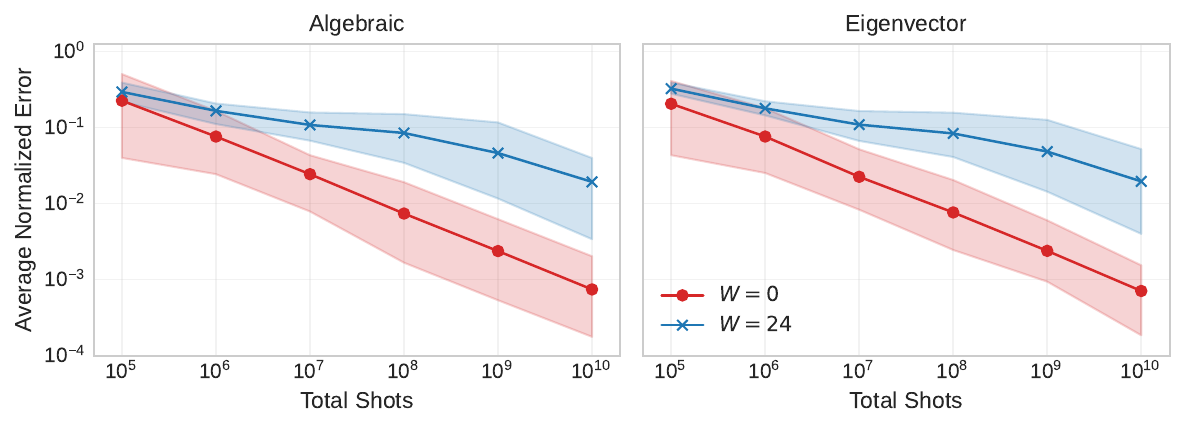}
\caption{$K=2$ }
\end{subfigure}
\begin{subfigure}[h]{\textwidth}
\includegraphics[width=\linewidth]{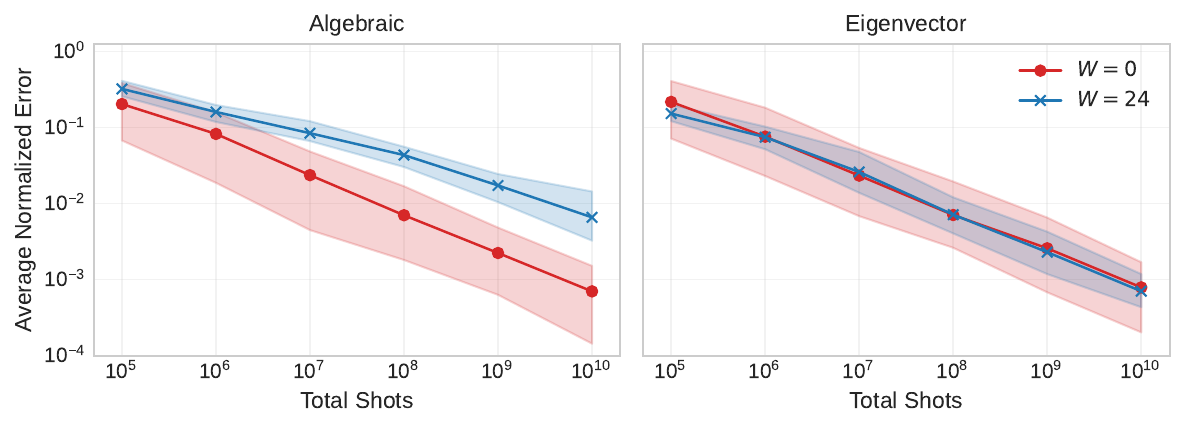}
\caption{$K=W+2$}
\end{subfigure}
\caption{ \textbf{$2\times 3$ Fermi-Hubbard Model.} For signals with a Fock state as input, $T=50$ and randomly chosen Hamiltonian parameters. The signals are grouped into two sets of $10$ random signals labeled by $W =0$ or $W=24$. This grouping is done according to whether they contain either $0$ or $24$ consecutive entries below a noise threshold of $0.1$.}
    \label{fig:Ksig-vs-Kmeas-realistic-FH23}
\end{figure}

Figure \ref{fig:Ksig-vs-Kmeas-realistic-FH23} illustrates this behavior for the $2 \times 3$ Fermi–Hubbard model with a simple Fock state as input and signal length $T = 50$. 
Again, we focus on two representative families of signals,
characterized by intrinsic widths $W = 0$ and $W = 24$, respectively.
We generate random signals for each dataset by randomly sampling the parameters of each Hamiltonian from the intervals previously specified. We sort them into groups labeled by $W$ according to whether they contain a run of $W$ consecutive entries whose absolute value is smaller than $\chi = 0.1$. The majority of these signals have $W=0$ at $\chi=0.1$. We generated random signals until we had $10$ random signals with $W=24$ to compare against $10$ random signals with $W=0$.
The estimator is run on $5$ noisy versions of every signal in each $W$ family at a given shot budget. We compute the normalized Euclidean reconstruction error for all of these instances. For each family, we compute the mean, maximum and minimum of these errors, with the mean reconstruction error shown in a solid line and the minimum and maximum indicated by the shaded region in Figure \ref{fig:Ksig-vs-Kmeas-realistic-FH23} and Figure \ref{fig:Ksig-vs-Kmeas-realistic-ising}.
Figure \ref{fig:Ksig-vs-Kmeas-realistic-FH23}(a) corresponds to a fixed measurement bandwidth $K = 2$ and in Figure \ref{fig:Ksig-vs-Kmeas-realistic-FH23}(b), we choose the measurement bandwidth adaptively as $K = W + 2$.

\begin{figure}[!ht]
\begin{subfigure}[h]{\textwidth}
\includegraphics[width=\linewidth]{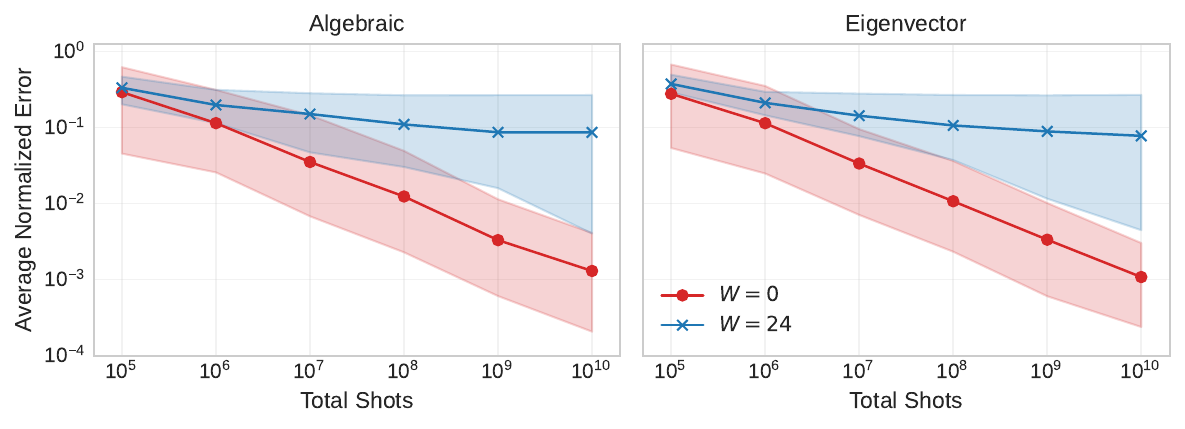}
\caption{$K=2$ }
\end{subfigure}
\begin{subfigure}[h]{\textwidth}
   \centering
\includegraphics[width=\linewidth]{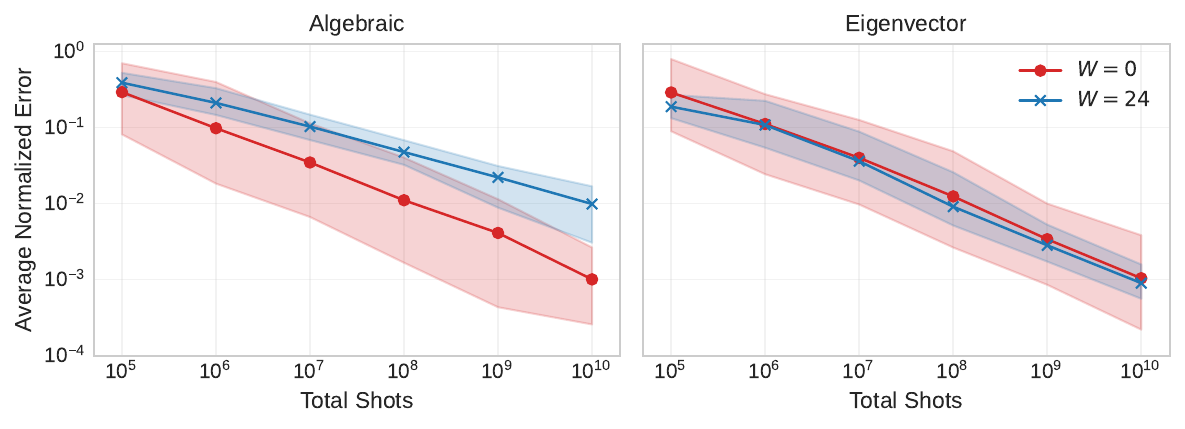}
\caption{$K=W+2$}
\end{subfigure}
\caption{ \textbf{$4\times 3$ Transverse-Field Ising Model.} For signals with the uniform superposition of computational basis states as input, $T=50$ and randomly chosen Hamiltonian parameters. The signals are grouped into two sets of $10$ random signals labeled by $W =0$ or $W=24$. This grouping is done according to whether they contain either $0$ or $24$ consecutive entries below a noise threshold of $0.1$. }
    \label{fig:Ksig-vs-Kmeas-realistic-ising}
\end{figure}

In Figure \ref{fig:Ksig-vs-Kmeas-realistic-ising} we perform the same analysis for a $4 \times 3$ transverse-field Ising model with an input state consisting of the uniform superposition over all computational basis states and signal length $T = 50$. 
Figure \ref{fig:Ksig-vs-Kmeas-realistic-ising}(a) corresponds to a fixed measurement bandwidth $K = 2$ and in Figure \ref{fig:Ksig-vs-Kmeas-realistic-ising}(b) we choose the measurement bandwidth adaptively as $K = W + 2$. 

For both Figure \ref{fig:Ksig-vs-Kmeas-realistic-ising}(a) and Figure \ref{fig:Ksig-vs-Kmeas-realistic-FH23}(a) signals with $W = 0$ satisfy the identifiability condition $K \geq W + 1$, and their reconstruction error decreases
steadily as the total number of shots increases. 
In contrast, signals with $W = 24$ violate this condition, and their reconstruction error again exhibits a pronounced plateau. 
This is further evidence that increasing the number of shots cannot compensate for insufficient band information. 

For both Figure \ref{fig:Ksig-vs-Kmeas-realistic-ising}(b) and Figure \ref{fig:Ksig-vs-Kmeas-realistic-FH23}(b), we choose $K= W+2$ and guarantee that the identifiability condition is satisfied for both signal families from the outset, where we start from $10^5$ shots. 
In this regime, the reconstruction error decreases with the total number of shots for all signals, confirming that once $K$ is chosen appropriately relative to $W$ at a particular high noise threshold $\chi$, any additional measurements directly translate into improved reconstruction accuracy.

Observe that the reconstruction of signals in both cases is similar to or even slightly better than the one we saw previously in Figure~\ref{fig:Ksig-vs-Kmeas-realistic-FH22}, probably due to the interplay of input state and complexity of the Hamiltonian leading to a simpler time-series.

\begin{figure}[!ht]

   \centering
\includegraphics[width=\linewidth]{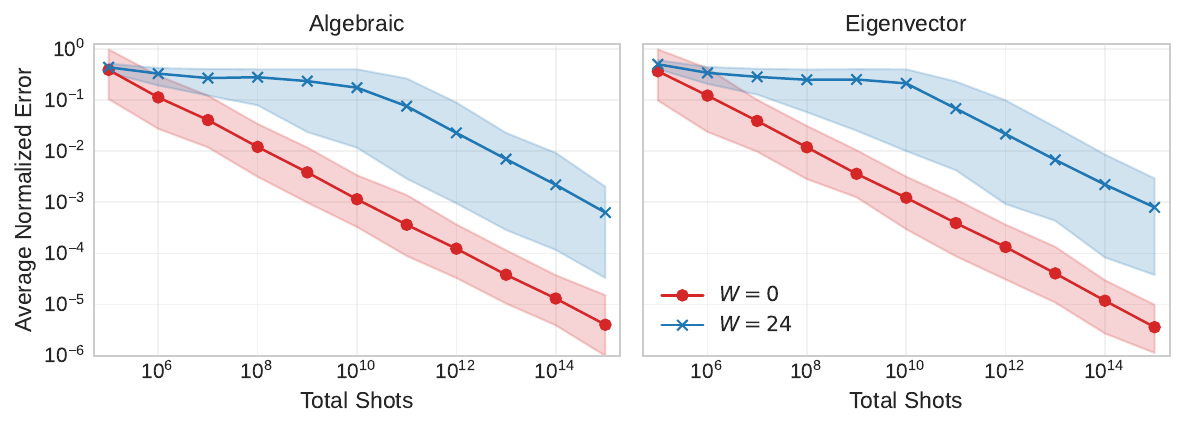}
\caption{ \textbf{$2\times 2$ Fermi-Hubbard Model.} For signals with a coherent Gibbs state with $\beta=0.5$ as input, $T=50$ and randomly chosen Hamiltonian parameters. The signals are grouped into two sets of each with one example signal and labeled by $W =0$ and $W=24$. This grouping is done according to whether they contain either $0$ or $24$ consecutive entries below a noise threshold of $0.1$. Here we set $K=2$ for both signals.}
    \label{fig:flat-to-decay}
\end{figure}

\subsection{From Failed Recovery to Error Decaying with the Number of Shots.}

In our previous analysis (Figure~\ref{fig:Ksig-vs-Kmeas-realistic-FH22}(b),~\ref{fig:Ksig-vs-Kmeas-realistic-FH23}(b), and~\ref{fig:Ksig-vs-Kmeas-realistic-ising}(b)), we showed that for a group of signals with some intrinsic width $W$ at $\chi=0.1$, it is sufficient to choose $K=W+2$ in order to ensure that reconstruction error does not plateau for that entire group of signals. 
Now in Figure \ref{fig:flat-to-decay} we randomly pick a single signal (instead of averaging over a set) from the $2 \times 2$ Fermi-Hubbard dataset with $W=0$ and one signal with $W=24$ and importantly fix $K=2$ for both.
We see that we can increase the number of shots and eventually overcome the plateau for the signal with $W=24$ at $\chi =0.1$. 
This is due to the fact that the initial label $W$ assigned to the signal relates to the choice threshold noise $\chi =0.1$. 
As we increase the number of shots, the shot noise decreases, which would make the number of continuous points under the shot noise value, which we label as $W^*$, lower than the initially assigned value of $W$. 
Once we reach the situation where $K\geq W^*+1$ we shift to an efficient recovery regime where the error starts decreasing as the total number of shots increase, moving away from a plateau regime. 
To show this effect, it was crucial to choose one signal from each group, as the point at which the decay sets in depends on the structure of the individual signal. This is show in Figure \ref{fig:flat-to-decay} where we choose two example signals from the first data set, one with $W=0$ at $\chi=0.1$ and one with $W=24$ at $\chi =0.1$. We fix $K=2$ and increase the number of shots until the reconstruction error decays for both signals.The estimator is run on $150$ noisy versions of each signal in each $W$ family at a given shot budget. We compute the normalized Euclidean reconstruction error for all of these instances. We compute the mean, maximum, and minimum of these errors, with the mean reconstruction error shown in a solid line and the minimum and maximum indicated by the shaded region in Figure \ref{fig:flat-to-decay}.

\subsection{Fixed Shots Budget Analysis}

In this subsection, we provide additional numerics on the trade-off 
for varying the bandwidth $K$ used for reconstruction at a fixed total number of shots.
In Figure \ref{fig:fix-shots-budget-analysis-example-signal}, we show an example signal from the $4 \times 3$ transverse-field Ising model dataset. 
We use our block-by-block algebraic and eigenvector estimators to reconstruct the underlying signal, and again fix the total number of shots to be approximately one million, which are then distributed uniformly across the $K$-band. 
In Figure \ref{fig:fix-shots-budget-analysis-example-signal}(a) we set $K=W+1=2$ (as $W=1$) and in Figure \ref{fig:fix-shots-budget-analysis-example-signal}(b) we set $K=W+4=5$ (as $W=1$). 
We observe good qualitative agreement between the true and reconstructed signal in all plots. 
In both cases, we see that the eigenvector estimator outperforms the algebraic estimator, particularly in the second case.

\begin{figure}[!ht]
\begin{subfigure}[h]{\textwidth}
   \centering
\includegraphics[width=\linewidth]{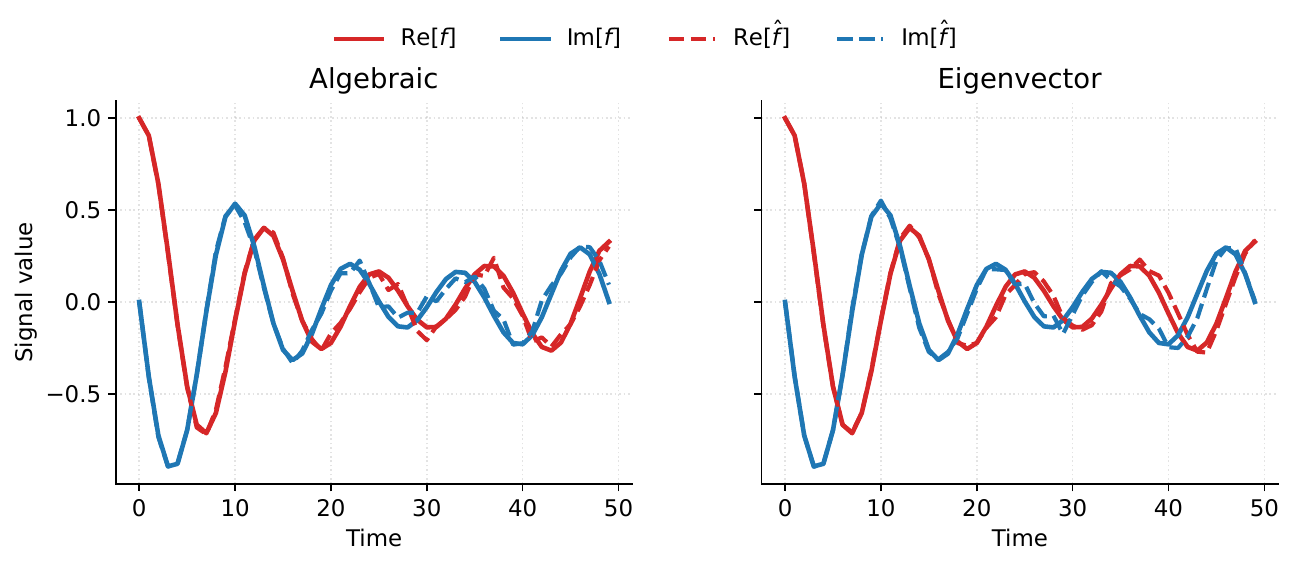}
\caption{Case with $W=1$ and $K = W+1 = 2$}
\end{subfigure}
\begin{subfigure}[h]{\textwidth}
\includegraphics[width=\linewidth]{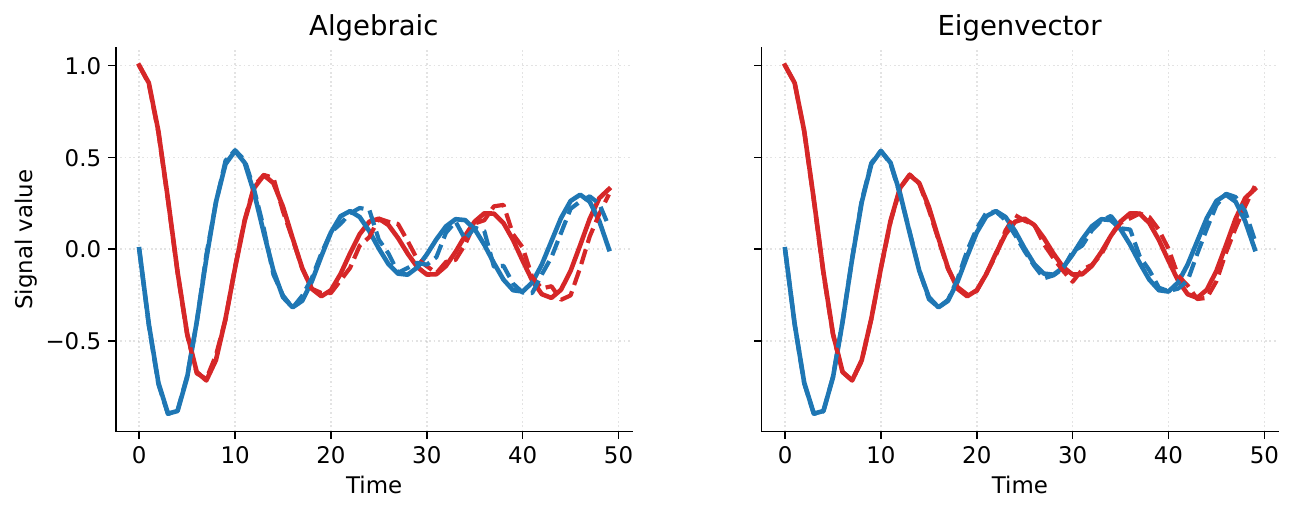}
\caption{Case with $W=1$ and $K = W+4 = 5$}
\end{subfigure}
\caption{\textbf{Fixed Shots Budget Analysis (Example Signal).} Recovery of an example signal corresponding to the $4 \times 3$ transverse-field Ising model using our block-by-block algebraic and eigenvector estimators, with the total number of measurement shots fixed to approximately one million and distributed uniformly across the measured entries of the $K$-band of the lifted matrix $Z$.}
    \label{fig:fix-shots-budget-analysis-example-signal}
\end{figure}

\clearpage
\phantomsection
\addcontentsline{toc}{section}{References}
\bibliographystyle{alpha}
\bibliography{Ref}

\end{document}